\documentclass[12pt]{article}
\usepackage[utf8]{inputenc}
\usepackage{latexsym}
\usepackage{soul}
\usepackage{amssymb,amsmath, amsthm}
\usepackage{pifont}
\usepackage{listings}
\usepackage{float} 
\usepackage{mathrsfs}
\usepackage[pdftex]{graphicx}
\usepackage{tikz}
\usetikzlibrary{intersections, calc}
\usetikzlibrary{patterns}
\usetikzlibrary{3d}
\usepackage{changepage}
\usepackage{setspace}
\usepackage{geometry}  
\usepackage{hyperref}  
\usepackage[utf8]{inputenc}
\usepackage[affil-it]{authblk} 
\usepackage{etoolbox}
\usepackage{lmodern}
\usepackage[most]{tcolorbox}
\usepackage{pst-node}
\usepackage{tikz-cd} 
\usepackage{empheq}
\usepackage{cancel}
\usepackage{ dsfont }
\usepackage{enumitem}
\usepackage{chngcntr}
\usepackage{quiver}

\usepackage{pgfplots}
\usepgfplotslibrary{groupplots}
\pgfplotsset{compat=1.18}

\allowdisplaybreaks

\usepackage{titling}

\usepackage{etoolbox}

\usepackage{natbib}
\theoremstyle{definition}
\newtheorem{example}{Example}
\newtheorem{remark}{Remark}

\theoremstyle{plain}
\newtheorem{definition}{Definition}

\newtheorem{prop}{Proposition}

\newtheorem{theorem}{Theorem}

\newtheorem{lemma}{Lemma}

\usepackage{array} 
\newcolumntype{C}[1]{>{\centering\arraybackslash}p{#1}}

\makeatletter
\DeclareRobustCommand{\varamalg}{%
  \mathbin{\mathpalette\var@malg\perp}%
}
\newcommand{\succprec}{\mathrel{\mathpalette\succ@prec{\succ\prec}}}
\newcommand{\precsucc}{\mathrel{\mathpalette\succ@prec{\prec\succ}}}

\newcommand{\succ@prec}[2]{\succ@@prec#1#2}
\newcommand{\succ@@prec}[3]{%
  \vcenter{\m@th\offinterlineskip
    \sbox\z@{$#1#3$}%
    \hbox{$#1#2$}\kern-0.4\ht\z@\box\z@
  }%
}
\newcommand\var@malg[2]{%
  \rlap{$\m@th#1#2$}\mkern6mu{#1#2}%
}
\makeatother

\newcommand{\K}[0]{\mathcal{K}}

\newcommand{\R}[0]{\mathbb{R}}

\newcommand{\1}[0]{\mathds{1}}

\newcommand{\E}[0]{\mathbb{E}}

\newcommand{\F}[0]{\mathcal{F}}

\DeclareMathOperator*{\argmax}{arg\,max}
\DeclareMathOperator*{\argmin}{arg\,min}

\newcommand{\dd}[0]{\, \mathrm{d}}

\DeclareMathOperator{\conv}{co}
\DeclareMathOperator{\co}{co}

\DeclareMathOperator{\supp}{supp}
\DeclareMathOperator*{\arginf}{arginf}

\hypersetup{hidelinks}

\title{Partially Identified Ambiguity}

\date{First Posted: February 11, 2026 \\ 
This Version: June 9, 2026 }

\author{Cheaheon Lim\thanks{Department of Economics, Harvard University.  I am indebted to Tomasz Strzalecki and Davide Viviano for detailed comments that significantly improved earlier versions of this draft. I also thank Aislinn Bohren, Xiaoyu Cheng, Tommaso Denti, Francesco Fabbri, Brian Hill, Peter Klibanoff, Shengwu Li, Yechan Park, Matthew Rabin, Elie Tamer, Vitalii Tubdenov, Andrei Savochkin, Weijie Zhong, and all seminar participants at Harvard University and RUD 2026 for their insightful comments and suggestions.}}

\begin{document}

\maketitle

\begin{center}
\textbf{Abstract}
\end{center}

\begin{adjustwidth}{1.5cm}{1.5cm}
\onehalfspacing

This paper develops a theory of learning under ambiguity induced by the decision maker’s beliefs about the collection of data correlated with the true state of the world. Within our framework, two classical results on Bayesian learning extend to the setting with ambiguity: experiments are equivalent to distributions over posterior beliefs, and Blackwell's more informative and more valuable orders coincide. When applied to the setting of robust Bayesian analysis, our results clarify the source of time inconsistency in the Gamma-minimax problem and provide an argument in favor of the conditional Gamma-minimax criterion. We also apply our results to a persuasion game to illustrate that our model provides a natural benchmark for communication under ambiguity. \\ 

\end{adjustwidth}

\noindent \textit{Keywords:} Ambiguity, multiple priors, Aumann plausibility, Blackwell order, partial identification,  robust Bayesian inference, Gamma-minimax, persuasion.

\onehalfspacing

\newpage

\section{Introduction}

The data we encounter in the world seldom allow us to make precise inferences. Blood tests come back positive without distinguishing between two  related variants of the same disease; forensic analyses reveal a suspect's involvement in a crime without indicating whether the act was intentional or accidental; 
and policy evaluations show that schooling subsidies change average outcomes without revealing what would have happened in the absence of the policy, leaving the average treatment effect indeterminate. In each case, the ``experiments" the decision maker (DM) has at her disposal are partially identified, and the data narrows down a set of plausible states without uniquely identifying the true state of the world.

This paper investigates the connection between the ambiguity a DM perceives over the state space and her beliefs about the partially identified nature of experiments. In the multiple priors framework, we refer to the resulting form of ambiguity as \emph{partially identified ambiguity}.  Our approach  closely relates to the literature on robust Bayesian inference under partial identification, which takes partially identified experiments as the primitive and “backs out” a class of priors for  Bayesian inference. In particular, \cite{Giacomini_Kitagawa_ECTA_2021} shows that performing Bayesian inference with the class of \emph{partially identified prior sets} we consider reconciles the well-known inconsistencies between frequentist and Bayesian inference under partial identification \citep[e.g.,][]{Moon_Schorfheide_ECTA_2012}.    From the perspective of axiomatic decision theory, whether a DM has beliefs compatible with partially identified ambiguity can be determined from preferences over   state-contingent outcomes.  More specifically, partially identified ambiguity is a special case of the ambiguity that arises from generalizing subjective expected utility (SEU) theory to  ``smaller" domains, the axiomatic foundations of which have been studied extensively.\footnote{An incomplete list of such axiomatic work include \cite{Epstein_Zhang_ECTA_2001, Zhang_ET_2002, Kopylov_JET_2007, Lehrer_AEJ_2012, GP_EUU}.} To the best of our knowledge, we are the first to formalize the connection between these two distinct literatures. 

We begin by offering a novel interpretation of the multiple priors model as imposing a consistency constraint on the collection of experiments the DM believes to be conceivable. Similar in spirit to the martingale property of beliefs, this restriction captures  the intuition that the DM ought not to  have been  ignorant ex ante about the data-generating process (beyond its membership in the prior set) if it were possible to distinguish between priors based on the realization of some unambiguous sample space.  At the level of posterior beliefs, we introduce \emph{Aumann plausibility} as the  generalization of Bayes plausibility to the setting of ambiguity. The condition imposes a set-valued version of the martingale property under Aumann expectations, requiring that DMs who perceive more ambiguity ex ante continue to do so, on average,  after the realization of experiments as well.  

Our first main result shows that, under prior-by-prior updating, consistent experiments induce Aumann-plausible distributions over posterior beliefs if and only if the prior set is partially identified. Recall that in the Bayesian setting with a single prior,  all experiments induce Bayes-plausible distributions over posteriors, regardless of the prior.\footnote{Our consistency condition holds vacuously with singleton prior sets.} Our result thus illustrates that the martingale property of beliefs only extends to  the set-valued setting under specific structural assumptions about the experiment and prior set. It turns out that this structure greatly simplifies the analysis of learning under ambiguity. For instance, we show that the equivalence between the \emph{more informative} and \emph{more valuable} orders of \cite{Blackwell_original, Blackwell_original_2} can be easily extended to the setting of partially identified ambiguity when the value function takes the maxmin expected utility (MEU) form of \cite{GS_MEU}.

We also contribute to the econometrics literature on  Bayesian decision-making under partial identification, where it is typically assumed that DMs with partially identified prior sets perform robust Bayesian inference à la \cite{Giacomini_Kitagawa_ECTA_2021}.\footnote{See, for instance, \cite{christensen2022optimal, fernandez2024robust, Giacomini_Kitagawa_Read_2025}.} Within this framework, two notions of optimality arise: the Gamma-minimax criterion chooses the ex ante optimal sample-contingent decision rule, and the conditional Gamma-minimax criterion chooses the optimal action conditional on the realization of the sample space. Unless the prior set is a singleton, these criteria do not generally coincide, and there is no clear  consensus on which criterion should be used \citep[e.g.,][]{kitagawa2012gamma_minimax}. Using our characterization of Aumann plausible distributions, we address this issue by showing that the time consistent ex ante criterion, termed Gamma${}^*$-minimax, retains the desirable properties of the Gamma-minimax criterion under partial identification. Because Gamma${}^*$-minimax also coincides with the conditional Gamma-minimax action, we interpret our results as presenting an argument in favor of the latter criterion.

Our second main result characterizes the conditions under which consistent experiments and Aumann-plausible distributions are fully equivalent: consistent experiments not only induce Aumann-plausible distributions, but every Aumann-plausible distribution  can also be generated by a consistent experiment. The condition, which we refer to as \emph{maximal partial identification}, requires the prior set to admit a Minkowski decomposition into {extreme} sets of priors with disjoint support.  It nests the class of prior sets used in robust Bayesian inference as a special case, where the decomposition consists of lower-dimensional simplices supported on the cells of some partition of the state space.  Our result can be interpreted as extending the ``splitting" characterization of \cite{AumannMaschler1995} to the multiple priors setting under Aumann plausibility.\footnote{\cite{public_persuasion} show that a prior-by-prior version of Bayes plausibility and an additional likelihood ratio constraint characterizes the profile of posterior beliefs that can be induced by experiments. Despite the superficial similarity, our results are quite distinct as the Aumann expectation of the convex hull of a profile of posteriors does not in general correspond to the convex hull of the profile of priors.}  

This paper also contributes to the recent literature on information design with ambiguous beliefs by examining the implications of maximal partial identification in a persuasion game. We show that both consistency and Aumann plausibility can be interpreted as a constraint on the set of strategies available to Sender. Consider, for instance, the classic example of a persuasion game in which a prosecutor (Sender) seeks to persuade a judge (Receiver) to convict a defendant. Legal and technological constraints limit the kinds of ``data" the prosecutor can generate: forensic tests can establish whether the defendant was involved in a crime, but they cannot speak to his mental state or the degree of malice, even though such factors are central to the judge’s decision.   Our theory of partially identified persuasion illustrates how such limitations influence both the beliefs and strategies of agents, yielding a more realistic description of communication under ambiguity. Moreover, the equivalence result between consistent experiments and Aumann-plausible distributions allows  us to apply the standard concavification logic of \cite{KG_bayesian_persuasion} to solve for  equilibrium.

 
\subsection{Related Literature}

A growing literature incorporates ambiguity into classical problems in economic theory. In such settings, ambiguity can either enter the model exogenously \citep[e.g.,][]{wolitzky_ambig, Kosterina_TE_22, auster_AER2024} or arise endogenously from the use of ambiguous communication devices \citep[e.g.,][]{BLL_amb_persuasion, ambig_contracts, frick_ijima_oyama}.  We provide an alternative justification for the presence of ambiguity that is grounded in the DM's beliefs about the collection of conceivable data, in connection with the econometrics literature on Bayesian inference under partial identification \citep[e.g.,][]{Kline_Tamer_QE_2016, Giacomini_Kitagawa_ECTA_2021, Giacomini_Kitagawa_Read_2025}. As such, our results contribute to the literature at the intersection of decision theory and partial identification \citep[e.g.,][]{Epstein_Seo_JET_2015, Epstein_Kaido_Seo_ECTA_2016, Denti_Pomatto_WP_2020}.  

Taken literally, our DM can be viewed as an econometrician who performs robust Bayesian inference with a set-identified model à la \cite{Kitagawa_2011_WP} and \cite{Giacomini_Kitagawa_ECTA_2021}. More generally, we model settings in which the DM holds probabilistic beliefs over a ``smaller" collection of events, generated by the collection of \emph{identified sets} over the state space. Because our theory of partially identified ambiguity is inherently dynamic, our results connect the literature on dynamic ambiguity \citep[e.g.,][]{ES_IID, ES_recursive, ES_likelihoods, Siniscalchi_TE_2011, Hill_GEB_2020} with the  literature on ambiguity with  probabilistic beliefs over a smaller domain of ``unambiguous" events \citep[e.g.,][]{Epstein_Zhang_ECTA_2001, Zhang_ET_2002, Kopylov_JET_2007, Lehrer_AEJ_2012, GP_EUU}. 


We also contribute to the robust Bayesian analysis  literature on optimal statistical decision-making under partial identification \citep[e.g.,][]{ Chamberlain_ARev_2020, christensen2022optimal, fernandez2024robust, Giacomini_Kitagawa_Read_2025}. Our theory of partially identified ambiguity   sheds light on the well-known dynamic inconsistency of the Gamma-minimax rule \citep[e.g.,][]{vidakovic2000gamma, Stoye_2012_AR}, and the Gamma${}^*$-minimax criterion we propose coincides with the conditional Gamma-minimax criterion when the DM performs robust Bayesian inference à la \cite{Giacomini_Kitagawa_ECTA_2021} with a set-identified model.

Our model of partially identified persuasion contributes to the recent literature that generalizes the persuasion game of  \cite{KG_bayesian_persuasion} to permit the presence of ambiguity \citep[e.g.,][]{public_persuasion, BLL_amb_persuasion, Kosterina_TE_22, NWppl_BP, cheng_ex_ante_BP}. To the best our knowledge, we are the first to interpret ex ante ambiguity as imposing a constraint on the collection of Sender's strategies. Moreover, Aumann plausibility restricts agents' perception of ambiguity throughout the game as a function of the amount of ambiguity they perceive ex ante. This differs significantly from the model of ambiguous persuasion developed in \cite{BLL_amb_persuasion} and subsequent work, where the probabilistic beliefs of agents are dilated via model uncertainty.


The remainder of the paper is organized as follows. Section \ref{section_model} develops our theory of partially identified ambiguity, and Section \ref{section_main_results} presents our two  main equivalence results. Sections \ref{section_value_of_information} and \ref{section_robust} apply our first characterization result to  Blackwell's theorem and the Gamma minimax problem, respectively. Section \ref{section_BP} applies our second characterization result to a persuasion game. Section \ref{section_discuss} discusses the relation between partially identified ambiguity and utility representations, and Section \ref{section_conclusion} concludes. All proofs are in the Appendix, unless stated otherwise.

\section{Model} \label{section_model} 

\subsection{Partially Identified Prior Sets}

Consider some finite state space $\Theta$ and partition $\Phi$, with $\phi(\theta)$ denoting the partition cell that each $\theta \in \Theta$ belongs to. The sets $\Delta(\Theta)$ and $\Delta(\Phi)$ correspond to the collection of probability mass functions supported on $\Theta$ and $\Phi$, respectively. The space of closed and convex subsets of $\Delta(\Theta)$ is denoted $\K(\Delta \Theta)$, and we endow it with the Hausdorff topology,  corresponding Borel $\sigma$-algebra, and the mixture operation given by Minkowski addition. The measures $p_0 \in \Delta(\Theta)$ and sets of measures $\mathcal{P}_0 \in \K(\Delta \Theta)$ are called \emph{priors} and \emph{prior sets}, respectively.  Following the partial identification literature in econometrics, we refer to  $\phi \in \Phi$ as the \emph{reduced-form parameter} and $\tau \in \Delta(\Phi)$ as the \emph{reduced-form prior}. 

The partition $\Phi$ can be interpreted as the  collection of events the DM perceives to be ``identifiable" in the state space $\Theta$, as the existence of a reduced-form prior $\tau \in \Delta(\Phi)$ implies that the DM has precise probabilistic beliefs over $\Phi$. Ambiguity thus arises whenever $\Phi$ is  coarser than the discrete partition $\{\theta\}_{\theta \in \Theta}$. Let $\Delta(\phi) \subseteq \Delta(\Theta)$ denote the collection of measures supported on $\phi$. The following definition motivates the relationship between the DM's perceived ambiguity, as captured by the prior set, and the reduced-form prior.  

\begin{definition}[Partially Identified]  
    A set $\mathcal{P} \in \K(\Delta \Theta)$ is partially identified by $(\tau, \Phi)$ if
    \begin{align}
        \mathcal{P} = \sum_{\phi \in \Phi} \tau(\phi) \mathcal{P}_\phi, \nonumber
    \end{align}
    where $\mathcal{P}_\phi \subseteq \Delta(\phi)$ and $\dim(\mathcal{P}_\phi) = \dim(\Delta (\phi))$  for every $\phi \in \Phi$.\footnote{Convexity of $\mathcal{P}$ implies that we can assume $\mathcal{P}_\phi$ is convex without loss. This is because the convex hull operation commutes with Minkowski addition, i.e.,  $\mathcal{P} = \conv(\mathcal{P}) = \conv ( \sum_\phi \tau(\phi) \mathcal{P}_\phi ) = \sum_\phi \tau(\phi) \conv(\mathcal{P}_\phi)$.}
\end{definition}

The simplex $\Delta(\Theta)$ is partially identified by the trivial partition $\Phi = \{\Theta \}$ and reduced-form prior $\tau(\Theta) = 1$. At the other extreme,  singleton prior sets $\mathcal{P}_0 = \{p_0\}$ are partially identified by the discrete partition $\Phi = \{\theta\}_{\theta \in \Theta}$ and $\tau = p_0$. We say that the prior set is \emph{trivially identified} in the former scenario; in the latter, we say it is \emph{uniquely identified}. More generally, the definition of partially identified sets imply that $\dim(\mathcal{P}_0) = |\Theta| - |\Phi|$ because each partition cell corresponds to a linear constraint on $\Delta(\Theta)$.\footnote{To be precise,  $\dim(\Delta(\Theta)) = |\Theta| - 1$, and the partition $\Phi$ corresponds to  $|\Phi| - 1$ linear constraints of the form $\sum_{\theta \in \phi}p(\theta) = \tau(\phi)$.} Note that we can assume $\supp(\tau) = \Phi$ without loss, redefining $\Theta$ as the union of the sets in $\supp(\tau)$ whenever $\supp(\tau) \neq \Phi$.

 The following familiar example considers an intermediate case where the prior set is neither trivially nor uniquely identified. 


\begin{example}[Ellsberg Urn]   \label{ex_ellsberg1}
    Suppose $\Theta = \{R,G,B\}$, with each state indexing the color of a ball drawn from an urn. The DM is told that a third of the balls are red and the rest are green and blue, in which case the corresponding prior set is  $\mathcal{P}_0 = \{ p \in \Delta(\Theta) : p(R) = \frac{1}{3},  \, p(G) + p(B) = \frac{2}{3}\}$. It is easy to see that $\mathcal{P}_0$ is partially identified by the partition $\Phi = \{ \{R\}, \{G,B\} \}$ and reduced-form prior $\tau(\{R\}) = \frac{1}{3}$,  $\tau(\{G,B\})= \frac{2}{3}$.  Observe that $\mathcal{P}_0$ is ``compatible" with $(\tau, \Phi)$  in the sense that $\sum_{\theta \in \phi}p_0(\theta) = \tau(\phi)$ for every $\phi \in \Phi$ and $p_0 \in \mathcal{P}_0$. It follows easily from definitions that this compatibility property holds in generality for prior sets that are partially identified by $(\tau, \Phi)$. 
\end{example}

The partially identified sets we have considered thus far share an additional  special feature: they are generated by setting $\mathcal{P}_\phi = \Delta(\phi)$ for all $\phi \in \Phi$. Such sets play an important role in the Bayesian econometrics literature. In particular, \cite{Giacomini_Kitagawa_ECTA_2021} show that performing robust Bayesian inference with $\mathcal{P}_0 = \sum_\phi \tau(\phi) \Delta(\phi)$ as the prior set reconciles the asymptotic disagreement between Bayesian and frequentist inference under set-identified models.  This is often referred to as the ``full ambiguity" approach to robust Bayesian inference, and $\sum_\phi \tau(\phi) \Delta(\phi)$ contains every set partially identified by $(\tau, \Phi)$ as a subset \cite[e.g.,][]{Giacomini_Kitagawa_Read_2025}. 

\begin{definition}[Full Ambiguity]
    The partially identified set $\mathcal{P} = \sum_{\phi \in \Phi} \tau(\phi) \mathcal{P}_\phi$ reflects full ambiguity if $\mathcal{P}_\phi = \Delta(\phi)$ for all $\phi \in \Phi$, such that
    \begin{align}
       \mathcal{P} & = \bigg\{ p \in \Delta(\Theta): \sum_{\theta \in \phi} p(\theta)  =  \tau(\phi) \,\, \text{ for all } \,\,  \phi \in \Phi  \bigg\}. \nonumber
\end{align}
\end{definition}

Of course, not all partially identified sets reflect full ambiguity. Partially identified sets that incorporate additional information about the distribution over $\Theta$ often admit the representation $\mathcal{P} = \sum_\phi \tau(\phi) \mathcal{P}_\phi$ with $\mathcal{P}_\phi \subsetneq \Delta(\phi)$ for some $\phi \in \Phi$. The following simple example returns to the case of the Ellsberg urn, but with additional knowledge about the relative composition of green and blue balls.

\begin{example}[Ellsberg Urn, v2]
    In addition to a third of the balls being red and the rest being blue and green, suppose the DM also knows there are more green than blue balls. The corresponding prior set is $\mathcal{P}_0 = \{ p \in \Delta(\Theta) : p(R) = \frac{1}{3},  \, p(G) + p(B) = \frac{2}{3}, \, p(G) \geq p(B) \}$, which has the decomposition $\mathcal{P}_0 = \frac{1}{3} \mathcal{P}_{ \{R\} } + \frac{2}{3} \mathcal{P}_{ \{G,B\} }$ with $\mathcal{P}_{\{R\}} = \Delta(\{ R \})$ and $\mathcal{P}_{ \{G,B\} } =  \{  p \in \Delta(\{ G,B\} ) : p(G) \geq p(B) \} \subsetneq \Delta(\{ G,B\})$. 
\end{example}

An extensive literature in axiomatic decision theory studies preferences that are consistent with partially identified beliefs \citep[e.g.,][]{Epstein_Zhang_ECTA_2001, Zhang_ET_2002,  Kopylov_JET_2007, Lehrer_AEJ_2012, GP_EUU}.  The main focus of these papers is to identify a collection of events that are subjectively ``unambiguous" to each DM based on preferences. Prior sets partially identified by $(\tau, \Phi)$ thus correspond to the special case where the $\sigma$-algebra generated by $\Phi$ is the set of unambiguous events, capturing the complete lack of quantifiable knowledge about the data-generating process over $\Theta$ beyond what is encoded in the reduced-form prior. We provide a more comprehensive discussion of the connection between this literature and partially identified ambiguity in Section \ref{section_discuss}.




\subsection{Consistent Experiments}

For some finite sample space $Y$, let $\Delta(Y)$ denote the collection of probability mass functions supported on $Y$. A Blackwell experiment, i.e., likelihood function, is a  mapping $\pi: \Theta \to \Delta(Y)$ that specifies the distribution over $Y$ conditional on each state $\theta \in \Theta$. We assume that a DM with the prior set $\mathcal{P}_0$ updates her beliefs prior-by-prior according to Bayes rule after observing the realization of $Y$, such that each outcome $y \in Y$ induces a posterior set $\mathcal{P}_y \in \mathcal{K}(\Delta \Theta)$. An experiment $\pi$ is \emph{non-trivial} if $\pi(\cdot|\theta)$ is non-constant in $\theta$.

Under the multiple priors assumption, ambiguity corresponds to the absence of quantifiable knowledge about the data-generating process over $\Theta$ beyond its membership in the prior set. For this reason, we argue that experiments which enable the DM to distinguish between priors in $\mathcal{P}_0$ are \emph{inconsistent} with the multiple priors model.  As in the martingale property of beliefs, the DM ought not to have been completely ignorant ex ante if it were possible to distinguish between priors  based on the realization of some unambiguous sample space.  This intuition motivates the following definition of consistent experiments, which holds vacuously for singleton $\mathcal{P}$.


\begin{definition}[Consistency] \label{def_cons_ambig2}
    An experiment  $\pi: \Theta \to \Delta(Y)$ is consistent with  $\mathcal{P} \in \K(\Delta \Theta)$ if for every $p, p' \in \mathcal{P}$,
    \begin{align}
    \sum_\theta \pi( \cdot | \theta)   p(\theta) & =   \sum_\theta \pi( \cdot | \theta)  p'(\theta).\nonumber 
    \end{align} 
\end{definition}

 If an experiment is consistent with  the prior set $\mathcal{P}_0$, every prior $p_0 \in \mathcal{P}_0$ induces the same prior predictive, i.e., marginal, distribution over the sample space $Y$. It is then without loss to define  $\mu(\cdot) \equiv   \sum_\theta \pi(\cdot| \theta)   p_0(\theta) \in \Delta(Y)$ using any $p_0 \in \mathcal{P}_0$ and assume that it has full support. Note that $\mu$ can equivalently be interpreted as a measure over the collection of posterior sets $\{ \mathcal{P}_y\}_{y \in Y} \subseteq \K(\Delta \Theta)$. For this reason, we refer to $\mu$ as the \emph{information structure} induced by the experiment $\pi$. We say $\mu$ is \emph{precise} if it is supported on singleton posterior sets, and \emph{coarse} otherwise.

\begin{remark}[Consistency and Second-Order Beliefs] \label{remark_consistency_SO}
    To understand the connection between consistency and the implicit assumptions underlying the multiple priors model, suppose instead that the DM has second order beliefs $F \in \Delta(\mathcal{P}_0)$ as in the smooth model of \cite{KMM_smooth}.\footnote{$\Delta(\mathcal{P}_0)$ denotes the collection of Borel measures over $\Delta(\Theta)$ with support contained in  $\mathcal{P}_0$.} Since $F$ specifies the likelihood the DM assigns to each prior in  $\mathcal{P}_0$, the assumptions of the smooth model contrasts starkly with that of the multiple priors model.   Conditional on the realization of the sample space corresponding to some consistent experiment $\pi$, the DM's second order beliefs are given by
    \begin{align}
        F(p_0| y) & = \frac{F (p_0 )   \sum_\theta \pi(y | \theta) p_0(\theta)}{\int_{\Delta(\Delta \Theta)}   \sum_{\theta'} \pi(y |\theta')  p_0'(\theta') \dd F (p_0' )}  =\frac{F(p_0) \mu(y)}{\mu(y)}   =F(p_0), \nonumber
    \end{align}
    where the first and second equalities follow from Bayes rule and consistency of $\pi$, respectively. As a result, there is de facto \emph{no learning}, and consistent experiments do not allow the DM to distinguish between different priors in $\mathcal{P}_0$. Although this may seem like an extreme assumption in the presence of second order beliefs, it quite intuitively captures the lack of quantifiable knowledge about the different priors in $\mathcal{P}_0$ reflected in  the multiple priors model.\footnote{\cite{Hill_GEB_2020} argues that decisions made under ambiguity must be assessed relative to the DM's \emph{subjective} decision tree, which captures the contingencies (i.e., types of data) the DM envisages. In this context, our consistency condition can  be interpreted as placing structure on the collection of admissible subjective trees as a function of the DM's prior ambiguity.}
\end{remark}

Observe that every trivial experiment $\pi(\cdot|\theta)$ constant in $\theta$ is consistent with any $\mathcal{P} \in \K(\Delta \Theta)$.  The following result shows that the collection of non-trivial consistent experiments is non-empty if and only if the prior set has strictly lower dimension than the probability simplex.

\begin{lemma} \label{lemma_existence_consistent_exp}
   Suppose $\mathcal{P} \in \K(\Delta \Theta)$. There exists a non-trivial experiment consistent with $\mathcal{P}$ if and only if $\dim(\mathcal{P}) < \dim(\Delta \Theta)$. 
\end{lemma}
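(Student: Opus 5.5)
The plan is to recast consistency as a linear-algebraic condition on the vectors $\pi(y|\cdot) \in \R^\Theta$. Let $L = \operatorname{span}\{p - p' : p, p' \in \mathcal{P}\}$ be the linear subspace parallel to the affine hull of $\mathcal{P}$. Then $\dim(L) = \dim(\mathcal{P})$ and $L \subseteq H := \{v \in \R^\Theta : \sum_\theta v(\theta) = 0\}$, where $\dim(H) = |\Theta| - 1 = \dim(\Delta\Theta)$. By Definition \ref{def_cons_ambig2}, $\pi$ is consistent with $\mathcal{P}$ if and only if, for every $y \in Y$, the function $f_y := \pi(y|\cdot)$ satisfies $\langle f_y, v\rangle = 0$ for all $v \in L$; that is, $f_y \in L^\perp$. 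Moreover, $\pi$ is trivial precisely when every $f_y$ is a multiple of the constant vector $\mathbf{1}$. Note that $\mathbf{1} \in L^\perp$ always, since $L \subseteq H = \mathbf{1}^\perp$.

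For the ``only if'' direction, suppose $\dim(\mathcal{P}) = \dim(\Delta\Theta)$. Then $L$ is a subspace of $H$ of the same dimension, so $L = H$ and $L^\perp = H^\perp = \operatorname{span}\{\mathbf{1}\}$. Consequently every $f_y$ is constant in $\theta$, and any consistent $\pi$ is trivial.

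For the ``if'' direction, suppose $\dim(\mathcal{P}) < |\Theta| - 1$. Then $\dim(L^\perp) = |\Theta| - \dim(L) \geq 2$, so we can pick $g \in L^\perp$ that is not a multiple of $\mathbf{1}$. The construction takes $Y = \{y_1, y_2\}$ and sets
\begin{align}
\pi(y_1|\theta) = \tfrac12 + \varepsilon g(\theta), \qquad \pi(y_2|\theta) = \tfrac12 - \varepsilon g(\theta), \nonumber
\end{align}
with $\varepsilon > 0$ small enough that both entries lie in $[0,1]$. Each row sums to one. Both $f_{y_1}$ and $f_{y_2}$ lie in $L^\perp$, since $\mathbf{1}, g \in L^\perp$, so $\pi$ is consistent. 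Finally, $\pi$ is non-trivial because $g$ is non-constant.

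The only step needing care is the claim that $\dim(L) = \dim(\mathcal{P})$ and that this equals the dimension of its affine hull. This holds by the standard definition of the dimension of a convex set. Also, the case $\mathcal{P} = \emptyset$ is excluded, or is trivially fine if dimension conventions are taken as given. I do not expect a genuine obstacle; the proof is essentially a dimension count.
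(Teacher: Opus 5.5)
Your proof is correct and is essentially the paper's argument. The paper also sets $D = \operatorname{span}\{p - p' : p, p' \in \mathcal{P}\}$, notes that consistency means each $\pi(y|\cdot)$ lies in $D^\perp$, which always contains $\mathbf{1}$, and uses the same two-signal construction $\tfrac12 \pm \epsilon e$ for sufficiency. The only cosmetic difference is that you prove necessity by contrapositive ($L = H$ forces $L^\perp = \operatorname{span}\{\mathbf{1}\}$), whereas the paper counts dimensions directly from $\dim(D^\perp) \geq 2$.
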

\begin{proof}
    See Appendix \ref{Appendix_proof_lemma_existence_consistent}.
\end{proof}

Consistency also has an intuitive interpretation in the context of partial identification. The following lemma illustrates that for partially identified prior sets, consistency  coincides precisely with the classical notion of set-identification in econometrics: the likelihood $\pi(\cdot|\theta)$ being flat over cells of the partition $\Phi$.\footnote{If we drop the $\dim(\mathcal{P}_\phi) = \dim(\Delta(\phi))$ condition in the definition of partially identified sets, the consistency of $\pi$ is a necessary, but insufficient condition for $\Phi$-measurability.}  To avoid confusion with our definition of partially identified prior sets, we say that $\pi$ is \emph{$\Phi$-measurable} when this property holds.

\begin{lemma} \label{lemma_consistent_PI}
     Suppose $\mathcal{P}  \in  \K(\Delta \Theta)$. If $\mathcal{P}$ is partially identified by  $(\tau, \Phi)$,  the following are equivalent:
      \begin{enumerate}
          \item $\pi: \Theta \to \Delta(Y)$ is consistent with $\mathcal{P}$; 
          \item $\pi: \Theta \to \Delta(Y)$ is $\Phi$-measurable.
      \end{enumerate}
\end{lemma}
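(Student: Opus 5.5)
Both directions reduce to statements about directions within each partition cell. The set $\mathcal{P}$ is $\sum_\phi \tau(\phi)\mathcal{P}_\phi$ with $\mathcal{P}_\phi\subseteq\Delta(\phi)$ of full dimension $|\phi|-1$. Fix any $y\in Y$ and write $f(\theta)=\pi(y|\theta)$, so that consistency says the linear functional $p\mapsto\sum_\theta f(\theta)p(\theta)$ is constant on $\mathcal{P}$.

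\emph{(2)$\Rightarrow$(1).} This direction is a direct computation. If $\pi$ is $\Phi$-measurable, write $f(\theta)=g(\phi(\theta))$. By the compatibility property noted after Example~\ref{ex_ellsberg1}, every $p\in\mathcal{P}$ satisfies $\sum_{\theta\in\phi}p(\theta)=\tau(\phi)$. Hence
\[
\sum_\theta f(\theta)p(\theta)=\sum_{\phi}g(\phi)\tau(\phi),
\]
which is independent of $p$. This holds for each $y$, so $\pi$ is consistent.

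\emph{(1)$\Rightarrow$(2).} Assume consistency and fix a cell $\phi$. Also fix points $q_{\phi'}\in\mathcal{P}_{\phi'}$ for every $\phi'\neq\phi$. For any $q,q'\in\mathcal{P}_\phi$, the two measures $\tau(\phi)q+\sum_{\phi'\neq\phi}\tau(\phi')q_{\phi'}$ and $\tau(\phi)q'+\sum_{\phi'\neq\phi}\tau(\phi')q_{\phi'}$ both lie in $\mathcal{P}$. Consistency therefore gives
\[
\tau(\phi)\sum_{\theta\in\phi}f(\theta)\bigl(q(\theta)-q'(\theta)\bigr)=0.
\]
Since $\supp(\tau)=\Phi$ without loss, we have $\tau(\phi)>0$. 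So $f$ restricted to $\phi$ is orthogonal to every difference $q-q'$ with $q,q'\in\mathcal{P}_\phi$.

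The key step, and the only one using the dimension condition, is to show that these differences span the whole hyperplane $H_\phi=\{v\in\R^{\phi}:\sum_{\theta\in\phi}v(\theta)=0\}$. The condition $\dim(\mathcal{P}_\phi)=\dim\Delta(\phi)=|\phi|-1$ means the affine hull of $\mathcal{P}_\phi$ equals the affine hull of $\Delta(\phi)$. The differences $q-q'$ then span the direction space of that affine hull, which is $H_\phi$. Consequently $f|_\phi$ is orthogonal to $H_\phi$. Since $H_\phi^\perp$ within $\R^\phi$ is spanned by the all-ones vector, $f|_\phi$ is constant. When $|\phi|=1$ there is nothing to prove.

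Applying this argument to every cell and every $y$ shows that $\pi(y|\cdot)$ is constant on each cell, so $\pi$ is $\Phi$-measurable. I expect no real obstacle beyond stating the affine-hull/spanning fact carefully. Footnote-level remarks show why the dimension hypothesis is needed: if $\mathcal{P}_\phi$ were lower-dimensional, its differences would span only a proper subspace of $H_\phi$, and a non-constant $f|_\phi$ could still annihilate them.
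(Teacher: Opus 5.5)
Your proof is correct and follows essentially the same route as the paper. The paper proves (2)$\Rightarrow$(1) from the compatibility identity $\sum_{\theta\in\phi}p(\theta)=\tau(\phi)$ (its Lemma A.1). For (1)$\Rightarrow$(2) it varies only the $\phi$-component of the Minkowski decomposition, divides by $\tau(\phi)>0$, and uses $\dim(\mathcal{P}_\phi)=\dim(\Delta(\phi))$ to write each $\mathbf{1}_{\{\cdot=\theta'\}}-\mathbf{1}_{\{\cdot=\theta''\}}$ as a combination of differences $q-q'$, which is exactly your affine-hull/$H_\phi$ spanning step.
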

\begin{proof}
   See Appendix \ref{Appendix_proof_PI}.  
\end{proof}

We can then interpret partially identified prior sets as encoding the DM's beliefs about the type of data, i.e., Blackwell experiments, she believes to be conceivable. Relative to the full ambiguity benchmark, if the DM's prior set reflects additional knowledge about the data-generating process while remaining compatible with the corresponding reduced-form parameters, then this additional information is independent of her beliefs about the collection of conceivable experiments. 

Under complete prior  ambiguity $\mathcal{P}_0 = \Delta(\Theta)$, only trivial experiments are consistent with $\mathcal{P}_0$, and the  DM does not expect to learn anything beyond the information contained in the trivial partition. If $\mathcal{P}_0$ is a singleton, all experiments are consistent, and the DM believes that she can learn the true state $\theta \in \Theta$ given sufficiently informative experimentation over the sample space. In the previous Ellsberg urn examples, an experiment is consistent with the prior set if and only if $\pi(\cdot | G) = \pi(\cdot |B)$, and realizations of the sample space allow the DM to learn about the relative likelihoods of the reduced-form parameters $\{R\}$ and $\{G, B\}$. 

\begin{remark}[Ambiguity from Partial Identification] \label{remark_experiment_first}
Our analysis so far has treated the prior set $\mathcal{P}_0$ as the fundamental object of interest,  and explored the restriction it poses on the class of Blackwell experiments the DM perceives to be conceivable. We can, however, proceed in the opposite order by taking the experiment as the primitive and considering the prior set consistent with it. In the econometrics literature, this is the approach implicitly taken by \cite{Kitagawa_2011_WP} and \cite{Giacomini_Kitagawa_ECTA_2021} to obtain partially identified prior sets that reflect full ambiguity. Under this interpretation, ambiguity over the state space  arises from constraints on the type of data the econometrician has access to.
\end{remark}

\subsection{Updating Beliefs} \label{subsection_updating}

In this section, we consider the implications of partial identification and consistency  in the context of learning under ambiguity. To illustrate the key ideas, we first revisit the case of the Ellsberg urn from Example \ref{ex_ellsberg1}. 

\begin{example}[Ellsberg Urn, Continued]  \label{example_lemma3}
   The DM's prior set $\mathcal{P}_0 = \{ p \in \Delta(\Theta) : p(R) = \frac{1}{3},  \, p(G) + p(B) = \frac{2}{3}\}$ is partially identified by the partition $\Phi = \{ \{R\}, \{G,B\} \}$ and reduced-form prior $\tau_0(\{R\}) = \frac{1}{3}$, $\tau_0(\{G,B\}) = \frac{2}{3}$. Let $Y = \{a,b\}$, and consider the experiment $\pi : \Theta \to \Delta(Y)$, 
    \begin{align}
        \pi(a | \theta) =  \begin{cases}
            \frac{1}{2} & \text{if } \theta \in \{R\} \\
            \frac{3}{4} & \text{if }  \theta \in \{G,B\}. 
        \end{cases} \nonumber
    \end{align}
     Lemma \ref{lemma_consistent_PI} implies that $\pi$ is consistent with $\mathcal{P}_0$, and it induces the marginal $\mu(a) = \frac{2}{3}$, $\mu(b) = \frac{1}{3}$.  Under prior-by-prior updating, the posterior set given $a$ is
    \begin{align}
        \mathcal{P}_{a} &  = \left\{ p \in \Delta(\Theta) : p(R) = \frac{1}{4}, \,\,  p(G) + p(B) = \frac{3}{4}  \right\}. \nonumber  
    \end{align}
    The key observation is that $\mathcal{P}_{a}$ is partially identified by $(\tau_{a}, \Phi)$, where $\tau_{a}$ is the Bayesian update of $\tau_0$ given $a$, such that $\tau_{a}(\{R\}) = \frac{1}{4}$ and $\tau_{a}(\{G,B\}) = \frac{3}{4}$. 
    
 
\end{example}

In the preceding example, we see that the prior-by-prior update of a set partially identified by $(\tau, \Phi)$ given the realization of $y \in Y$ can  be obtained equivalently by considering the set partially identified by the \emph{reduced-form posterior} $\tau_y$ over the same partition.  The following lemma establishes that this property holds in generality with consistent experiments when the  prior set is partially identified.

 \begin{lemma} \label{lemma_PI_updating_equivalence}
  Suppose $\mathcal{P}_0 \in \K(\Delta \Theta)$ is partially identified by $(\tau_0, \Phi)$, and let  $\mathcal{P}_y$ denote the posterior set obtained from the experiment $\pi: \Theta \to \Delta(Y)$ given $y \in Y$.  If  $\pi$ is consistent with $\mathcal{P}_0$, then
    \begin{align}
        \mathcal{P}_y = \sum_{\phi \in \Phi} \tau_y(\phi) \mathcal{P}_\phi \nonumber 
    \end{align}
    for every $y \in Y$, where $\tau_y$ is the Bayesian update of $\tau_0$ given $y$. 
 \end{lemma}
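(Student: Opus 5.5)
By Lemma \ref{lemma_consistent_PI}, consistency of $\pi$ with $\mathcal{P}_0$ means $\pi$ is $\Phi$-measurable. So we may write $\pi(y|\theta) = \pi(y|\phi(\theta))$ for a likelihood $\pi(\cdot|\phi)$ defined on cells. The plan is to compute the prior-by-prior Bayesian update of a generic element of $\mathcal{P}_0$ explicitly, and to show that updating acts only on the cell weights while leaving the within-cell components untouched. This gives $\mathcal{P}_y \subseteq \sum_\phi \tau_y(\phi)\mathcal{P}_\phi$. The reverse inclusion then follows by running the same map backwards.

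\textbf{Forward inclusion.} Take any $p_0 \in \mathcal{P}_0$. By the definition of partial identification, $p_0 = \sum_\phi \tau_0(\phi) q_\phi$ with $q_\phi \in \mathcal{P}_\phi \subseteq \Delta(\phi)$. The marginal is then
\[
\mu(y) = \sum_\theta \pi(y|\theta)p_0(\theta) = \sum_\phi \pi(y|\phi)\tau_0(\phi),
\]
and it is the same for every $p_0$, as consistency requires; assume it is positive. For $\theta \in \phi$, Bayes' rule gives
\[
p_y(\theta) = \frac{\pi(y|\phi)\tau_0(\phi)q_\phi(\theta)}{\mu(y)} = \tau_y(\phi)q_\phi(\theta),
\]
where $\tau_y(\phi) = \pi(y|\phi)\tau_0(\phi)/\mu(y)$ is exactly the Bayesian update of $\tau_0$. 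The supports of the $q_\phi$ are disjoint, so this identity holds pointwise on $\Theta$, and hence $p_y = \sum_\phi \tau_y(\phi)q_\phi \in \sum_\phi \tau_y(\phi)\mathcal{P}_\phi$.

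\textbf{Reverse inclusion.} Take any $\sum_\phi \tau_y(\phi)q_\phi$ with $q_\phi \in \mathcal{P}_\phi$. Form the prior $p_0 = \sum_\phi \tau_0(\phi)q_\phi \in \mathcal{P}_0$; its update is exactly this element by the computation above, so it lies in $\mathcal{P}_y$. One subtlety is cells with $\tau_y(\phi)=0$, which happens when $\pi(y|\phi)=0$. There the Minkowski term $0\cdot\mathcal{P}_\phi = \{0\}$, and the identity still holds because any choice of $q_\phi$ for such a cell contributes nothing.

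\textbf{Main obstacle.} The one point needing care is that the decomposition $p_0 = \sum_\phi \tau_0(\phi)q_\phi$ is available, and unique, for every $p_0 \in \mathcal{P}_0$. Existence follows from the definition of Minkowski sums. Uniqueness holds because the cells are disjoint and $\supp(\tau_0)=\Phi$, so each $q_\phi$ is recovered as the restriction $p_0|_\phi/\tau_0(\phi)$. Uniqueness is not actually needed, since existence suffices for both inclusions. Finally, no closure or convexity issues arise, because the map $p_0 \mapsto p_y$ is a bijection between the two Minkowski sums, induced by the component-wise identity on the $q_\phi$'s.
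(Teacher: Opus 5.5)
Your proposal is correct and follows the paper's proof essentially step for step. You use Lemma~\ref{lemma_consistent_PI} to get $\Phi$-measurability, decompose $p_0=\sum_\phi \tau_0(\phi)q_\phi$ cellwise, and observe that Bayes' rule only reweights the cells to $\tau_y$; for the reverse inclusion you run the same computation backwards. One small inaccuracy: when some cell with $|\phi|\ge 2$ has $\tau_y(\phi)=0$, the map $p_0\mapsto p_y$ is surjective but not injective, so it is not a bijection, though nothing in your argument relies on that remark.
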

\begin{proof}
    See Appendix \ref{Appendix_proof_PI_updating}. 
\end{proof}

For any $\phi \in \Phi$, let $p_0(\cdot | \phi) \equiv p_0(\cdot) \1_{ \{ \, \cdot \, \in \phi\} } / \tau_0(\phi)$ denote the conditional distribution over $\Theta$ given $\phi$, such that $p_0 (\theta) = p_0(\theta|\phi) \tau_0(\phi)$ for every $\theta \in \phi$ and  $p_0 \in \mathcal{P}_0$. An important consequence of the previous result is that the priors in $\mathcal{P}_0$ are only updated through the reduced-form beliefs. More precisely, let $p_y(\cdot)$ be the posterior and $\tau_y(\cdot)$ the reduced-form posterior corresponding to $p_0(\cdot)$ and $\tau_0(\cdot)$, respectively. The posterior distribution over $\Theta$ conditional on  $\phi$ is
\begin{align}
   p_y(\cdot | \phi) \equiv  \frac{ p_y(\cdot)}{\tau_y(\phi)} \1_{ \{ \, \cdot \, \in  \phi\} } = \frac{\pi(y|\phi ) p_0(\cdot)  / \mu(y)}{ \pi(y|\phi) \tau_0(\phi)  / \mu(y)} \1_{ \{ \, \cdot \, \in  \phi\} }  = p_0(\cdot | \phi), \nonumber 
\end{align}
which implies that $p_y(\theta) = p_0(\theta|\phi(\theta)) \tau_y(\phi(\theta))$, i.e.,  realizations of  consistent experiments only affect beliefs through the reduced-form prior.\footnote{For this reason, \cite{Giacomini_Kitagawa_ECTA_2021}  refer to the reduced-form prior $\tau_0$ as the \emph{revisable prior knowledge} and the collection of conditional priors $\{p_0(\cdot | \phi) \}_{\phi \in \Phi}$ as the \emph{unrevisable prior knowledge} in the context of full ambiguity sets that are partially identified. Lemma \ref{lemma_PI_updating_equivalence} slightly extends this observation to all partially identified prior sets.}

\section{Main Equivalence Characterizations} \label{section_main_results}

\subsection{Aumann Plausibility} \label{section_info_structure}

In the Bayesian setting, the equivalence between experiments and information structures is well-known: given some prior, experiments induce Bayes-plausible information structures over posteriors  under Bayesian updating, and any Bayes-plausible information structure can be induced by an experiment over a sufficiently rich sample space. This section presents a generalization of this equivalence result to the setting of ambiguity.

We first introduce Aumann plausibility as  the set-valued generalization of Bayes plausibility. It naturally extends the intuition of the martingale property of beliefs to the multiple priors model, requiring the expected ``amount" of ambiguity after the realization of the experiment to coincide with the  amount of ambiguity present ex ante.

\begin{definition}[Aumann Plausibility] \label{def_cons_ambig}
    An information structure $\mu \in \Delta(\K (\Delta \Theta) )$ supported on  $\{ \mathcal{P}_y\}_{y \in Y}$ is Aumann-plausible with respect to the prior set $\mathcal{P}_0 \in \K(\Delta \Theta)$ if
    \begin{align}
        \mathcal{P}_0 = \sum_{y \in Y} \mu(y) \mathcal{P}_y. \nonumber 
    \end{align}
\end{definition}

In other words, $\mu$ is Aumann-plausible with respect to $\mathcal{P}_0$ if and only if the $\mu$-Aumann expectation of the posterior set $\mathcal{P}_Y$ equals $\mathcal{P}_0$.   The following remark presents another interpretation of Aumann-plausibility in the context of rectangularity à la \cite{ES_recursive} and the timing of resolution of uncertainty. 

\begin{remark}[Rectangularity] \label{remark_rectangularity}
Given some prior set $\mathcal{P}_0 \in \K(\Delta \Theta)$ and consistent experiment $\pi: \Theta \to \Delta(Y)$, the following two sets of beliefs over the extended domain $\Theta \times Y$ appear natural:
\begin{align}
    \mathcal{P}_0 \otimes \pi  & \equiv \{ \gamma \in \Delta(\Theta \times Y) : \forall y \in Y, \,\,\,  \gamma(\cdot ,y ) =  p_0(\cdot) \pi(y|\cdot) \,\, \text{ for some }  \,\, p_0 \in \mathcal{P}_0  \} \nonumber   \\ 
   \mathcal{P}_Y \otimes \mu &  \equiv \{ \gamma \in \Delta(\Theta \times Y) : \forall y \in Y, \,\,\,  \gamma(\cdot ,y ) = p_y(\cdot) \mu(y)  \,\,  \,\,\, \text{ for some }  \,\, p_y \in \mathcal{P}_y  \}. \nonumber  
\end{align}
Suppose that a DM makes decisions over $\Theta \times Y$-contingent outcomes according to the maxmin criterion of \cite{GS_MEU} before and after the realization of the sample space $Y$. One can readily verify that the set $\mathcal{P}_0 \otimes \pi$ is rectangular with respect to the partition $\{ \{\theta\} \times Y\}_{\theta \in \Theta}$, while the set $\mathcal{P}_Y \otimes \mu$ is rectangular with respect to the partition $\{\Theta \times \{y\}\}_{y \in Y}$. Assuming that $\mathcal{P}_0$ is non-singleton, this implies that the DM's preferences are dynamically consistent if and only if her ex ante beliefs over the extended domain is described by the latter set $\mathcal{P}_y \otimes \mu$.\footnote{As shown by \cite{Ellis_GEB_2018}, only singleton sets are rectangular with respect to both $\{ \{\theta\} \times Y\}_{\theta \in \Theta}$ and $\{\Theta \times \{y\}\}_{y \in Y}$.}

For any $\Gamma \subseteq \Delta( \Theta \times Y)$, write $\sum_y  \Gamma$  and $\sum_\theta \Gamma$ to denote the collection of measures over $\Theta$ and $Y$, respectively, obtained from marginalizing out one of the dimensions in  $\Theta \times Y$. It follows directly from definitions that $\mathcal{P}_0 \otimes \pi$ satisfies $\sum_\theta \mathcal{P}_0 \otimes \pi = \{\mu\}$ and  $\sum_y \mathcal{P}_0 \otimes \pi = \mathcal{P}_0$. Although $\mathcal{P}_Y \otimes \mu$ also satisfies the former property, the marginalization about $Y$ yields $\sum_y \mathcal{P}_Y \otimes \mu \supseteq \mathcal{P}_0$, and the latter equality only holds under Aumann-plausibility because
\begin{align}
    \sum_y \mathcal{P}_Y \otimes \mu  &  \equiv \left\{   p(\cdot) = \sum_y \mu(y) p_y(\cdot)  :   p_y \in \mathcal{P}_y \text{ for all } y \in Y \right\}   = \sum_y \mu(y) \mathcal{P}_y. \nonumber 
\end{align}
We can thus interpret Aumann plausibility as ensuring that the ex ante and average interim valuation of $\Theta$-measurable outcomes are the same for time-consistent DMs.\footnote{Section \ref{section_robust} contains a more detailed discussion of this interpretation   in the context of statistical decision theory. Proposition \ref{prop_properties} in Section \ref{section_BP} illustrates how Aumann plausibility can also be interpreted as a participation constraint on Sender's problem in a persuasion game.}
\end{remark}

 When the prior set is a singleton, Aumann plausibility reduces to Bayes plausibility, and all Aumann-plausible information structures must be supported on singleton posterior sets. Once we consider non-singleton prior sets,  neither direction of the equivalence result in the Bayesian setting holds when Bayes-plausible, precise information structures are generalized to Aumann-plausible, coarse information structures. The subsequent sections show that restricting our attention to consistent experiments and partially identified prior sets restores this equivalence.

\subsection{Consistent Experiments to Aumann-Plausible Information Structures}

Our first characterization result proves that a prior set $\mathcal{P}_0$ is partially identified \emph{if and only if} every experiment consistent with $\mathcal{P}_0$ induces an Aumann-plausible information structure.  We emphasize that, taken individually, neither the consistency of experiments nor the Aumann plausibility of information structures relate to the partially identified nature of the underlying prior set. The connection  arises only when the two  properties are considered jointly under prior-by-prior updating.  

Because any full-dimensional prior set is partially identified by  the trivial partition $\Phi = \{\Theta\}$ and corresponding reduced-form prior, we restrict our attention to prior sets partially identified by some non-trivial $(\tau, \Phi)$. In the special case where the prior set $\mathcal{P}_0 = \{p_0\}$ is a singleton, $\mathcal{P}_0$ is partially identified by  $(p_0, \{\theta\}_{\theta \in \Theta})$ and our result reduces to the standard fact that experiments induce Bayes-plausible distributions over posteriors under Bayesian updating.

\begin{theorem} \label{thm_iff_first}
    For any $\mathcal{P}_0 \in \K(\Delta \Theta)$, the following are equivalent:
    \begin{enumerate}
        \item $\mathcal{P}_0$ is partially identified by some non-trivial $(\tau, \Phi)$;
        \item Any experiment $\pi: \Theta \to \Delta(Y)$ consistent with $\mathcal{P}_0$ induces an Aumann-plausible information structure, and there exists at least one such non-trivial experiment.
    \end{enumerate}
 \end{theorem}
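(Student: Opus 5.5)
The plan is to prove (1)$\Rightarrow$(2) by combining Lemmas \ref{lemma_consistent_PI} and \ref{lemma_PI_updating_equivalence} with Bayes plausibility of reduced-form beliefs. The converse (2)$\Rightarrow$(1) is a linear-algebraic reconstruction of $\Phi$ from the space of consistent likelihoods; this is where the real work lies.

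\emph{(1)$\Rightarrow$(2).} If $\Phi\neq\{\Theta\}$, then $\dim(\mathcal{P}_0)=|\Theta|-|\Phi|<|\Theta|-1$, so Lemma \ref{lemma_existence_consistent_exp} gives a non-trivial consistent experiment; directly, any non-constant $\Phi$-measurable $\pi$ works. Take any consistent $\pi$. Lemma \ref{lemma_consistent_PI} makes it $\Phi$-measurable, and Lemma \ref{lemma_PI_updating_equivalence} gives $\mathcal{P}_y=\sum_\phi\tau_y(\phi)\mathcal{P}_\phi$. Each $\mathcal{P}_\phi$ may be taken convex by the footnote to the definition, so $aA+bA=(a+b)A$ for $a,b\ge 0$. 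Hence
\begin{align}
\sum_y\mu(y)\mathcal{P}_y=\sum_\phi\Big(\sum_y\mu(y)\tau_y(\phi)\Big)\mathcal{P}_\phi=\sum_\phi\tau_0(\phi)\mathcal{P}_\phi=\mathcal{P}_0, \nonumber
\end{align}
where the middle equality is Bayes plausibility of $\{\tau_y\}$.

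\emph{(2)$\Rightarrow$(1).} Let $D=\mathrm{span}\{p-p':p,p'\in\mathcal{P}_0\}\subseteq\R^\Theta$. Then $\pi$ is consistent if and only if every row $\pi(y|\cdot)$ lies in $D^\perp$, and $D^\perp$ contains the constants. Existence of a non-trivial consistent experiment means $D^\perp$ contains a non-constant $f$. I would test Aumann plausibility on binary experiments $\pi(a|\theta)=g(\theta)$ with $g=\tfrac12+\varepsilon f$, $f\in D^\perp$ and $\varepsilon$ small. Under prior-by-prior updating,
\begin{align}
\mu(a)\mathcal{P}_a=\{p\,g:p\in\mathcal{P}_0\}, \qquad \mu(b)\mathcal{P}_b=\{p\,(1-g):p\in\mathcal{P}_0\}. \nonumber
\end{align}
Aumann plausibility therefore says that for all $p,p'\in\mathcal{P}_0$ the mixture $pg+p'(1-g)$ lies in $\mathcal{P}_0$. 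I will call this \emph{splitting along $g$}.

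\emph{Step 1: the partition.} Splitting gives $(p'-p)(1-g)\in D$, so $D$ is closed under pointwise multiplication by every element of $D^\perp$, since these are spanned by such $1-g$ and constants. Consequently, for $h,f\in D^\perp$ and $d\in D$ we have $\langle hf,d\rangle=\langle h,fd\rangle=0$. So $D^\perp$ is a unital subalgebra of $\R^\Theta$, and hence equals the $\Phi$-measurable functions, where $\Phi$ is its atom partition. This $\Phi$ is non-trivial because $f$ is non-constant. Then $D=(D^\perp)^\perp$ is exactly the set of vectors with zero sum on every cell. This yields two facts: every $p\in\mathcal{P}_0$ has the same cell masses $\tau(\phi)$, and the directions of $\mathcal{P}_0$ restricted to each cell are full-dimensional.

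\emph{Step 2 (main obstacle): rectangularity.} I must show $p\1_\phi+p'(1-\1_\phi)\in\mathcal{P}_0$ for every cell $\phi$ and all $p,p'\in\mathcal{P}_0$. Let $G$ be the set of $g$ along which $\mathcal{P}_0$ splits. Composing splits, i.e.\ splitting $p$ against the split point $pg_2+p'(1-g_2)$ along $g_1$, shows that $g_1,g_2\in G$ implies $1-(1-g_1)(1-g_2)\in G$. Closedness of $\mathcal{P}_0$ makes $G$ closed. Choose $g=\tfrac12+\varepsilon(\1_\phi-\tfrac12)$, which belongs to $G$ since $\1_\phi\in D^\perp$. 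Then $1-(1-g)^n\to\1_\phi$: on $\phi$, $1-g=\tfrac12-\tfrac{\varepsilon}{2}<1$ so the power vanishes; off $\phi$, $g=\tfrac12-\tfrac{\varepsilon}{2}>0$, so I also need the symmetric composition (swapping the roles of $p$ and $p'$) to drive the values off $\phi$ to $0$. I expect the bookkeeping here to be delicate, and I would first verify that $G$ is closed under both $g\mapsto 1-(1-g)^2$ and $g\mapsto g^2$.

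\emph{Step 3: the decomposition.} Iterating over cells gives that $\mathcal{P}_0$ is a product of its cell-conditionals. Setting $\mathcal{P}_\phi=\{p(\cdot\,|\,\phi):p\in\mathcal{P}_0\}$ then yields $\mathcal{P}_0=\sum_\phi\tau(\phi)\mathcal{P}_\phi$. Finally, $\dim\mathcal{P}_\phi=\dim\Delta(\phi)$ follows from Step 1.
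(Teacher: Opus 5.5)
Your $(1)\Rightarrow(2)$ and your Step 1 follow the paper's argument. The paper shows that $V=D^\perp$ is closed under multiplication via the same splitting identity, and then builds the indicators $\1_\phi$ by Lagrange interpolation, which is the unital-subalgebra fact you cite. The trouble is Step 2, which you call the main obstacle.

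As sketched, Step 2 does not work. Since $1-g<1$ both on and off $\phi$, $1-(1-g)^n$ converges to the constant function $\mathbf{1}$, not to $\1_\phi$. Alternating with $g\mapsto g^2$ only separates the two values $\tfrac12\pm\tfrac{\varepsilon}{2}$ if a repelling fixed point of the composite map lies strictly between them. For example, you could multiply by a suitable constant $c\in G$ and iterate $x\mapsto(2x-x^2)^2$, whose interior fixed point is $(3-\sqrt5)/2$, and then use closedness of $G$. As written, the step is unfinished.

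More importantly, the obstacle is self-imposed. Restricting to $g=\tfrac12+\varepsilon f$ with $\varepsilon$ small is no restriction at all. The vector $f$ ranges over the whole subspace $D^\perp$, which contains the constants, so every $g\in D^\perp$ with values in $[0,1]$ defines a consistent binary experiment. Once Step 1 gives $\1_\phi\in D^\perp$, take $g=\1_\phi$ (i.e.\ $f=\1_\phi-\tfrac12$, $\varepsilon=1$). Hypothesis (2) applied to $\pi(a|\theta)=\1_\phi(\theta)$ then gives $p\1_\phi+p'(1-\1_\phi)\in\mathcal{P}_0$ for all $p,p'\in\mathcal{P}_0$ immediately; the cases $\tau(\phi)\in\{0,1\}$ are trivial.

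The paper does exactly this, but in one step. It applies Aumann plausibility to the partition-revealing experiment $\pi(\phi|\theta)=\1_{\{\theta\in\phi\}}$, which yields $\mathcal{P}_0=\sum_\phi\tau(\phi)\mathcal{P}_\phi$ directly and makes your Step 3 iteration unnecessary. With this replacement your proof is complete.

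Your dimension count is also shorter than the paper's. Restricting $D=\{d:\sum_{\theta\in\phi}d(\theta)=0 \text{ for all } \phi\}$ to a cell gives all zero-sum vectors on $\phi$, which is exactly the span of differences of elements of $\mathcal{P}_\phi$. The paper instead argues by contradiction through a compound experiment $\tilde\pi$ that refines one cell.
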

 \begin{proof}
    See Appendix \ref{Appendix_proof_thm_iff_first}. 
 \end{proof}

It is  remarkable that consistent experiments induce Aumann-plausible information structures \emph{only if} the underlying prior set is partially identified, as the latter condition substantially restricts the collection of permissible prior sets. We interpret this result as highlighting the fundamental role that well-defined probabilistic beliefs---as captured by the reduced-form prior---play in sustaining the logic of the martingale property, even when that logic is extended to its set-valued analog. The significance of our characterization result is further discussed in Sections \ref{section_value_of_information} and \ref{section_robust}.  

The proof of necessity relies on  first establishing that the collection of unambiguous events is non-trivial, defines $\Phi$ as the corresponding collection of atoms, then uses the partition-revealing experiment $\pi: \Theta \to \Delta(\Phi)$,  $\pi(\phi|\theta) = \1_{ \{ \theta \in \phi\} }$ and Aumann plausibility to conclude that $\mathcal{P}_0 = \sum_{\phi \in \Phi} \tau(\phi) \mathcal{P}_\phi$.  The proof of sufficency  follows more naturally. Recall from  Lemma \ref{lemma_PI_updating_equivalence} that updating  $\mathcal{P}_0 = \sum_\phi \tau(\phi) \mathcal{P}_\phi$ prior-by-prior  is equivalent to first obtaining the reduced-form posterior $\tau_y$, then taking the Minkowski sum over $\mathcal{P}_\phi$ using the new weights given by $\tau_y$. The Aumann plausibility of the information structure induced by $\pi$ is then a consequence of the fact that $\pi$ induces a Bayes-plausible information structure over the   reduced-form posteriors $\{\tau_y\}_{y \in Y}$.

\subsection{Aumann-Plausible Information Structures to Consistent Experiments}

Given the equivalence result presented in Theorem \ref{thm_iff_first}, a natural question is whether imposing the full ambiguity condition on partially identified prior sets is necessary and/or sufficient for the generalization of \emph{both}  equivalence results in the Bayesian setting to hold. That is, in addition to every consistent experiment inducing an Aumann-plausible information structure, we now consider the case where every Aumann-plausible information structure can be induced by a consistent experiment. It turns out that full ambiguity is a sufficient, but not necessary condition for the full equivalence characterization.

Recall that $\mathcal{P}_0$ is partially identified by  $(\tau, \Phi)$ if it admits the decomposition $\mathcal{P}_0 = \sum_\phi \tau(\phi) \mathcal{P}_\phi$ for $\mathcal{P}_\phi \subseteq \Delta(\phi)$ and $\dim(\mathcal{P}_\phi) = |\phi|- 1$. The set $\mathcal{P}_\phi$ is \emph{extreme} in $\K(\Delta \Theta)$ if it cannot be written as a non-trivial convex combination, i.e., weighted Minkowski sum, of other closed and convex sets in $\K(\Delta \Theta)$. That is, if $\mathcal{P}_\phi$ is  extreme  and $\alpha \mathcal{Q} + (1-\alpha) \mathcal{Q}' = \mathcal{P}_\phi$ for some $\alpha \in [0,1]$ and $\mathcal{Q}, \mathcal{Q}' \in \K(\Delta \Theta)$, then $\mathcal{Q} = \mathcal{Q}' = \mathcal{P}_\phi$. This captures the intuition that extreme sets in $\K(\Delta \Theta)$ cannot be ``split" into different sets in $\K(\Delta \Theta)$. We say that a prior set $\mathcal{P}_0 = \sum_\phi \tau(\phi) \mathcal{P}_\phi$   is \emph{maximally partially identified} if every $\mathcal{P}_\phi$ is extreme.

\begin{definition}[Maximally Partially Identified]
    A set $\mathcal{P} \in \K(\Delta \Theta)$ partially identified by $(\tau, \Phi)$ is maximally partially identified if  $\mathcal{P}_\phi$ is extreme in $\K(\Delta \Theta)$ for every $\phi \in \Phi$.
\end{definition}

It is easy to see that sub-simplices of the form $\Delta(\phi)$ are extreme in $\K(\Delta \Theta)$. As a result, full ambiguity prior sets $\mathcal{P}_0 = \sum_\phi \tau(\phi) \Delta(\phi)$ are maximally partially identified by $(\tau, \Phi)$. In the special case where $|\phi| \leq 2$ for every $\phi \in \Phi$, the following lemma illustrates that maximal partial identification is equivalent to imposing full ambiguity.

\begin{lemma} \label{lemma_MC_eq_PI}
   Suppose $|\phi| \leq 2$ for every $\phi \in \Phi$.  A set $\mathcal{P} \in \K(\Delta \Theta)$ is maximally partially identified by $(\tau, \Phi)$ if and only if it is the full ambiguity set partially identified by $(\tau, \Phi)$.
\end{lemma}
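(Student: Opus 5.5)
The plan is to prove both directions by working cell by cell. Since $\mathcal{P} = \sum_\phi \tau(\phi)\mathcal{P}_\phi$, whether $\mathcal{P}$ is maximally partially identified or reflects full ambiguity is decided by the individual sets $\mathcal{P}_\phi$. So it suffices to show the following: for a cell with $|\phi| \le 2$, a closed convex $\mathcal{P}_\phi \subseteq \Delta(\phi)$ with $\dim(\mathcal{P}_\phi) = \dim(\Delta(\phi))$ is extreme in $\K(\Delta\Theta)$ if and only if $\mathcal{P}_\phi = \Delta(\phi)$.

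One subtlety must be handled first: the decomposition of $\mathcal{P}$ might not be unique. I will use compatibility together with $\supp(\tau) = \Phi$. Every $p \in \mathcal{P}$ satisfies $p(\phi) = \tau(\phi)$, so I can recover $\mathcal{P}_\phi$ as $\{p(\cdot|\phi) : p \in \mathcal{P}\}$. This works because the Minkowski sum of sets supported on disjoint cells is a product, so the conditional on $\phi$ ranges exactly over $\mathcal{P}_\phi$. Hence the $\mathcal{P}_\phi$ are uniquely determined, and "full ambiguity" is an unambiguous property.

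\emph{Case $|\phi| = 1$.} Here $\Delta(\phi)$ is a single point $\delta_\theta$, so $\mathcal{P}_\phi = \Delta(\phi)$ automatically. A singleton $\{\delta_\theta\}$ is extreme: if $\alpha\mathcal{Q} + (1-\alpha)\mathcal{Q}' = \{\delta_\theta\}$ with $\alpha \in (0,1)$, then both $\mathcal{Q}$ and $\mathcal{Q}'$ must be singletons. Since $\delta_\theta$ is an extreme point of $\Delta(\Theta)$, both singletons must equal $\{\delta_\theta\}$. The degenerate cases $\alpha \in \{0,1\}$ need a convention; I read the definition as quantifying over $\alpha \in (0,1)$, as is standard.

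\emph{Case $|\phi| = 2$, say $\phi = \{\theta_1,\theta_2\}$.} Full dimensionality means $\mathcal{P}_\phi$ is a nondegenerate segment $[q_a, q_b]$ inside the edge $[\delta_{\theta_1},\delta_{\theta_2}]$.

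For sufficiency ($\mathcal{P}_\phi = \Delta(\phi)$ implies extreme), suppose $\alpha\mathcal{Q} + (1-\alpha)\mathcal{Q}' = [\delta_{\theta_1},\delta_{\theta_2}]$ with $\alpha \in (0,1)$ and $\mathcal{Q},\mathcal{Q}' \in \K(\Delta\Theta)$.
\begin{itemize}
\item Because $\delta_{\theta_1}$ is an extreme point of $\Delta(\Theta)$, writing it as $\alpha q + (1-\alpha)q'$ with $q \in \mathcal{Q}$, $q' \in \mathcal{Q}'$ forces $q = q' = \delta_{\theta_1}$. The same holds for $\delta_{\theta_2}$.
\item By convexity, both $\mathcal{Q}$ and $\mathcal{Q}'$ therefore contain the whole edge.
\item Neither can contain anything more. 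If $\mathcal{Q}$ contained a point $q$ off the edge, then $\alpha q + (1-\alpha)\delta_{\theta_1}$ would lie in the sum but outside the edge, a contradiction.
\end{itemize}
Hence $\mathcal{Q} = \mathcal{Q}' = \Delta(\phi)$.

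For necessity (extreme implies $\mathcal{P}_\phi = \Delta(\phi)$), suppose $[q_a,q_b] \subsetneq \Delta(\phi)$ and exhibit a nontrivial split. Without loss of generality $q_a \neq \delta_{\theta_1}$. Take $\mathcal{Q} = [q_a', q_b]$ with $q_a'$ slightly closer to $\delta_{\theta_1}$ than $q_a$, and $\mathcal{Q}' = [q_a'', q_b]$ with $q_a''$ on the other side of $q_a$ chosen so that $\tfrac12 q_a' + \tfrac12 q_a'' = q_a$. This requires $q_a''$ to remain inside $(q_a, q_b)$, which is possible because the segment is nondegenerate. Then $\tfrac12\mathcal{Q} + \tfrac12\mathcal{Q}' = [q_a, q_b]$, since for intervals on a common line the Minkowski average of endpoints gives the endpoints. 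Both $\mathcal{Q}$ and $\mathcal{Q}'$ differ from $\mathcal{P}_\phi$, so $\mathcal{P}_\phi$ is not extreme.

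The main obstacle is the well-definedness issue: "maximally partially identified" quantifies over the given $\mathcal{P}_\phi$, so I must make sure the uniqueness argument via conditionals on cells is airtight. Everything else is one-dimensional interval arithmetic.
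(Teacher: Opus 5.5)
Your proof is correct and follows essentially the paper's argument. Like the paper, you reduce to individual cells, note that $|\phi|=1$ is trivial, and for $|\phi|=2$ split a proper subsegment by perturbing an endpoint, exactly as the paper does with $\mathcal{Q}=[a-\epsilon,b]$ and $\mathcal{Q}'=[a+\epsilon,b]$.

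You also correctly supply several details the paper leaves implicit:
\begin{itemize}
\item the proof that $\Delta(\phi)$ is extreme in $\K(\Delta\Theta)$;
\item the fact that the $\mathcal{P}_\phi$ are uniquely determined by $\mathcal{P}$ when $\tau$ has full support;
\item the requirement $a<b$ (nondegeneracy), so that the perturbed segment is well defined;
\item the reading of extremality with $\alpha\in(0,1)$.
\end{itemize}
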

\begin{proof}
   See Appendix \ref{Appendix_proof_lemma_MC_eq_PI}
\end{proof}

An immediate corollary is that under (non-trivial) partial identification, maximality and full ambiguity are equivalent whenever $|\Theta| \leq 3$. However, as illustrated in the following example, this equivalence does not extend to general state spaces with $|\Theta| > 3$.  

\begin{example}[Maximal Partial Identification] \label{ex_diagram}
    Let $\Theta = \{R,G,B,Y\}$ and consider the prior set 
    $\mathcal{P}_0 = \frac{1}{2} \mathcal{P}_a + \frac{1}{2} \mathcal{P}_b$, with $\mathcal{P}_a = \Delta(\{R\})$ and $\mathcal{P}_b = \{ p \in \Delta(\{G,B,Y\}): p(G) \leq p(B) \}$. For $\tau(\{R\}) = \tau(\{G,B,Y\}) = \frac{1}{2}$ and $\Phi = \{ \{R\}, \{G,B,Y\} \}$, observe  that $\mathcal{P}_0$ is  {maximally} partially identified by  $(\tau, \Phi)$.  In particular,  $\mathcal{P}_b \subsetneq \Delta(\{G,B,Y\})$ is an extreme set in $\K(\Delta \Theta)$ because the Minkowski decomposition of a triangle consists only of polytopes that are positively homothetic to it.\footnote{The set $\mathcal{P}'$ is positively homothetic to $\mathcal{P}$ if $\mathcal{P}' = \alpha \mathcal{P}$ for some $\alpha \geq 0$.  See Chapter 15 of \cite{grunbaum2003convex} for a classic reference on Minkowski decompositions.}   

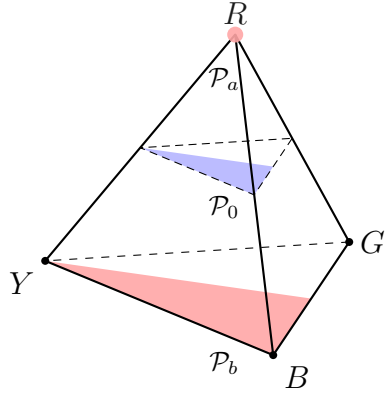
\begin{figure}[H]  
    \centering
    \begin{tikzpicture}[scale=4.5, line join=round]
  \usetikzlibrary{calc}

  \coordinate (G) at (0.8,0.2);
  \coordinate (B) at (0.6,-0.10);
  \coordinate (R) at (0.5,0.75);
  \coordinate (Y) at (0,0.15);

  \coordinate (M) at ($(G)!0.5!(B)$);

  \fill[red!35, opacity=0.9] (B)--(Y)--(M)--cycle;

  \coordinate (RG) at ($(R)!0.5!(G)$);
  \coordinate (RB) at ($(R)!0.5!(B)$);
  \coordinate (RY) at ($(R)!0.5!(Y)$);
  \coordinate (Mhalf) at ($(RG)!0.5!(RB)$);

  \fill[blue!35, opacity=0.75] (RB)--(RY)--(Mhalf)--cycle;

  \draw[densely dashed, thin] (RG)--(RB)--(RY)--cycle;

  \coordinate (Cred)  at ($(B)!0.333333!(Y)!0.333333!(M)$);
  \coordinate (Cblue) at ($(RB)!0.333333!(RY)!0.333333!(Mhalf)$);

  \coordinate (Pa) at ($(R)!0.2!(RB)!0.15!(RY)$); 

    \node[font=\footnotesize] at (Pa) {$\mathcal{P}_a$};
    
    \node[font=\footnotesize, yshift=-12pt] at (Pa |- Cblue) {$\mathcal{P}_0$};
    
    \node[font=\footnotesize, yshift=-18pt] at (Pa |- Cred)  {$\mathcal{P}_b$};

  \draw[thick] (G)--(B)--(R)--cycle;
  \draw[dashed] (G)--(Y);
  \draw[thick] (Y)--(B);
  \draw[thick] (R)--(Y);

  \fill (G) circle (0.3pt) node[right] {$G$};
  \fill (B) circle (0.3pt) node[below right] {$B$};
  \fill[red!35, opacity=0.9] (R) circle (0.6pt) node[above, text=black] {$R$};
  \fill (Y) circle (0.3pt) node[below left] {$Y$};

\end{tikzpicture}
        \caption{Graphical illustration of the prior and posterior sets  in Example \ref{ex_diagram}.}
    \label{fig:placeholder}
\end{figure}
 
\end{example}

The following theorem proves that maximal partial identification of the underlying prior set is the necessary and sufficient condition for consistent experiments and Aumann-plausible information structures to be fully equivalent.  Because the collection of full ambiguity prior sets is a strict subset of the collection of maximally partially identified sets, it  follows immediately that full ambiguity is a sufficient, but not necessary condition except in the special case where $|\phi| \leq 2$ for every $\phi \in \Phi$. 

\begin{theorem} \label{thm_main_iff}
    For any $\mathcal{P}_0 \in \K(\Delta \Theta)$, the following are equivalent:
    \begin{enumerate}
        \item $\mathcal{P}_0$ is maximally partially identified by some non-trivial $(\tau, \Phi)$; 
        \item The following statements hold:
        \begin{enumerate}
            \item Any experiment $\pi: \Theta \to \Delta(Y)$ consistent with $\mathcal{P}_0$ induces an Aumann-plausible information structure, and there exists at least one such non-trivial experiment;
            \item For any information structure $\mu \in \Delta(Y)$ Aumann-plausible with respect to $\mathcal{P}_0$, there exists an experiment $\pi: \Theta \to \Delta(Y)$ consistent with $\mathcal{P}_0$ that induces $\mu$.
        \end{enumerate}
    \end{enumerate}
\end{theorem}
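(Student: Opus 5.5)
Theorem \ref{thm_iff_first} already gives the equivalence between partial identification by a non-trivial $(\tau,\Phi)$ and statement 2(a). So the proof reduces to one question: given a set $\mathcal{P}_0 = \sum_\phi \tau(\phi)\mathcal{P}_\phi$ that is partially identified by a non-trivial $(\tau,\Phi)$, show that 2(b) holds if and only if every $\mathcal{P}_\phi$ is extreme in $\K(\Delta\Theta)$. Throughout we may take $\Phi$ to be the partition produced in the necessity part of Theorem \ref{thm_iff_first}, namely the atoms of the unambiguous events. This makes the decomposition canonical. Since $\mathcal{P}_\phi \subseteq \Delta(\phi)$ and the cells are disjoint, the summands of $\mathcal{P}_0$ can be recovered as faces of $\mathcal{P}_0$ in the coordinate directions supported on each $\phi$.

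\emph{Sufficiency (1 $\Rightarrow$ 2(b)).} Let $\mu$ be Aumann-plausible, so $\mathcal{P}_0 = \sum_y \mu(y)\mathcal{P}_y$. The proof has three steps.
\begin{enumerate}
\item Project onto the cells. Every $p\in\mathcal{P}_0$ satisfies $p(\phi)=\tau(\phi)$, and Minkowski sums preserve these linear functionals, so $\sum_y \mu(y)\, q_y(\phi) = \tau(\phi)$ for any selection $q_y\in\mathcal{P}_y$. I then need $q\mapsto q(\phi)$ to be constant on each $\mathcal{P}_y$. This holds because the support function of a Minkowski sum is the sum of the support functions: if $q(\phi)$ varied on some $\mathcal{P}_y$, then $p(\phi)$ would vary on $\mathcal{P}_0$, which is impossible. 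Define $\tau_y(\phi)$ to be this constant value; then $\sum_y \mu(y)\tau_y = \tau$.
\item Decompose each posterior set. Restricting to the coordinates in $\phi$ (again via support functions in directions supported on $\phi$), I get $\tau(\phi)\mathcal{P}_\phi = \sum_y \mu(y)\tau_y(\phi)\mathcal{Q}_{y,\phi}$, where $\mathcal{Q}_{y,\phi}$ is the normalized $\phi$-section of $\mathcal{P}_y$. Extremeness of $\mathcal{P}_\phi$, applied after renormalizing the weights $\mu(y)\tau_y(\phi)/\tau(\phi)$, forces $\mathcal{Q}_{y,\phi}=\mathcal{P}_\phi$ whenever $\tau_y(\phi)>0$. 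It then remains to show that $\mathcal{P}_y$ equals the sum of its $\phi$-sections, that is, $\mathcal{P}_y=\sum_\phi \tau_y(\phi)\mathcal{P}_\phi$. I would prove this by comparing support functions: the support function of $\mathcal{P}_0$ splits additively across cells, and the inequality $h_{\mathcal{P}_y} \le \sum_\phi h_{\tau_y(\phi)\mathcal{Q}_{y,\phi}}$, averaged over $y$ and combined with the equality for $\mathcal{P}_0$, forces equality for every $y$.
\item Build the experiment. Set $\pi(y|\theta)=\mu(y)\tau_y(\phi(\theta))/\tau(\phi(\theta))$. This is a valid likelihood because $\sum_y \mu(y)\tau_y=\tau$, and it is $\Phi$-measurable, hence consistent with $\mathcal{P}_0$ by Lemma \ref{lemma_consistent_PI}. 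Its reduced-form posteriors are $\tau_y$, so Lemma \ref{lemma_PI_updating_equivalence} shows that it induces the posterior sets $\mathcal{P}_y$ with marginal $\mu$.
\end{enumerate}

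\emph{Necessity (2(b) $\Rightarrow$ 1).} Suppose some $\mathcal{P}_{\phi^*}$ is not extreme, say $\mathcal{P}_{\phi^*}=\alpha\mathcal{Q}+(1-\alpha)\mathcal{Q}'$ with $\mathcal{Q}\ne\mathcal{Q}'$. First I would check that $\mathcal{Q},\mathcal{Q}'\subseteq\Delta(\phi^*)$: Minkowski summands of a set lying in an affine subspace translate into parallel subspaces, and since all three sets lie in the simplex, the summands lie in $\Delta(\phi^*)$. Now take $Y=\{a,b\}$, $\mu=(\alpha,1-\alpha)$, and
\[
\mathcal{P}_a=\tau(\phi^*)\mathcal{Q}+\sum_{\phi\ne\phi^*}\tau(\phi)\mathcal{P}_\phi,
\qquad
\mathcal{P}_b=\tau(\phi^*)\mathcal{Q}'+\sum_{\phi\ne\phi^*}\tau(\phi)\mathcal{P}_\phi .
\]
This $\mu$ is Aumann-plausible. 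Any consistent experiment is $\Phi$-measurable, so by Lemma \ref{lemma_PI_updating_equivalence} its posteriors have the form $\sum_\phi\tau_y(\phi)\mathcal{P}_\phi$. In particular their $\phi^*$-sections are multiples of $\mathcal{P}_{\phi^*}$, never of $\mathcal{Q}\neq\mathcal{P}_{\phi^*}$. Positive homothety is ruled out as well: sets of probability vectors with the same normalization must coincide if they are homothetic. So no consistent experiment induces this $\mu$, and 2(b) fails.

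The main obstacle is step 2 of the sufficiency argument, namely showing that each posterior set in an arbitrary Aumann-plausible decomposition is itself a cell-wise Minkowski sum with constant cell masses. My plan is to handle it entirely through support functions in directions supported on single cells, since those are additive under both the Minkowski sum over $y$ and the sum over $\phi$.
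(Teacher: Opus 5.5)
Your proposal is correct and follows essentially the same route as the paper's proof. Constancy of the cell masses on each $\mathcal{P}_y$ yields a Bayes-plausible family $\{\tau_y\}_{y\in Y}$, and extremeness of $\mathcal{P}_\phi$ pins down the normalized $\phi$-projections of each $\mathcal{P}_y$. A support-function comparison then gives $\mathcal{P}_y=\sum_\phi\tau_y(\phi)\mathcal{P}_\phi$, and Lemma \ref{lemma_PI_updating_equivalence} realizes $\mu$; your explicit $\pi(y|\theta)=\mu(y)\tau_y(\phi(\theta))/\tau(\phi(\theta))$ just writes out the Bayesian splitting the paper cites.

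The necessity direction differs only cosmetically. You replace the $\phi^*$-component inside two full posterior sets, whereas the paper uses the signals $\{\mathcal{P}_{\phi'}\}_{\phi'\ne\phi}\cup\{\mathcal{Q}_i\}_{i\in I}$ and its lemma that two distinct induced posteriors cannot both lie in $\Delta(\phi)$. Your side remark that the summands are ``faces'' of $\mathcal{P}_0$ should say coordinate projections, though the argument never uses it.
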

\begin{proof}
    See Appendix \ref{Appendix_proof_thm_main_iff}.
\end{proof}

Recall from Theorem \ref{thm_iff_first} that $\mathcal{P}_0$ being partially identified implies property 2(a). The intuition behind the sufficiency of \emph{maximal} partial identification for property 2(b) follows from the fact that $\{\mathcal{P}_\phi\}_{\phi \in \Phi}$ forms a ``basis" in the space of  sets contained in the support of information structures that are Aumann-plausible with respect to $\mathcal{P}_0$.  That is, if $\mathcal{P}_y \in \supp(\mu)$ and $\mu$ is Aumann-plausible with respect to some maximal $\mathcal{P}_0 = \sum_\phi \tau(\phi) \mathcal{P}_\phi$,  then every $\mathcal{P}_y$ admits the decomposition $\mathcal{P}_y = \sum_\phi \tau_y(\phi) \mathcal{P}_\phi$  for some $\tau_y \in \Delta(\Phi)$. It turns out that we can interpret $\mu$ as a Bayes-plausible information structure over such $\{\tau_y\}_{y \in Y}$, in which case the experiment that induces $\mu$ can be recovered from the version of property 2(b) in the Bayesian setting.

As for the necessity of maximal partial identification,  we already know from Theorem \ref{thm_iff_first} that 2(a) implies $\mathcal{P}_0$ is partially identified by some $(\tau, \Phi)$. Now suppose, for a contradiction, that $\mathcal{P}_0$ is not maximal, in which case there exists some  $\mathcal{P}_\phi$ that is not  extreme in $\mathcal{K}(\Delta \Theta)$. We can then construct an Aumann-plausible information structure with support points $\mathcal{Q}, \mathcal{Q}' \subseteq \Delta(\phi)$ that average out to $\mathcal{P}_\phi$. Similar in spirit to our discussion of conditional measures in Section \ref{subsection_updating}, it can be shown that no consistent experiment can induce such $\mathcal{Q}, \mathcal{Q}'$ under prior-by-prior updating, in contradiction of property 2(b).

In Section \ref{section_BP}, we illustrate how Theorem \ref{thm_main_iff} greatly simplifies the analysis of persuasion games where agents have prior sets that are maximally partially identified. The intuition mirrors that in the standard Bayesian setting, where recasting the problem in terms of the Aumann-plausible information structures  allows for the application of the concavification technique à la \cite{KG_bayesian_persuasion}.

\section{Value of Information} \label{section_value_of_information}

\subsection{Preliminaries}

In this section, we investigate the value of information, in the sense of \cite{Blackwell_original, Blackwell_original_2}, under partially identified ambiguity.\footnote{We do not require \emph{maximal} partial identification because our results only depend on consistent experiments inducing Aumann plausible information structures; see Theorem \ref{thm_iff_first}.} Our main result illustrates that many of the insights from Blackwell's original  setting apply directly when ambiguity arises from partial identification, where the key step is to reformulate the problem in the standard Bayesian setting using the reduced-form parameters. As a result, our results admit relatively simple proofs and are different in spirit to much of the existing work that extends Blackwell's original theorem of informativeness to permit the presence of ambiguity.\footnote{See, for instance, \cite{celen_blackwell,  renou_blackwell, li_zhou_blackwell, wang_blackwell}.}  

Given some prior set $\mathcal{P}_0 \in \K(\Delta \Theta)$ and experiment $\pi: \Theta \to \Delta(Y)$ consistent with $\mathcal{P}_0$, let $\mathcal{P}_y^\pi$ denote the posterior set obtained via prior-by-prior updating and $\mu^\pi \in \Delta(Y)$ the induced information structure. For some action space $\mathcal{A}$, let $A \subseteq \mathcal{A}$ be a finite action set. The DM's utility  over $A$ is permitted to be state-dependent, such that $u : A \times \Theta \to \R$.  Assuming that the DM makes decisions according to the maxmin criterion, the value of the information captured in the experiment $\pi$ can be defined as follows.  

\begin{definition}[Value Function]
    Fix some prior set $\mathcal{P}_0$, action set $A$, state-dependent utility $u$, and suppose   $\pi : \Theta \to \Delta(Y)$ is consistent with $\mathcal{P}_0$. The value of $\pi$ is
    \begin{align}
        V_{\mathcal{P}_0, A, u} (\pi) = \E_{\mu^\pi} \left[  \max_{ a \in A} \min_{p^\pi_y \in \mathcal{P}_y^\pi} \sum_{\theta } u(a,\theta)    p^\pi_y(\theta) \right].  \nonumber
    \end{align}
\end{definition}

Note that the value function corresponds to the usual value of information whenever the prior set $\mathcal{P}_0$, and consequently the posterior sets $\mathcal{P}_y^\pi$, are singletons.  Next, we define the garbling order over the collection of Blackwell experiments. The definition is identical to that in the standard Bayesian setting, capturing the intuition that any information in $\pi_2$ can be generated by adding noise to $\pi_1$  whenever $\pi_2$ is a garbling of $\pi_1$.

\begin{definition}[Garbling]
    The Blackwell experiment $\pi_2: \Theta \to \Delta(Y_2)$ is a garbling of $\pi_1 : \Theta \to \Delta(Y_1)$ if there exists a  Markov kernel $k: Y_1 \to \Delta(Y_2)$, such that
    \begin{align}
        \pi_2(y_2 | \theta) = \sum_{y_1} k (y_2 | y_1) \pi_1(y_1 | \theta) \nonumber
    \end{align}
    for all $y_2 \in Y_2$ and $\theta \in \Theta$.
\end{definition}
We say that $\pi_1$ is \emph{more informative} than $\pi_2$, denoted $\pi_1 \triangleright \pi_2$, if  $\pi_2$ is a garbling of $\pi_1$.  In the case where the prior set $\mathcal{P}_0$ is a singleton, recall that Blackwell's theorem of informativeness says that the more informative and value function orders are equivalent. The same result holds even if we permit the use of mixed strategies $\sigma \in \Delta(A)$ when defining the value function, since $\argmax_{\sigma \in \Delta(A)} \sum_{\theta, a } u(a,\theta) \sigma(a)  p^\pi_y(\theta)$ always contains a pure strategy. 


\subsection{Comparing Experiments}

We now generalize Blackwell's original result to permit the presence of partially identified ambiguity.  Let $\mathcal{K}_{PI}$ be the collection of prior sets partially identified by some $(\tau_0, \Phi)$, where $\Phi$ is any partition of $\Theta$ and $\tau_0 \in \Delta(\Phi)$ is any corresponding reduced-form prior. Since every singleton prior set $\mathcal{P}_0 = \{p_0\}$ is partially identified by  $(\{\theta\}_{\theta \in \Theta}, p_0)$, we know that $\mathcal{K}_{PI} \supseteq \Delta(\Theta)$. For some experiment $\pi: \Theta \to \Delta(Y)$ consistent with $\mathcal{P}_0 = \sum_\phi \tau_0(\phi) \mathcal{P}_\phi \in \K_{PI}$, let $\tau_y^\pi$ be the corresponding reduced-form posterior. Recall from Lemma \ref{lemma_PI_updating_equivalence} that each posterior set $\mathcal{P}_y^\pi$ is partially identified by $(\tau_y^\pi, \Phi)$.

Define $\underline{u} : A \times \Phi \to \R$ as $\underline{u}( a, \phi) \equiv \min_{ p \in \mathcal{P}_\phi} \sum_{\theta} u(a, \theta)  p(\theta)$, in which case the value function can be rewritten as
\begin{align}
    V_{\mathcal{P}_0, A, u}(\pi)  & = \E_{\mu^\pi} \left[ \max_{a \in A } \sum_{\phi}     \tau_y^\pi(\phi) \min_{p \in \mathcal{P}_\phi} \sum_{\theta} u(a, \theta)  p (\theta) \right] \nonumber \\
    & = \E_{\mu^\pi} \left[ \max_{a \in A} \sum_{\phi} \tau_y^\pi(\phi)  \underline{u}(a , \phi)   \right], \nonumber 
\end{align}
now interpreting $\mu^\pi \in \Delta(Y)$ as a distribution over the reduced-form posteriors $\{\tau_y^\pi\}_{y \in Y} \subseteq \Delta(\Phi)$.  By Lemma \ref{lemma_consistent_PI}, we know that $\pi(\cdot|\theta)$ is constant on the cells of $\Phi$. We can then reformulate the problem by taking $\Phi$ as the state space, such that the  above value function coincides with $V_{ \{\tau_0\}, A, \underline{u}}(\pi)$.

Since Blackwell's original theorem  tells us that $\pi_1 \triangleright \pi_2$ implies $V_{ \{\tau_0\}, A, \underline{u}}(\pi_1) \geq V_{ \{\tau_0\}, A, \underline{u}}(\pi_2)$ for all $(\tau_0, A, \underline{u})$, the preceding discussion proves the forward direction of the following lemma.  The converse also follows  from Blackwell's  original theorem because $\pi_1, \pi_2$ are consistent with any singleton $\mathcal{P}_0$.

\begin{lemma} \label{lemma_blackwell_generalized}
 The following are equivalent:
    \begin{enumerate}
        \item $\pi_1$ is more informative than $\pi_2$;
        \item $V_{\mathcal{P}_0, A, u}(\pi_1) \geq V_{\mathcal{P}_0, A, u}(\pi_2)$  for all $(A, u)$ and $\mathcal{P}_0 \in \K_{PI}$ that $\pi_1,\pi_2$ are consistent with.
    \end{enumerate}
    Moreover, the inequality holds with equality for singleton $A$.
\end{lemma}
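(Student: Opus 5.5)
The plan is to reduce both directions to Blackwell's original theorem via the reduced-form reformulation already carried out in the text preceding the statement.

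\emph{Forward direction (1 $\Rightarrow$ 2).} Fix $(A,u)$ and $\mathcal{P}_0 = \sum_\phi \tau_0(\phi)\mathcal{P}_\phi \in \K_{PI}$ with which both $\pi_1$ and $\pi_2$ are consistent.

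\begin{enumerate}
\item By Lemma \ref{lemma_consistent_PI}, each $\pi_i$ is $\Phi$-measurable. This defines experiments $\hat\pi_i:\Phi\to\Delta(Y_i)$ by setting $\hat\pi_i(\cdot|\phi)\equiv\pi_i(\cdot|\theta)$ for any $\theta\in\phi$.
\item Restricting the garbling identity $\pi_2(y_2|\theta)=\sum_{y_1}k(y_2|y_1)\pi_1(y_1|\theta)$ to one representative $\theta$ per cell shows that $\hat\pi_2$ is a garbling of $\hat\pi_1$ with the same kernel $k$.
\item By Lemma \ref{lemma_PI_updating_equivalence}, each posterior set is $\mathcal{P}^{\pi_i}_y=\sum_\phi\tau^{\pi_i}_y(\phi)\mathcal{P}_\phi$. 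The inner minimum of a linear functional over a Minkowski sum splits as the sum of the minima, since the minimization separates across cells. The displayed computation then gives $V_{\mathcal{P}_0,A,u}(\pi_i)=V_{\{\tau_0\},A,\underline u}(\hat\pi_i)$, where $\tau_0$ is a single prior on the state space $\Phi$. Here $\mu^{\pi_i}$ coincides with the marginal of $\hat\pi_i$ under $\tau_0$, and the Bayesian update of $\tau_0$ under $\hat\pi_i$ equals $\tau^{\pi_i}_y$.
\item Blackwell's original theorem, applied to the finite decision problem $(\tau_0,A,\underline u)$ on $\Phi$, gives $V(\hat\pi_1)\ge V(\hat\pi_2)$.
\end{enumerate}

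\emph{Converse (2 $\Rightarrow$ 1).} Every singleton $\{p_0\}$ lies in $\K_{PI}$ and every experiment is consistent with a singleton. Condition 2 therefore contains the Bayesian value comparison for all $(p_0,A,u)$, and the converse of Blackwell's theorem yields that $\pi_2$ is a garbling of $\pi_1$. One point to check is that Blackwell's converse holds already with a single full-support prior and all finite $(A,u)$, which is standard; we can take $p_0$ uniform. We must also handle $|Y_1|,|Y_2|$ possibly differing, which the classical statement allows.

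\emph{Equality for singleton $A=\{a\}$.} Writing the reduced form as above, the value is $\E_{\mu^{\pi}}[\sum_\phi\tau^\pi_y(\phi)\underline u(a,\phi)]$. Bayes plausibility of $\{\tau^\pi_y\}$ under $\mu^\pi$ (equivalently, Aumann plausibility from Theorem \ref{thm_iff_first}, which rests on the same fact) turns this into $\sum_\phi\tau_0(\phi)\underline u(a,\phi)=\min_{p\in\mathcal{P}_0}\sum_\theta u(a,\theta)p(\theta)$. This quantity is independent of $\pi$, so both values coincide.

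\emph{Main obstacle.} The step needing the most care is step 3 of the forward direction: justifying the separation $\min_{p\in\sum_\phi\tau(\phi)\mathcal{P}_\phi}\langle u(a,\cdot),p\rangle=\sum_\phi\tau(\phi)\min_{\mathcal{P}_\phi}\langle u(a,\cdot),\cdot\rangle$, and confirming that the reduced-form Bayesian objects match. Both follow from linearity and Lemma \ref{lemma_PI_updating_equivalence}. Everything else is a direct invocation of Blackwell's theorem.
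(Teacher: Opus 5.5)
Your proposal is correct and follows the paper's argument. Both reduce to the Bayesian problem on $\Phi$ with prior $\tau_0$ and utility $\underline{u}$ via Lemma \ref{lemma_consistent_PI} and Lemma \ref{lemma_PI_updating_equivalence}, apply Blackwell's theorem for the forward direction, and use singleton prior sets for the converse. The only small difference is in the singleton-$A$ claim: you use Bayes plausibility of the reduced-form posteriors directly, which covers the trivially identified case uniformly, whereas the paper invokes Theorem \ref{thm_iff_first} and treats the trivial partition separately.
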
 
\begin{proof}
    It remains to show that the inequality in the second statement holds with equality for singleton action sets.  This follows directly from  Theorem \ref{thm_iff_first}, which states that when $\mathcal{P}_0$ is non-trivially partially identified, consistent experiments induce information structures that are Aumann-plausible with respect to the prior set. If $\mathcal{P}_0$ is trivially partially identified, then only trivial experiments are consistent with $\mathcal{P}_0$, and  we have $V_{\mathcal{P}_0, A, u}(\pi_1) = V_{\mathcal{P}_0, A, u}(\pi_2)$ as well.
\end{proof}

Recall that for singleton $\mathcal{P}_0$, the value function remains the same regardless of whether we permit the DM to use mixed strategies $\sigma \in \Delta(A)$. For non-singleton $\mathcal{P}_0$,   it is no longer without loss to restrict our attention to pure strategies, as there may exist strict benefits from hedging with a  value function of the form 
\begin{align}
    W_{\mathcal{P}_0, A, u} (\pi) \equiv \E_{\mu^\pi} \left[  \max_{ \sigma \in \Delta(A)} \min_{p^\pi_y \in \mathcal{P}_y^\pi} \sum_{\theta, a } u(a ,\theta) \sigma(a)  p^\pi_y(\theta) \right].  \nonumber
\end{align}

Nonetheless, repeating a slightly  modified version of the preceding exercise  allows us to prove the analog to Lemma \ref{lemma_blackwell_generalized} using $ W_{\mathcal{P}_0, A, u} (\cdot)$ as the value function.\footnote{As in the proof of Lemma \ref{lemma_blackwell_generalized}, the key step is to define  $\underline{u}: \Delta(A) \times \Phi \to \R$,  $\underline{u}(\sigma,\phi) = \min_{p \in \mathcal{P}_\phi} \sum_{\theta, a} u(a,\theta) \sigma(a) p(\theta)$ and interpret $\Phi$ as the state space.}  This implies that even though $V_{\mathcal{P}_0, A, u}(\cdot) \neq W_{\mathcal{P}_0, A, u}(\cdot)$, the order over experiments induced by the two value functions  are equivalent. 
 
\begin{theorem} \label{thm_blackwell_generalized}
    The following are equivalent:
    \begin{enumerate}
        \item $\pi_1$ is more informative than $\pi_2$;
        \item $V_{\mathcal{P}_0, A, u}(\pi_1) \geq V_{\mathcal{P}_0, A, u}(\pi_2)$  for all $(A, u)$ and $\mathcal{P}_0 \in \mathcal{K}_{PI}$ that $\pi_1,\pi_2$ are consistent with;
        \item $W_{\mathcal{P}_0, A, u}(\pi_1) \geq W_{\mathcal{P}_0, A, u}(\pi_2)$  for all $(A, u)$ and $\mathcal{P}_0 \in \mathcal{K}_{PI}$ that $\pi_1,\pi_2$ are consistent with.
    \end{enumerate}
    Moreover, the inequalities hold with equality for singleton $A$.
\end{theorem}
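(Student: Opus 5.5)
The equivalence of statements 1 and 2, including the equality claim for singleton $A$, is exactly Lemma \ref{lemma_blackwell_generalized}. It therefore remains to show that statement 1 is equivalent to statement 3, and that the inequality in statement 3 holds with equality when $A$ is a singleton.

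\textbf{Reduction to the reduced-form problem.} Fix $\mathcal{P}_0 = \sum_\phi \tau_0(\phi)\mathcal{P}_\phi \in \K_{PI}$ that is consistent with both $\pi_1$ and $\pi_2$. By Lemma \ref{lemma_PI_updating_equivalence}, each posterior set is $\mathcal{P}^\pi_y = \sum_\phi \tau^\pi_y(\phi)\mathcal{P}_\phi$. For fixed $\sigma\in\Delta(A)$, the map $p \mapsto \sum_{\theta,a}u(a,\theta)\sigma(a)p(\theta)$ is linear. The minimum of a linear functional over a Minkowski sum is the sum of the minima over the summands, so
\begin{align}
\min_{p\in\mathcal{P}^\pi_y}\sum_{\theta,a}u(a,\theta)\sigma(a)p(\theta) = \sum_\phi \tau^\pi_y(\phi)\,\underline{u}(\sigma,\phi), \nonumber
\end{align}
where $\underline{u}(\sigma,\phi)\equiv\min_{p\in\mathcal{P}_\phi}\sum_{\theta,a}u(a,\theta)\sigma(a)p(\theta)$. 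Hence
\begin{align}
W_{\mathcal{P}_0,A,u}(\pi)=\E_{\mu^\pi}\Big[\max_{\sigma\in\Delta(A)}\sum_\phi\tau^\pi_y(\phi)\underline{u}(\sigma,\phi)\Big]. \nonumber
\end{align}
By Lemma \ref{lemma_consistent_PI}, $\pi_1$ and $\pi_2$ are $\Phi$-measurable, so they can be viewed as experiments $\Phi\to\Delta(Y_i)$ with reduced-form prior $\tau_0$. This turns $W$ into a Bayesian value of information on the state space $\Phi$, with the compact (possibly infinite) action set $\Delta(A)$ and the continuous payoff $\underline{u}$.

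\textbf{$1\Rightarrow 3$.} If $\pi_2$ is a garbling of $\pi_1$ on $\Theta$, it is also a garbling of $\pi_1$ on $\Phi$, using the same kernel. The only point needing care is that Blackwell's sufficiency direction must be applied with a compact action set rather than a finite $A$. I would verify this directly rather than cite it. Fix the garbling kernel $k$. For each $y_2$, let $\sigma^*(y_2)$ be optimal under $\tau^{\pi_2}_{y_2}$. The rule $y_1\mapsto \sigma^*(y_2)$ with $y_2\sim k(\cdot|y_1)$ gives the $\pi_1$-observer the same expected payoff $\sum_{\phi}\tau_0(\phi)\sum_{y_2}\pi_2(y_2|\phi)\underline{u}(\sigma^*(y_2),\phi)$. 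Optimizing pointwise in $y_1$ can only increase this payoff. The attainment of each maximum follows because $\underline{u}(\cdot,\phi)$ is a minimum of linear functions, hence concave and continuous on $\Delta(A)$. This step is the main thing to check carefully, since the argument must handle a randomized, convex-combination rule when the objective is concave rather than linear in $\sigma$. I will use $\E_{k}[\underline{u}(\sigma^*(y_2),\phi)]$ as it stands, without mixing the $\sigma$'s inside $\underline{u}$, so concavity causes no difficulty.

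\textbf{$3\Rightarrow 1$ and the singleton case.} For the converse, I restrict to singleton $\mathcal{P}_0=\{p_0\}$, which is in $\K_{PI}$ and consistent with every experiment. There $W=V$, because a linear objective in $\sigma$ attains its maximum at a pure action. Statement 3 then implies Blackwell's condition for all $(p_0,A,u)$, and Blackwell's original theorem gives $\pi_1\triangleright\pi_2$. For singleton $A$, $\Delta(A)$ is a point, so $W=V$, and the equality follows from Lemma \ref{lemma_blackwell_generalized}. Equivalently, it follows from Aumann plausibility (Theorem \ref{thm_iff_first}), together with the trivial-partition case in which only trivial experiments are consistent.
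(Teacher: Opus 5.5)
Your proposal is correct and follows the paper's proof. Both arguments rewrite $W$ as a Bayesian value of information on the state space $\Phi$, with action set $\Delta(A)$ and payoff $\underline{u}(\sigma,\phi)$. Both then obtain $3\Rightarrow 1$ from singleton prior sets, where $W=V$, and handle the singleton-$A$ equality through Lemma \ref{lemma_blackwell_generalized}.

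The only difference is in $1\Rightarrow 3$: the paper cites the extension of Blackwell's theorem to Polish action spaces, whereas you check the garbling-implies-more-valuable direction directly. Your route is sound and more self-contained, since only that elementary direction is needed there.
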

\begin{proof}
    See Appendix \ref{Appendix_proof_blackwell}. 
\end{proof}

\section{Application: Robust Bayesian Analysis} \label{section_robust}

\subsection{Preliminaries}

The robust Bayesian approach to statistical decision theory has recently gained attention in the  econometrics literature as a tool for analyzing decision-making under set-identified models. Problems  of this form arise commonly across many applied settings, such as  pricing with unobserved heterogeneity and treatment choice under partial identification of the treatment effect \citep[e.g.,][]{kitagawa2012gamma_minimax, Stoye_2012_AR, christensen2022optimal, fernandez2024robust}. Unlike in  the  Bayesian case, decision rules that attain the robust Bayesian analog of minimum risk do not in general coincide with actions that minimize expected posterior loss \citep[e.g.,][]{vidakovic2000gamma, Giacomini_Kitagawa_Read_2025}. In this section, we provide a detailed analysis of this phenomenon and propose an alternative optimality criterion that retains the time consistency of Bayesian decision-making.

Continuing with the notation of the previous section, we define a \emph{decision rule} as the sample-contingent action given by the mapping $\delta: Y \to A$ for some finite action set $A$. We take $\mathcal{D}$ to be  the collection of all decision rules and $\mathcal{D}_{A}$ as the special class of constant rules that map deterministically to $A$.  The \emph{loss function} is a  mapping $L: \Theta \times A \to \R$, where $L(\theta, a)$ is  interpreted as the loss, i.e., disutility, the DM incurs when the true state is $\theta$ and she takes  action $a$. Given some Blackwell experiment $\pi : \Theta \to \Delta(Y)$, the \emph{risk} $R(\theta, \delta)$ of the decision rule $\delta$ is the expected loss at  $\theta$, such that  $ R(\theta, \delta) \equiv \sum_{y}  \pi(y|\theta) L(\theta, \delta(y))$.  

For some prior $p_0 \in \Delta(\Theta)$, the \emph{Bayes risk}  of a decision rule is
\begin{align*}
    r(p_0, \delta) \equiv \sum_\theta R(\theta, \delta) p_0(\theta),
\end{align*}
which is the $p_0$-expectation of $R(\theta, \delta)$. Given the realization of some $y \in Y$, the \emph{posterior expected loss} of the action $a \in A$ is defined as   $\rho_y(p_0, a) \equiv  \sum_\theta L(\theta,a) p_y^\pi(\theta)$.   The Bayes rule $\delta^* = \arginf_{\delta \in \mathcal{D}} r(p_0, \delta)$ minimizes $\rho_y(p_0,a)$ at every $y \in Y$, as  
\begin{align}
    \inf_{\delta \in \mathcal{D}} r(p_0, \delta)  & =   \inf_{\delta \in \mathcal{D}}  \sum_{\theta, y}    L(\theta, \delta(y))   \pi(y|\theta) p_0(\theta)  \nonumber \\  
     & =   \inf_{\delta \in \mathcal{D}}  \sum_{y}  \mu^\pi(y)  \sum_\theta  L(\theta, \delta(y))  p_y^\pi(\theta)\nonumber \\ 
    & = \sum_y \mu^\pi(y) \inf_{a \in A} \sum_\theta L(\theta, a) p_y^\pi(\theta),  \nonumber 
\end{align}
where the second equality holds because  $\pi(y|\theta)p_0(\theta) = \mu^\pi(y) p_y^\pi(\theta)$ for all $(y,\theta)$.

\subsection{Gamma-minimax and Conditional Gamma-minimax}


In the setting with ambiguous prior beliefs described by some prior set $\mathcal{P}_0 \in \K(\Delta \Theta)$, two notions of optimality exist. One takes an ex ante perspective and evaluates decision rules according to the minimax analog of Bayes risk, while the other takes an interim perspective and evaluates actions conditional on the realization of the sample space.  We follow \cite{vidakovic2000gamma} and \cite{Giacomini_Kitagawa_Read_2025} by referring to them as the Gamma-minimax and conditional Gamma-minimax criteria, respectively.

\begin{definition}[Gamma-minimax]
  Given some $\mathcal{P}_0 \in \K(\Delta \Theta)$, the Gamma-minimax decision rule is 
\begin{align}
    \delta^* = \arginf_{\delta \in \mathcal{D}} \sup_{p_0 \in \mathcal{P}_0}  r(p_0, \delta). \nonumber
\end{align}
\end{definition}

\begin{definition}[Conditional Gamma-minimax]
    Given some  $\mathcal{P}_0 \in \K(\Delta \Theta)$ and $y \in Y$, the conditional Gamma-minimax action is 
   \begin{align}
          a^* = \arginf_{a \in A} \sup_{p_0 \in \mathcal{P}_0} \rho_y(p_0, a). \nonumber  
   \end{align}
\end{definition}

Note that the Gamma-minimax rule coincides with the Bayes rule whenever $\mathcal{P}_0$ is a singleton. However, for non-singleton $\mathcal{P}_0$,   the Gamma-minimax rule does not inherit the time consistency of the Bayes rule \citep[e.g.,][]{vidakovic2000gamma, Stoye_2012_AR,  Giacomini_Kitagawa_Read_2025}. In the context of our previous discussion of rectangularity in  Remark \ref{remark_rectangularity},  observe that the Gamma-minimax rule can be rewritten as 
\begin{align}
    \delta^*  & = \arginf_{\delta \in \mathcal{D}} \sup_{ \gamma \in \mathcal{P}_0 \otimes \pi } \sum_{\theta, y} L(\theta, \delta(y)) \gamma(\theta, y). \nonumber  
\end{align}
Since the conditional Gamma-minimax criterion corresponds to a DM who makes choices after observing the  realization of the sample space $Y$, the dynamic inconsistency of $\delta^*$ then follows directly as a consequence of the fact that the joint prior set $\mathcal{P}_0 \otimes \pi$ is \emph{not} $\{\Theta \times \{y\}\}_{y\in Y}$-rectangular for  non-singleton $\mathcal{P}_0$.\footnote{Recall that this is because $\mathcal{P}_0 \otimes \pi$ is $\{ \{\theta\} \times Y\}_{\theta \in \Theta}$-rectangular, and only singletons are rectangular with respect to both partitions \citep{Ellis_GEB_2018}.}

In principle, the prior set $\mathcal{P}_0 \in \K(\Delta \Theta)$ and experiment $\pi : \Theta \to \Delta(Y)$ impose no structure on the manner in which the joint prior set over $\Theta \times Y$ ought to be created. That is, time consistency of the optimal decision rule can  be restored by simply redefining the optimality criterion and taking the maximum expected loss over $\gamma \in \mathcal{P}_Y^\pi \otimes \mu^\pi$, the joint prior set rectangular with respect to the partition $\{ \Theta \times \{y\}\}_{y \in Y}$.  However, as alluded to in Remark \ref{remark_rectangularity} and illustrated in the next section, the ex ante valuation of actions under this optimality criterion suffers from a different shortcoming---except when the  prior set is partially identified.


\subsection{New  Optimality Criterion Under Partial Identification}

Suppose the DM has access to a $\Phi$-measurable  experiment $\pi : \Theta \to \Delta(Y)$ that set-identifies $\Theta$ up to the reduced-form parameters $\Phi$.  She performs robust Bayesian inference à la \cite{Giacomini_Kitagawa_ECTA_2021}, such that her prior beliefs are described by the prior set $\mathcal{P}_0 = \sum_\phi \tau_0(\phi) \mathcal{P}_\phi$ for some reduced-form prior $\tau_0 \in \Delta(\Phi)$ and $\mathcal{P}_\phi \subseteq \Delta(\phi)$, $\dim(\mathcal{P}_\phi)=|\phi|-1$ for all $\phi \in \Phi$.\footnote{Technically, this setup is slightly more general than that of \cite{Giacomini_Kitagawa_ECTA_2021} because we do not impose the full ambiguity assumption.} Recall that Lemma \ref{lemma_consistent_PI} then implies that the experiment $\pi$ is consistent with $\mathcal{P}_0$.


As discussed in the previous section,  the following  criterion of ex ante optimality   restores the time consistency of Bayes rules to the robust Bayesian setting. The definition only modifies the Gamma-minimax criterion by changing how the joint prior set is constructed over the extended space $\Theta \times Y$, and it reduces to Bayes rule for singleton  $\mathcal{P}_0$.

\begin{definition}[Gamma${}^*$-minimax]
 Given some $\mathcal{P}_0 \in \K(\Delta \Theta)$, the Gamma${}^*$-minimax decision rule is
    \begin{align}
        \delta^* = \arginf_{\delta \in \mathcal{D}} \sup_{\gamma \in \mathcal{P}_Y^\pi \otimes \mu^\pi } \sum_{\theta, y} L(\theta, \delta(y))  \gamma(\theta, y). \nonumber
    \end{align}
\end{definition}

 Although the  Gamma${}^*$-minimax rule coincides with the conditional Gamma minimax action, it suffers a different inconsistency: the ex ante valuation of constant decision rules, i.e., actions, do not coincide with the maximum Bayes risk over $\mathcal{P}_0$.  This is because 
\begin{align}
   \inf_{\delta \in \mathcal{D}_A}  \sup_{\gamma \in \mathcal{P}_Y^\pi \otimes \mu^\pi} \sum_{\theta, y} L(\theta, \delta(y)) \gamma(\theta,y ) & =  \inf_{a \in A}  \sup_{\gamma \in \mathcal{P}_Y^\pi \otimes \mu^\pi} \sum_{\theta} L(\theta, a) \sum_{y} \gamma(\theta,y )    \nonumber \\
   & = \inf_{a \in A} \sup_{p \in \sum_y \mu^\pi(y) \mathcal{P}_y^\pi }  \sum_{\theta} L(\theta, a) p(\theta) \nonumber \\ 
   & \geq \inf_{a \in A} \sup_{p_0 \in \mathcal{P}_0 }  \sum_{\theta} L(\theta, a) p_0(\theta), \nonumber 
\end{align}
where the last inequality holds because  $\sum_y \mu^\pi(y) \mathcal{P}_y^\pi \supseteq \mathcal{P}_0$. Aumann plausibility of $\mu^\pi$ is then precisely the condition that requires this relation to hold with equality.

Recall from Theorem \ref{thm_iff_first} that the information structure induced by any consistent experiment is Aumann-plausible if and only if the  underlying prior set is partially identified. The Gamma${}^*$-minimax criterion thus restores the time consistency of Bayesian decision-making, while remaining compatible with the prior set $\mathcal{P}_0$, in the robust Bayesian paradigm under partial identification.\footnote{As the arrival of information is the only source of dynamic inconsistency, Corollary 3 of \cite{Hayashi_2011_TD} implies that the same result holds for the corresponding minimax regret criterion.}

\begin{remark}
    As remarked in \cite{kitagawa2012gamma_minimax}, there is no consensus on whether one should condition on data when solving the Gamma-minimax problem. Because the Gamma${}^*$-minimax rule---which remains as faithful to the prior set $\mathcal{P}_0$ as the Gamma-minimax rule when  $\mathcal{P}_0$ is partially identified---coincides with the conditional Gamma-minimax action at every $y \in Y$, we interpret our results as providing an argument in favor of the conditional Gamma-minimax criterion when performing robust Bayesian inference with set-identified models.
\end{remark}

\section{Application: Partially Identified Persuasion} \label{section_BP}

\subsection{Preliminaries}

Receiver agrees to be ``persuaded" by Sender in a persuasion game à la \cite{KG_bayesian_persuasion} because any non-trivial experiment is more informative, and thus more valuable,  than the trivial experiment that leads to no updating of the prior.  We believe this reflects two important features that microfound persuasion:  
\begin{enumerate}
    \item[(i)]  Receiver's expected payoff from choosing her action after  the realization of the experiment is (weakly) greater than that her expected payoff at the prior;
    \item[(ii)] Receiver's valuation of actions, i.e, singleton action sets, is the same regardless of whether she agrees to be persuaded.
\end{enumerate}
The main contribution of this section is to develop a theory of persuasion that retains both properties, while allowing for the presence of ambiguity that can be rationalized by restrictions on the conceivable collection of available data.   


Continuing with the notation of the previous sections, suppose that both Sender and Receiver share a common prior set $\mathcal{P}_0  \in \K(\Delta \Theta)$ that is \emph{maximally partially identified} by  some non-trivial $(\tau, \Phi)$.\footnote{When $\mathcal{P}_0 = \{p_0\}$, this implies that $p_0$ is in the interior of $\Delta(\Theta)$ because $\tau$ has full support.} Sender and Receiver have  utility functions $v: A \times \Theta \to \R$ and $u: A \times \Theta \to \R$, respectively, that depend on the realization of $\Theta$ and the action $a \in A$ that Receiver takes, and we assume that both agents are ambiguity averse  à la \cite{GS_MEU}.  Before the realization of $\Theta$, Sender can choose an experiment $\pi : \Theta \to \Delta(Y)$ \emph{consistent} with  $\mathcal{P}_0$ to persuade Receiver. After observing the signal realization $y \in Y$, Receiver takes an action from the set $a^*(\mathcal{P}_y) = \argmax_{a \in A} \min_{p \in \mathcal{P}_y} \E_{p}[u(a,\theta)]$, where $\mathcal{P}_y$ is the prior-by-prior update of $\mathcal{P}_0$.  As in \cite{KG_bayesian_persuasion}, our solution concept is Sender-preferred subgame perfect equilibrium. 


\begin{remark}[Consistency]
The consistency constraint on  Sender admits several interpretations.  In the context of Remark  \ref{remark_consistency_SO}, consistency captures the implicit assumption underlying the multiple priors model about the absence of information that can allow Receiver to distinguish between different priors in the prior set. In the context of $\mathcal{P}_0$ being partially identified by $(\tau, \Phi)$, Lemma \ref{lemma_consistent_PI} implies that consistency is equivalent to restricting Sender to the collection of $\Phi$-measurable experiments. As we show below in Proposition \ref{prop_properties}, consistency is also equivalent to imposing properties (i) and (ii) as additional constraints on Sender's problem.
\end{remark}

Suppose Sender chooses some experiment $\pi$, and let $\mu \in \Delta(\K (\Delta \Theta))$ be the corresponding information structure induced over the posterior sets $\{\mathcal{P}_y\}_{y \in Y}$. The expected payoffs of Sender and Receiver are, respectively,
\begin{align}
    V(A, \mu) & \equiv \E_{\mu} \left[  \min_{p \in \mathcal{P}_y} \E_p[ v(\hat a (\mathcal{P}_y) , \theta) ]  \right] \nonumber \\
    U(A, \mu) &  \equiv  \E_\mu\left[ \min_{p \in \mathcal{P}_y} \E_p[ u(\hat a (\mathcal{P}_y) , \theta)] \right], \nonumber 
\end{align}
where  $\hat a( \mathcal{P} )$ denotes Receiver's equilibrium action at the belief $\mathcal{P} \in \K(\Delta \Theta)$.  Let $\mu_0$ denote the degenerate information structure  that maps  deterministically to $\mathcal{P}_0$.  In symbols, properties (i) and (ii) then translate to  $U(A,\mu) \geq U(A,\mu_0)$  and $U(\{a\}, \mu) = U(\{a\}, \mu_0)$ for every action set $A$ and action $a$.   The following result  shows  that the Aumann plausibility of $\mu$ is equivalent to these two properties.

\begin{prop} \label{prop_properties}
 For any $\mu \in \Delta(\K (\Delta \Theta))$, the following are equivalent:
    \begin{enumerate}
        \item  For every action set $A$, action $a$, and state-dependent utility $u$,
        \begin{enumerate}
            \item[{(i)}] $U(A, \mu) \geq U(A, \mu_0)$; 
            \item[{(ii)}] $U(\{a\}, \mu) = U(\{a\}, \mu_0)$. 
        \end{enumerate}
        \item $\mu$ is Aumann-plausible with respect to $\mathcal{P}_0$. 
    \end{enumerate}
\end{prop}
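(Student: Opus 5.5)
The plan is to express everything through support functions and then use the fact that closed convex sets are determined by their support functions. For a compact convex $\mathcal{P}$ and a vector $f \in \R^\Theta$, let $\sigma_{\mathcal{P}}(f) \equiv \min_{p \in \mathcal{P}} \sum_\theta f(\theta) p(\theta)$ (a lower support function). The key algebraic fact is that Minkowski sums with nonnegative weights add support functions:
\begin{align}
\sigma_{\sum_y \mu(y)\mathcal{P}_y}(f) = \sum_y \mu(y) \sigma_{\mathcal{P}_y}(f). \nonumber
\end{align}
Closed convex sets in $\Delta(\Theta)$ coincide if and only if their support functions agree for all $f \in \R^\Theta$.

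\textbf{(2) $\Rightarrow$ (1).} Take a singleton action set $\{a\}$, so that $\hat a \equiv a$ and $f = u(a,\cdot)$. Then $U(\{a\},\mu) = \sum_y \mu(y)\sigma_{\mathcal{P}_y}(f) = \sigma_{\sum_y\mu(y)\mathcal{P}_y}(f) = \sigma_{\mathcal{P}_0}(f) = U(\{a\},\mu_0)$, which is (ii). For (i), let $a_0 = \hat a(\mathcal{P}_0)$ be Receiver's prior-optimal action. At each posterior set $\mathcal{P}_y$, the equilibrium action $\hat a(\mathcal{P}_y)$ is maxmin optimal, so it yields at least $\sigma_{\mathcal{P}_y}(u(a_0,\cdot))$. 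Averaging over $y$ and applying (ii) with $a = a_0$ gives $U(A,\mu) \geq U(\{a_0\},\mu) = U(\{a_0\},\mu_0) = U(A,\mu_0)$. Since $U$ only involves Receiver's payoff, the argument does not depend on which maximizer is selected.

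\textbf{(1) $\Rightarrow$ (2).} Only (ii) is needed. Since $u$ is arbitrary, every $f \in \R^\Theta$ arises as $u(a,\cdot)$ for some singleton action set. Condition (ii) then says $\sigma_{\sum_y\mu(y)\mathcal{P}_y}(f) = \sigma_{\mathcal{P}_0}(f)$ for all $f$. Both sets are closed and convex: the Minkowski sum of finitely many compact convex sets is compact and convex. Closed convex sets with equal support functions are equal, by the separating hyperplane theorem: a point outside one set can be strictly separated by some $f$, which would make the two support functions differ at $f$. Hence $\mathcal{P}_0 = \sum_y \mu(y)\mathcal{P}_y$, which is Aumann plausibility.

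The only step needing care is the separation argument and making sure the two sets are compared in the same ambient space. This is harmless because both are subsets of $\Delta(\Theta)$, and linear functionals on $\R^\Theta$ suffice for the separation. If $\mu$ has infinite support, the Minkowski sum becomes an Aumann integral and one needs its closedness and convexity. I would restrict to the finite support $\{\mathcal{P}_y\}_{y\in Y}$ as in the paper's definition, which avoids the issue.
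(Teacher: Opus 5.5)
Your proposal is correct and follows essentially the same route as the paper. In both arguments, (ii) comes from additivity of the lower support function under Minkowski sums combined with Aumann plausibility. Property (i) is your comparison against the prior-optimal action, which is the paper's step $\max_{a} \E_\mu[\cdot] \leq \E_\mu[\max_{a} \cdot]$. The converse uses only (ii) together with the separating hyperplane theorem. Your explicit remarks on the closedness of the finite Minkowski sum and on independence from Receiver's tie-breaking are harmless refinements that the paper leaves implicit.
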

\begin{proof}
    See Appendix \ref{Appendix_properties}.  
\end{proof}

Because $\mathcal{P}_0$ is maximally partially identified,  the equivalence characterization between consistent experiments and Aumann-plausible information structures from Theorem \ref{thm_main_iff} applies. Thus, restricting Sender to the collection of consistent experiments is equivalent to requiring that properties (i) and (ii) are satisfied, and Receiver's decision to participate in the persuasion game can be microfounded.\footnote{Properties (i) and (ii)  can be also shown  by invoking Theorem \ref{thm_blackwell_generalized} to compare the trivial experiment to the one chosen by Sender. Proposition \ref{prop_properties} proves a more general result, as  information structures can be Aumann-plausible with respect to prior sets that are not partially identified.} 
 
\subsection{Reducing to Bayesian Persuasion}

As in Section \ref{section_value_of_information}, the main strategy for solving the persuasion game will be to express the problem in terms of the reduced-form parameters.  Because the prior set $\mathcal{P}_0$ is maximally partially identified by $(\tau_0, \Phi)$,  Lemma \ref{lemma_PI_updating_equivalence} implies that the posterior sets $\mathcal{P}_y$ are partially identified by the reduced-form posteriors $\tau_y$ over the same partition.  Sender's expected payoff can then be written as
 \begin{align}
    V(A,\mu)  & = \E_\mu \left[ \sum_{\phi} \tau_y(\phi) \min_{p \in \mathcal{P}_\phi} \E_p[v( \hat a(\tau_y), \theta)]  \right], \nonumber 
\end{align}
where $\hat a(\tau_y) \equiv \hat a(\mathcal{P}_y)$ and $\mu$ is now interpreted as an information structure over the reduced-form posteriors. Receiver's expected payoff is expressed analogously.

Define $\underline{v}( a, \phi) \equiv \min_{p \in \mathcal{P}_\phi} \E_p[v( a, \theta)]$ and  $\underline{u}(a, \phi) \equiv \min_{p \in \mathcal{P}_\phi} \E_p[u( a, \theta)]$, and interpret these as Sender and Receiver's utility functions, respectively, over $A \times \Phi$. Observe that our formulation of the persuasion game can now be entirely expressed using $\Phi$ and the reduced-form measures $\tau_0, \tau_y \in \Delta(\Phi)$. Take $\hat v( \tau) \equiv \sum_\phi \tau(\phi) \underline{v} (\hat a(\tau), \phi)$, such that Sender's expected payoff is $V(A,\mu) = \E_\mu[ \hat v(\tau_y) ]$. The characterization of Sender's problem in the following proposition then corresponds to that in the standard setting of Bayesian persuasion.

\begin{prop} \label{prop_BP} 
    Sender's optimal payoff is
    \begin{align}
       \sup_{ \mu \in \Delta(\Delta \Phi) } \E_\mu[ \hat v(\tau) ]  \hspace{20pt} \text{\emph{subject to}} \hspace{10pt} \E_\mu[\tau] = \tau_0. \nonumber 
    \end{align}   
\end{prop}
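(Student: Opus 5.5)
We will show that the set of payoffs Sender can attain by choosing a consistent experiment equals the set of values $\E_\mu[\hat v(\tau)]$ over Bayes-plausible distributions $\mu \in \Delta(\Delta \Phi)$ with $\E_\mu[\tau]=\tau_0$. The supremum claim then follows, and the argument is the reduction to \cite{KG_bayesian_persuasion} carried out on the reduced-form state space $\Phi$.

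\emph{Step 1 (experiments to reduced-form distributions).} Fix any experiment $\pi$ consistent with $\mathcal{P}_0$. By Lemma \ref{lemma_consistent_PI}, $\pi$ is $\Phi$-measurable, so we can write $\pi(y|\phi)$. Each $\pi(\cdot|\phi)$ then defines an experiment on the state space $\Phi$ with prior $\tau_0$. Its marginal is $\sum_\phi \pi(y|\phi)\tau_0(\phi) = \mu(y)$, which is the same information structure induced on $\Theta$. By Lemma \ref{lemma_PI_updating_equivalence}, $\mathcal{P}_y = \sum_\phi \tau_y(\phi)\mathcal{P}_\phi$, where $\tau_y$ is the Bayesian update of $\tau_0$. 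Two facts follow:
\begin{itemize}
\item Receiver's maxmin problem at $\mathcal{P}_y$ reduces to $\max_a \sum_\phi \tau_y(\phi)\underline{u}(a,\phi)$, because the minimum over a Minkowski sum of the linear functional separates across summands.
\item Sender's payoff is $\E_\mu[\hat v(\tau_y)]$, as computed in the text.
\end{itemize}
The distribution of $\tau_y$ is Bayes-plausible by the standard martingale property. Hence every consistent experiment yields a value that is feasible in the displayed program.

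\emph{Step 2 (reduced-form distributions to experiments).} Take any $\mu \in \Delta(\Delta\Phi)$ with finite support and $\E_\mu[\tau]=\tau_0$. By the Bayesian splitting result on $\Phi$, there is an experiment $\tilde\pi : \Phi \to \Delta(Y)$ inducing exactly these posteriors with these probabilities. Set $\pi(y|\theta) \equiv \tilde\pi(y|\phi(\theta))$. This $\pi$ is $\Phi$-measurable and hence consistent by Lemma \ref{lemma_consistent_PI}. By Lemma \ref{lemma_PI_updating_equivalence}, it induces the posterior sets $\sum_\phi\tau_y(\phi)\mathcal{P}_\phi$, and by Step 1 Sender's payoff is $\E_\mu[\hat v(\tau)]$. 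Alternatively, we could invoke Theorem \ref{thm_main_iff}(2b) applied to the Aumann-plausible structure over $\{\sum_\phi \tau(\phi)\mathcal{P}_\phi\}$. Aumann plausibility there follows from linearity of Minkowski sums in the weights, $\sum_\tau \mu(\tau)\sum_\phi \tau(\phi)\mathcal{P}_\phi = \sum_\phi \tau_0(\phi)\mathcal{P}_\phi$. The direct construction is simpler, though. If $\mu$ has infinite support, we use the usual Carathéodory argument: the concave closure of $\hat v$ at $\tau_0$ is attained, and its sup is approached, by distributions with at most $|\Phi|$ support points. So restricting to finite support loses nothing in the supremum.

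\emph{Main obstacle.} The delicate point is the tie-breaking rule $\hat a(\tau_y)$, which must be well defined as a function of $\tau_y$ alone. Two issues need care.
\begin{itemize}
\item \emph{Equality of the reduced payoff.} We must check that Receiver's maxmin choice set over $\mathcal{P}_y$ coincides with the argmax of $\sum_\phi \tau_y(\phi)\underline{u}(a,\phi)$. This requires the identity $\min_{p\in\sum_\phi \tau_y(\phi)\mathcal{P}_\phi}\E_p[u] = \sum_\phi\tau_y(\phi)\min_{p\in\mathcal{P}_\phi}\E_p[u]$, which holds by separability of the minimum over Minkowski sums. This gives the equality $V(A,\mu)=\E_\mu[\hat v(\tau_y)]$ under Sender-preferred tie-breaking.
\item \emph{Upper semicontinuity.} Sender-preferred selection makes $\hat v$ upper semicontinuous in $\tau$, exactly as in \cite{KG_bayesian_persuasion}. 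This guarantees the supremum is well posed, and is attained when written with $\sup$.
\end{itemize}
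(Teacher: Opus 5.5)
Your proof is correct, but it takes a different route from the paper's.

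\textbf{The paper's route.} The paper never works with experiments directly. After a Carath\'eodory reduction to finite support, it proceeds in three steps:
\begin{enumerate}
\item It rewrites the reduced-form program as a supremum over distributions on $\{\sum_\phi \tau(\phi)\mathcal{P}_\phi : \tau \in \Delta(\Phi)\}$ that are Aumann-plausible with respect to $\mathcal{P}_0$.
\item It uses Lemma~\ref{Appendix_lemma_AP} to enlarge this to all Aumann-plausible $\mu \in \Delta(\K(\Delta\Theta))$. That lemma says every set in the support of such a $\mu$ decomposes this way when $\mathcal{P}_0$ is maximally partially identified.
\item It invokes Theorem~\ref{thm_main_iff} to identify Aumann-plausible structures with consistent experiments.
\end{enumerate}

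\textbf{Your route.} You go directly between consistent experiments on $\Theta$ and experiments on $\Phi$. In one direction you use Lemma~\ref{lemma_consistent_PI} and Lemma~\ref{lemma_PI_updating_equivalence}. In the other you lift a Bayesian splitting on $\Phi$ to a $\Phi$-measurable experiment. This is essentially the construction inside the proof of Theorem~\ref{thm_main_iff}, used without the detour through posterior sets.

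\textbf{What each buys.} Your argument uses only partial identification. It therefore shows that maximality is not needed for Proposition~\ref{prop_BP} as stated, where Sender's strategy space is the set of consistent experiments. The paper's argument yields something extra: Sender's value also equals $\sup V(A,\mu)$ over all Aumann-plausible information structures on posterior sets. This is the ``belief-based'' reformulation analogous to \cite{KG_bayesian_persuasion}, and that part genuinely requires maximal partial identification. Without it, some Aumann-plausible splittings of a non-extreme $\mathcal{P}_\phi$ cannot be generated by any consistent experiment.

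Your handling of the finite-support reduction and of Sender-preferred tie-breaking is adequate, and somewhat more explicit than the paper's.
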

\begin{proof}
    See Appendix \ref{Appendix_proof_prop_BP}. 
\end{proof}

Define the curve $K = \{ (\tau,  \hat v(\tau)) : \tau \in \Delta(\Phi)\}$. The concavification of $\hat v$ over $\Delta(\Phi)$ is $\overline{V}(\tau) \equiv \sup \{ z : (\tau, z) \in \co(K) \}$, and it follows directly from \cite{KG_bayesian_persuasion} that Sender's optimal payoff is
\begin{align}
    \overline{V}(\tau_0) = \sup \{ z: (\tau_0, z) \in \text{co}(K) \}. \nonumber
\end{align}
All existing equilibrium characterizations in the Bayesian setting also apply directly to our framework, once everything is interpreted over $\Phi$.



\begin{example}[Prosecutor-Judge]
    Suppose a prosecutor (Sender) wishes to persuade a judge (Receiver) to convict a defendant on trial. The defendant is either guilty of \emph{purposefully}  committing the crime, guilty of \emph{accidentally} committing the crime, or innocent, as captured by the state space $\Theta = \{GP, GA, I\}$. The action set is $A = \{\text{Convict}, \text{Acquit}\}$, and the payoff matrix is as follows for $g \in [0,1].$
    \begin{align}
    \begin{array}{cccc} 
      (S, \, R)  &  { GP } &  { GA }  &  { I } \\
        \text { Convict } & (1,1)  &  (1, g) &  (1,0) \\
        \text { Acquit } & (0,0)  & (0, 1-g) &  (0,1)
        \end{array} \nonumber
\end{align}

Suppose the prior set is  $  \mathcal{P}_0 = \left\{ p \in \Delta(\Theta): p(GP) + p(GA) = \tau_0, \,  p(I) =  1- \tau_0 \right\}$ for some $\tau_0 \in [0,1]$, reflecting the presence of probabilistic prior beliefs about the defendant’s guilt, but complete uncertainty about whether the crime was purposeful or accidental, conditional on guilt. Note that $\mathcal{P}_0$ is partially identified by $(\tau_0, \Phi)$ for $\Phi = \{ \{GP, GA\}, \{I\} \}$, where $\tau_0$ indexes the mass placed on $\{GP, GA\}$. Receiver's equilibrium action and Sender's expected payoff at belief $\tau$ are  
    \begin{align}
         \hat a(\tau) & = \begin{cases}
             \text{Convict} & \text{if } \tau \geq \frac{1}{1+ g} \\
             \text{Acquit} & \text{if } \tau < \frac{1}{1+g}. 
         \end{cases}, \hspace{30pt} \hat v(\tau)  = \begin{cases}
             1 & \text{if } \tau \geq \frac{1}{1+g} \\
             0 & \text{if } \tau < \frac{1}{1+g},
         \end{cases} \nonumber
    \end{align}
   and Sender's optimal payoff is given by the concavification $\overline{V}(\tau_0)$. As illustrated in Figure \ref{fig_BP}, this coincides with that of  \cite{KG_bayesian_persuasion} when $g = 1$. 
   
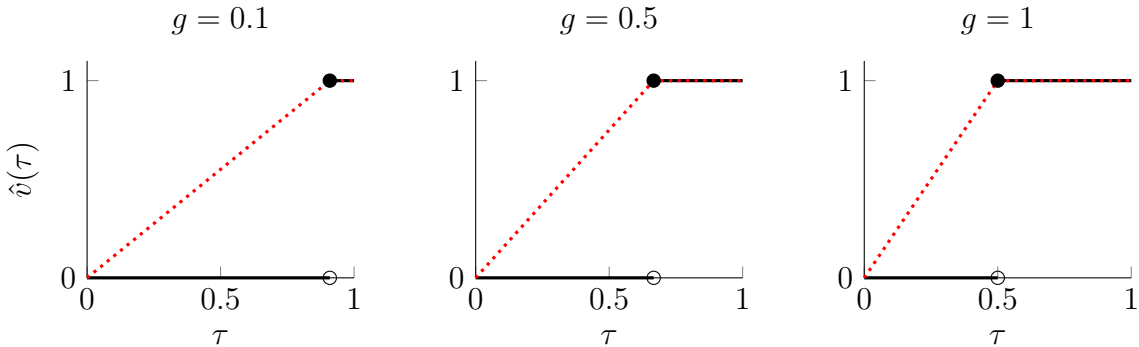
\begin{figure}[H] 
\centering
\begin{tikzpicture}
\begin{groupplot}[
  group style={group size=3 by 1, horizontal sep=1.6cm},
  width=0.32\textwidth, height=0.28\textwidth,
  xmin=0, xmax=1, ymin=0, ymax=1.1,
  xtick={0,0.5,1}, ytick={0,1},
  xlabel={$\tau$}, ylabel={$\hat v(\tau)$},
  axis x line*=bottom,
  axis y line*=left,
  enlargelimits=false, clip=false
]

\nextgroupplot[title={$g=0.1$}]
\def\tauzero{0.90909}
\addplot[very thick] coordinates {(0,0) (\tauzero,0)};
\addplot[very thick] coordinates {(\tauzero,1) (1,1)};
\addplot[only marks, mark=o, mark size=2.5pt] coordinates {(\tauzero,0)};
\addplot[only marks, mark=*, mark size=2.5pt] coordinates {(\tauzero,1)};
\addplot[red, very thick, dotted] coordinates {(0,0) (\tauzero,1) (1,1)};

\nextgroupplot[title={$g=0.5$}, ylabel={}]
\def\tauzero{0.66667}
\addplot[very thick] coordinates {(0,0) (\tauzero,0)};
\addplot[very thick] coordinates {(\tauzero,1) (1,1)};
\addplot[only marks, mark=o, mark size=2.5pt] coordinates {(\tauzero,0)};
\addplot[only marks, mark=*, mark size=2.5pt] coordinates {(\tauzero,1)};
\addplot[red, very thick, dotted] coordinates {(0,0) (\tauzero,1) (1,1)};

\nextgroupplot[title={$g=1$}, ylabel={}]
\def\tauzero{0.5}
\addplot[very thick] coordinates {(0,0) (\tauzero,0)};
\addplot[very thick] coordinates {(\tauzero,1) (1,1)};
\addplot[only marks, mark=o, mark size=2.5pt] coordinates {(\tauzero,0)};
\addplot[only marks, mark=*, mark size=2.5pt] coordinates {(\tauzero,1)};
\addplot[red, very thick, dotted] coordinates {(0,0) (\tauzero,1) (1,1)};

\end{groupplot}
\end{tikzpicture}
\caption{Plots of $\hat v(\tau)$ (black) and its concavification $\overline{V}(\tau)$ (red) for different $g$.}
\label{fig_BP}
\end{figure}
 
\end{example}

\section{Discussion} \label{section_discuss}


In this section, we discuss the connection between  partially identified ambiguity and the axiomatic literature on identifying such beliefs from preferences. For simplicity, we restrict our discussion to the case of full ambiguity, such that prior sets admit the decomposition $\mathcal{P}_0 = \sum_\phi \tau(\phi) \Delta(\phi)$.

 
 Let $\mathcal{F}$ be the collection of acts $f: \Theta \to X$ mapping to some outcome space $X$. The DM's preferences over $\mathcal{F}$ is defined by the binary relation $\succsim$, and we assume that the utility function $u: X \to \R$ represents her preferences over the collection of constant acts that deterministically map to some $x \in X$.  For $f,g \in \F$, we write $f \geq g$ if $f(\theta) \succsim g(\theta)$ for all $\theta \in \Theta$.\footnote{We use the standard abuse of notation here by writing $x \succsim y$ for $x,y \in X$ to denote preferences between constant acts that map deterministically to outcomes.}   If we assume that DMs with the prior set $\mathcal{P}_0$ evaluate alternatives according to the maxmin criterion, the corresponding utility representation for $\succsim$ is
\begin{align}
    U_{MEU(\tau, \Phi)}(f) &  \equiv \min_{p \in \mathcal{P}_0} \E_p[u(f)]  = \sum_{\phi \in \Phi} \tau(\phi) \min_{ \theta \in \phi} u(f(\theta)).  \nonumber
\end{align}

\emph{Partially Specified Probabilities.}  In the Anscombe-Aumann setting (i.e., $X$ is the collection of  lotteries over some finite prize space), \cite{Lehrer_AEJ_2012} axiomatizes the  utility representation 
\begin{align}
    U_{Lehrer}(f) & = \sup\left\{ \sum_{i \in I} \lambda_i  \E_p[u(g_i)]  : \sum_{i \in I} \lambda_i g_i \leq f, \,\, \lambda_i \in \R \right\} , \nonumber 
\end{align}
where $p \in \Delta(\Theta)$  and $\mathcal{G} = \{g_i\}_{i \in I}$ spans the collection of acts for which the DM knows the value of $\E_p[u(g_i)]$. Let $\1_\phi \equiv \1_{ \{\cdot \, \in \phi\} }$ be the indicator variable corresponding to each reduced-form parameter $\phi \in \Phi$, in which case $U_{Lehrer}(\cdot) = U_{MEU(\tau, \Phi)}(\cdot)$ when $\mathcal{G} = \{ \1_\phi\}_{\phi \in \Phi}$.

\emph{Choquet expected utility (CEU) with inner probabilities.} In the Savage setting, \cite{Zhang_ET_2002} axiomatizes a special case of the CEU representation of  \cite{Schmeidler_CEU}  where the corresponding capacity is an inner measure. A subjective $\lambda$-system $\mathcal{L}$ of ``unambiguous" events is recovered from preferences, and the DM has SEU preferences over the collection of $\mathcal{L}$-measurable acts $\F^{\mathcal{L}}$. Letting $p$ be the corresponding SEU measure, the representation takes the form 
\begin{align}
    U_{Zhang}(f) = \sup \left\{ \int g \, \dd p : g \leq f, \, \, g \in \F^{\mathcal{L}}  \right\}.  \nonumber 
\end{align}
In the special case where $\mathcal{L}$ is the $\sigma$-algebra generated by the reduced-form parameters $\Phi$, it follows that $U_{MEU(\tau, \Phi)}(\cdot) = U_{Zhang}(\cdot)$.\footnote{Strictly speaking,  \cite{Zhang_ET_2002} requires the SEU measure $p$ on $\mathcal L$ to be convex-ranged, which cannot hold when $\mathcal{L}$ is the $\sigma$-algebra generated by a finite $\Phi$. This can be addressed by passing to an atomless refinement: replace each $\phi$ with $\phi\times[0,1]$, let $\lambda$ be Lebesgue measure, and take $p(\phi,t)=\tau(\phi)\lambda(t)$. Then $p$ is convex-ranged, and for $\Theta$-measurable acts, $U_{MEU(\tau,\Phi)}(\cdot) = U_{Zhang}(\cdot)$. This resembles the finite-partition reduction in \cite{GP_EUU}.}



Finally, we conclude by moving to a richer domain of acts in which the DM's prior set can be explicitly interpreted as the Aumann expectation of the posterior sets obtained from fully informative consistent experiments. 

\emph{Set-identifiable smooth representation.} Suppose the DM envisages infinitely many independent realizations from the consistent experiment $\pi : \Theta \to \Delta(Y)$, such that the joint space is $\Theta \times Y^\infty$. By the strong law of large numbers, the realization of $y^\infty$ identifies the reduced form parameter $\phi_{y^\infty} \in \Phi$, and thus the true marginal over the data space $\pi(\cdot|\phi_{y^\infty})$.  In the language of \cite{Denti_Pomatto_WP_2020}, the realization of $(\theta, y^\infty)$ identifies the set of  joint laws
\begin{align*}
   \Gamma_{(\theta, y^\infty)} = \left\{  \gamma(\cdot, \ast) =  p(\cdot)  \pi(\ast | \phi_{y^\infty})^{\otimes \infty}  : p \in \Delta(\phi_{y^\infty}) \right\},
\end{align*}
and the corresponding collection of marginals over $\Theta$ is  $\int_{Y^\infty} \Gamma_{(\theta, y^\infty)} \equiv \Delta(\phi_{y^\infty})$.

Let $\mathfrak{G} \equiv \{ \Gamma_{(\theta, y^\infty)}\}_{ (\theta, y^\infty) \in \Theta \times Y^\infty}$ be the collection of identified sets. Taking preferences over $\Theta \times Y^\infty$-measurable acts as the primitive, \cite{Denti_Pomatto_WP_2020}  axiomatizes the set-identifiable smooth representation given by
\begin{align*}
   U_{DP}(f) = \int_{ \mathfrak{G}} \psi \left( \min_{\gamma \in \Gamma  } \int_{ \Theta \times Y^\infty }  u(f) \, \dd \gamma  \right)  \dd \mu(\Gamma),
\end{align*}
where $\psi : \R \to \R$ is continuous and strictly increasing, and $\mu$ is  a non-atomic measure over $\mathfrak{G}$.\footnote{Although $\mathfrak{G}$ is finite when $\Phi$ is finite, we can address the issue the same way as in the case of $U_{Zhang}(\cdot)$ by enriching the state space.}  If $\psi(\cdot)$ is affine and $f$ is $\Theta$-measurable, we have
\begin{align*}
     \int_{\mathfrak{G}}  \left( \min_{\gamma \in \Gamma  } \int_{ \Theta \times Y^\infty }  u(f) \, \dd \gamma \right)   \dd \mu(\Gamma) & =     \int_{\mathfrak{G}}  \left( \min_{ p \in  \int_{Y^\infty} \Gamma  } \int_{ \Theta}  u(f) \, \dd p \right)   \dd \mu(\Gamma) \\ 
     &  = \sum_\phi \tau(\phi) \min_{p \in \Delta(\phi)} \int_\Theta u(f) \, \dd p
\end{align*}
for $\tau(\phi) \equiv \mu ( \{  \Gamma_{(\theta, y^\infty)} \in \mathfrak{G} :  \phi_{y^\infty} = \phi \} )$. In this special case, we then see that $U_{MEU(\tau ,\Phi)}(\cdot) = U_{DP}(\cdot)$ over the collection of $\Theta$-measurable acts.

\section{Conclusion} \label{section_conclusion}

This paper presents a theory of partially identified ambiguity, in connection with the econometrics literature on robust Bayesian inference under partial identification. We propose a novel interpretation of the multiple priors model as imposing a consistency constraint on the collection of conceivable experiments, and introduce Aumann plausibility as  a set-valued generalization of the martingale property of beliefs. Within this framework, we extend the standard equivalence between Blackwell experiments and Bayes-plausible distributions over posterior beliefs to the setting of ambiguity, deriving a sharp characterization of the necessary and sufficient conditions on the class of prior sets under which this equivalence holds. We also show that the more valuable order induced by maxmin expected utility under partially identified ambiguity corresponds to the more informative order over Blackwell experiments. Finally, we highlight the value of our results across various domains by applying our theory of partially identified ambiguity to the Gamma minimax problem in robust Bayesian analysis, as well as an  information design problem in the context of a persuasion game.

\appendix

\section{Omitted Proofs} \label{Appendix_proofs}

\setcounter{lemma}{0}%
\renewcommand{\thelemma}{A.\arabic{lemma}}


For any partition $\Phi$ of $\Theta$, recall that $\Delta(\phi) \subseteq \Delta(\Theta)$ denotes the collection of measures that are supported on the partition cell $\phi \in \Phi$. If an experiment $\pi: \Theta \to \Delta(Y)$ is $\Phi$-measurable, we write $\pi(\cdot| \phi)$ to denote $\pi(\cdot|\theta)$ for any $\theta \in \phi$ without loss. The following preliminary result will be invoked multiple times throughout the remainder of this section. 

\begin{lemma} \label{Appendix_lemma_inter}
    Suppose $\mathcal{P} = \sum_{\phi \in \Phi} \tau(\phi) \mathcal{P}_\phi$, where $\tau \in \Delta(\Phi)$ and $\mathcal{P}_\phi \subseteq \Delta(\phi)$ for each $\phi \in \Phi$. Then,
    \begin{enumerate}
        \item $\sum_{\theta \in \phi} p(\theta) = \tau(\phi)$ for any $p \in \mathcal{P}$ and $\phi \in \Phi$;
        \item If $\pi: \Theta \to \Delta(Y)$ is $\Phi$-measurable, then   $\sum_{\theta \in \Theta} \pi(y|\theta) p(\theta) = \sum_{\phi \in \Phi} \pi(y|\phi) \tau(\phi)$ for any $p \in \mathcal{P}$.
    \end{enumerate}
\end{lemma}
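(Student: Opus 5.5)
The plan is to verify both claims directly from the definition of the Minkowski sum, without invoking any earlier result. Every $p \in \mathcal{P}$ can be written as $p = \sum_{\phi' \in \Phi} \tau(\phi') p_{\phi'}$ for some selection $p_{\phi'} \in \mathcal{P}_{\phi'}$, one element from each summand. I will fix such a representation and work with it throughout.

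For the first claim, fix a cell $\phi$ and sum $p(\theta)$ over $\theta \in \phi$. By linearity this equals $\sum_{\phi'} \tau(\phi') \sum_{\theta \in \phi} p_{\phi'}(\theta)$. Each $p_{\phi'}$ lies in $\Delta(\phi')$, so it is supported on $\phi'$. Because the cells of $\Phi$ are disjoint, the inner sum is $1$ when $\phi' = \phi$ and $0$ when $\phi' \neq \phi$. The double sum therefore collapses to $\tau(\phi)$, which is the first claim.

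For the second claim, group the states in $\sum_\theta \pi(y|\theta) p(\theta)$ according to the cell containing them. Since $\Phi$ partitions $\Theta$, this sum can be rewritten as $\sum_{\phi} \sum_{\theta \in \phi} \pi(y|\theta) p(\theta)$. By $\Phi$-measurability, $\pi(y|\theta) = \pi(y|\phi)$ for every $\theta \in \phi$, so the common factor comes out of the inner sum and leaves $\sum_\phi \pi(y|\phi) \sum_{\theta \in \phi} p(\theta)$. Applying the first claim replaces each inner sum by $\tau(\phi)$, which gives the second claim.

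I do not expect any real obstacle here. The only point that needs care is the first step: one has to use the fact that an element of a weighted Minkowski sum decomposes through a selection from the summands. The argument then needs only disjointness of the cells and the support condition $\mathcal{P}_\phi \subseteq \Delta(\phi)$. In particular, it does not use convexity or the dimension condition.
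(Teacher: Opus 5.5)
Your proposal is correct and follows essentially the same route as the paper's proof. Like the paper, you write $p$ as $\sum_{\phi'} \tau(\phi') p_{\phi'}$ for a selection $p_{\phi'} \in \mathcal{P}_{\phi'}$, collapse the sum over each cell using the support condition, and then obtain the second claim by grouping states by cell, using $\Phi$-measurability, and applying the first claim.
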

\begin{proof}
    \emph{1.} Fix some $\phi \in \Phi$ and $p \in \mathcal{P}$.  By definition of $\mathcal{P}$, there exists some $\{p_{\phi'}\}_{\phi' \in \Phi} \in \{\mathcal{P}_{\phi'}\}_{\phi' \in \Phi}$ such that
    \begin{align}
        \sum_{\theta \in \phi} p(\theta) &  = \sum_{\theta \in \phi}  \sum_{\phi' \in \Phi} \tau(\phi') p_{\phi'}(\theta)   = \sum_{\theta \in \phi} \tau(\phi) p_\phi(\theta) = \tau(\phi), \nonumber 
    \end{align}
    where the second equality  follows from noting that $p_{\phi}(\theta) = 0$ for all $\theta \not \in \phi$.
    
    \emph{2.} By $\Phi$-measurability of $\pi$ and the previous result, we have
    \begin{align}
        \sum_{\theta \in \Theta} \pi(y | \theta) p(\theta) & =  \sum_{\phi \in \Phi} \pi(y | \phi) \sum_{\theta \in \phi} p(\theta)    = \sum_{\phi \in \Phi} \pi(y | \phi) \tau (\phi), \nonumber
    \end{align}
    as desired.
\end{proof}

\subsection{Proof of Lemma \ref{lemma_existence_consistent_exp}} \label{Appendix_proof_lemma_existence_consistent}

Define  $D \equiv \text{span}\{ p - p' : p, p' \in \mathcal{P}\}$ and let $V \equiv D^\perp$ denote its orthogonal complement.  Suppose $\pi: \Theta \to \Delta(Y)$ is a  non-trivial experiment consistent with $\mathcal{P}$. The definition of consistency implies $\sum_{\theta} \pi(y|\theta) [p(\theta) - p'(\theta) ] = 0$ for every $p, p' \in \mathcal{P}$ and $y \in Y$.   In other words, $V \supseteq \text{span} \{\textbf{1}, \{\pi(y|\cdot)\}_{y \in Y} \}$, and we know that  $\dim(V) \geq 2$ because $\{\pi(y|\cdot)\}_{y \in Y}$  is not co-linear with the constant vector $\textbf{1} \in \R^\Theta$ for at least one $y \in Y$ by non-triviality. The desired claim then follows from noting that $\dim(\mathcal{P}) = \dim(D) = \dim(\Theta) - \dim(V) = \dim(\Delta \Theta) - \dim(V) + 1$.

Next, fix $Y = \{y_1, y_2\}$ and suppose $\dim(\mathcal{P}) \leq \dim(\Delta \Theta) - 1$, such that $\dim(V) \geq 2$.  Choose some $e \in V$ that is not co-linear with $\textbf{1}$.  Define
    \begin{align}
        \pi(y_1|\theta)   \equiv \frac{1}{2} \textbf{1} + \epsilon e(\theta), \hspace{20pt}  \pi(y_2|\theta) \equiv \frac{1}{2} \textbf{1} - \epsilon e(\theta), \nonumber 
    \end{align}
    for some $\epsilon > 0$ small enough, such that $\pi(\cdot | \theta) \in \Delta(Y)$ for every $\theta \in \Theta$.  By construction, $\pi$ is a non-trivial experiment. That $\pi$ is consistent with $\mathcal{P}$ follows from observing that for any $p, p' \in \mathcal{P}$,
    \begin{align}
        \sum_{\theta} \pi(y_i|\theta) [p(\theta) -p'(\theta)] & = \sum_{\theta} \left( \frac{1}{2} \textbf{1} \pm \epsilon e(\theta) \right) [p(\theta) -p'(\theta)] \nonumber \\
        & =  \frac{1}{2}  \sum_{\theta} \textbf{1} [p(\theta) -p'(\theta)]  \pm  \epsilon \sum_{\theta}  e(\theta)  [p(\theta) -p'(\theta)]   = 0, \nonumber 
    \end{align}
    where the last equality follows from noting that $\textbf{1}, e \in V$ by construction. \hfill $\blacksquare$

\subsection{Proof of Lemma \ref{lemma_consistent_PI}} \label{Appendix_proof_PI}

    \emph{2 $\Rightarrow$ 1.}  Immediate from the second statement of Lemma \ref{Appendix_lemma_inter}.      

    \emph{1 $\Rightarrow$ 2.} Suppose $\pi : \Theta \to \Delta(Y)$ is consistent with $\mathcal{P}$. Since $\mathcal{P}$ is partially identified by  $(\tau, \Phi)$, it admits the decomposition $\mathcal{P} = \sum_\phi \tau(\phi) \mathcal{P}_\phi$.  Fix $\phi \in \Phi$ and choose any $p_\phi, \tilde p_\phi \in \mathcal{P}_\phi$ and $\{p_{\phi'}\}_{\phi' \neq \phi} \in \{ \mathcal{P}_{\phi'} \}_{\phi' \neq \phi}$. Define
    \begin{align}
        p (\cdot) & = \tau(\phi) p_\phi(\cdot) + \sum_{\phi' \neq \phi} \tau(\phi') p_{\phi'} (\cdot), \hspace{20pt} \tilde p  (\cdot)  = \tau(\phi) \tilde p_\phi (\cdot) + \sum_{\phi' \neq \phi} \tau(\phi') p_{\phi'} (\cdot),   \nonumber 
    \end{align}
    which are both elements of  $\mathcal{P}$. Consistency of $\pi$ then implies 
    \begin{align}
       0 & =  \sum_{\theta \in \Theta}  \pi(\cdot | \theta) [p(\theta) - \tilde p(\theta)]  = \tau(\phi)   \sum_{\theta \in \phi} \pi(\cdot | \theta) [ p_\phi(\theta)  - \tilde p_\phi(\theta)].  \nonumber 
    \end{align}
    Since $\tau$ has full support, it follows that $\sum_{\theta \in \phi} \pi(\cdot | \theta) [ p_\phi(\theta)  - \tilde p_\phi(\theta)] = 0$.

    For $\theta', \theta'' \in \phi$, define $d_{\theta', \theta''}(\cdot) = \1_{ \{\cdot \, = \theta'\} } - \1_{ \{\cdot \, = \theta''\} }$. Because $\dim(\mathcal{P}_\phi) = \dim(\Delta(\phi))$, we can express $d_{\theta', \theta''}(\cdot)$ as a linear combination of the elements in $\{p - p' : p,p' \in \mathcal{P}_\phi\}$, which implies
    \begin{align}
        0 & = \sum_{\theta \in \phi} \pi(\cdot | \theta) d_{\theta', \theta''}(\theta) = \pi(\cdot | \theta') - \pi(\cdot | \theta''). \nonumber 
    \end{align}
    This proves that $\pi(\cdot | \theta)$ is constant over $\theta \in \phi$ for every $\phi \in \Phi$, as desired. \hfill $\blacksquare$

\subsection{Proof of Lemma \ref{lemma_PI_updating_equivalence}} \label{Appendix_proof_PI_updating}

       Let $\mu \in \Delta(Y)$ be the information structure induced by $\pi$, and define  $ \tilde{\mathcal{P}}_y \equiv \sum_{\phi} \tau_y(\phi) \mathcal{P}_\phi$. Since $\pi$ is consistent with $\mathcal{P}_0$, Lemma \ref{lemma_consistent_PI} implies that it is $\Phi$-measurable. Then, $\mu(y)  \equiv  \sum_\theta \pi(y|\theta) p_0(\theta) =  \sum_{\phi} \pi(y|\phi)  \tau_0(\phi)$ by Lemma \ref{Appendix_lemma_inter}. For any $p_y \in \mathcal{P}_y$, there exists some $p_0 \in \mathcal{P}_0$ such that $p_y(\cdot) = \frac{\pi(y|\phi(\cdot)) p_0(\cdot)}{\mu(y)}$. Since $\mathcal{P}_0$ is partially identified by $(\tau_0, \Phi)$, we have $p_0(\cdot) = \sum_\phi \tau_0(\phi) p_\phi(\cdot) = \tau_0(\phi(\cdot)) p_{ \phi(\cdot)}(\cdot)$ for some $\{p_\phi\}_{\phi \in \Phi} \in \{\mathcal{P}_\phi\}_{\phi \in \Phi}$. Simplifying, we obtain  
       \begin{align}
           p_y(\cdot) =  \frac{\pi(y|\phi(\cdot)) p_0(\cdot)}{\mu(y)} = \frac{\pi(y|\phi(\cdot)) \tau_0(\phi(\cdot))}{\sum_\phi \pi(y|\phi) \tau_0(\phi)}  p_{ \phi(\cdot) }(\cdot)  = \tau_y(\phi(\cdot)) p_{\phi(\cdot)}(\cdot). \nonumber
       \end{align}
       This implies $p_y = \sum_\phi \tau_y(\phi) p_\phi \in \tilde{\mathcal{P}}_y$, proving that $\mathcal{P}_y \subseteq \tilde{\mathcal{P}}_y$.

       Now suppose $p_y \in \tilde{\mathcal{P}}_y$, such that there exists some $\{p_\phi\}_{\phi \in \Phi} \in \{\mathcal{P}_\phi\}_{\phi \in \Phi}$ satisfying $p_y(\cdot) = \sum_\phi \tau_y(\phi) p_\phi(\cdot) = \tau_y(\phi(\cdot)) p_{\phi (\cdot)}(\cdot)$. The same string of equalities as above implies the existence of some  $p_0 \in \mathcal{P}_0$ such that $p_y$ is the prior-by-prior update of $p_0$ under $\pi$, proving that $\mathcal{P}_y \supseteq \tilde{\mathcal{P}}_y$, as desired.  \hfill $\blacksquare$

\subsection{Proof of Theorem \ref{thm_iff_first}} \label{Appendix_proof_thm_iff_first}

 \emph{1 $\Rightarrow$ 2.} Suppose  $\pi: \Theta \to \Delta(Y)$ is consistent with $\mathcal{P}_0$, such that we can define  $\mu(\cdot) \equiv \sum_{\theta} \pi(\cdot | \theta)p_0(\theta)$ for some $p_0 \in \mathcal{P}_0$ without loss. By Lemma \ref{lemma_PI_updating_equivalence}, the prior-by-prior update of $\mathcal{P}_0$ takes the form $\mathcal{P}_y = \sum_\phi \tau_y(\phi) \mathcal{P}_\phi$ for every $y \in Y$. Then,  
    \begin{align}
        \sum_y \mu(y) \mathcal{P}_y  =  \sum_y \mu(y) \sum_\phi \tau_y(\phi) \mathcal{P}_\phi & =  \sum_\phi \sum_y \mu(y)  \tau_y(\phi) \mathcal{P}_\phi \nonumber \\
        &  =  \sum_\phi  \left( \sum_y \mu(y)  \tau_y(\phi) \right) \mathcal{P}_\phi   \tag{convexity of $\mathcal{P}_\phi$} \\
        & = \sum_\phi  \left( \sum_y  \pi(y|\phi) \tau_0(\phi) \right) \mathcal{P}_\phi   = \sum_{\phi} \tau_0(\phi) \mathcal{P}_\phi   = \mathcal{P}_0, \nonumber 
    \end{align}
    as desired. The third equality is due to  the standard convex analysis fact that if $A$ is a convex set and $\alpha, \beta \geq 0$, then $\alpha A + \beta A = (\alpha + \beta) A$. The existence of a non-trivial consistent experiment follows from observing that  the partition-revealing experiment  $\pi : \Theta \to \Delta(\Phi)$,  $\pi(\phi|\theta) = \1_{ \{\theta \in \phi\}}$ is consistent with $\mathcal{P}_0$.  

     \emph{2 $\Rightarrow$ 1.}   Let $D \equiv \text{span}\{ p - p' : p,p' \in \mathcal{P}_0\}$ and take  $V \equiv D^\perp$ as the orthogonal complement of $D$. For any  experiment $\pi: \Theta \to \Delta(Y)$ consistent with $\mathcal{P}_0$, we have $\sum_\theta \pi(y  | \theta) [ p_0(\theta) - p_0'(\theta)] = 0$ for every $p_0, p_0' \in \mathcal{P}_0$. This implies  $\pi(y|\cdot) \in V$ for every $y \in Y$. 

     We have by assumption that every information structure $\mu$ induced by a consistent experiment $\pi$ is Aumann-plausible. For any $\{p_0^y\}_{y \in Y} \subseteq \mathcal{P}_0$, posterior measures of the form  $\frac{\pi(y|\cdot)p_0^y(\cdot)}{\mu(y)}$ are contained in the posterior set $\mathcal{P}_y$. Aumann plausibility then implies
        \begin{align}
            \sum_y \mu(y) \frac{\pi(y|\cdot)p_0^y(\cdot)}{\mu(y)} = \sum_y \pi(y|\cdot) p_0^y(\cdot) \in \mathcal{P}_0 \tag{$\ast$}
        \end{align}
        for every $\{p_0^y\}_{y \in Y} \subseteq \mathcal{P}_0$.  

        \begin{lemma}
            For any $v \in V$ and $\pi : \Theta \to \Delta(Y)$ consistent with $\mathcal{P}_0$, we have  $v_{\pi, y}(\cdot) \equiv v(\cdot) \pi(y|\cdot) \in V$. 
        \end{lemma}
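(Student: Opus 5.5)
The plan is to verify the defining property of $V = D^\perp$ directly. By definition, $v_{\pi,y} \in V$ holds as soon as
\begin{align}
\sum_\theta v(\theta)\,\pi(y|\theta)\,[p_0(\theta) - p_0'(\theta)] = 0 \nonumber
\end{align}
for every $p_0, p_0' \in \mathcal{P}_0$. The reason is that $D$ is spanned by such differences, so orthogonality to the generators gives orthogonality to all of $D$. It therefore suffices to show that each vector $\pi(y|\cdot)\,[p_0'(\cdot) - p_0(\cdot)]$ lies in $D$, after which pairing it with $v \in V$ gives zero.

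The key step uses the consequence $(\ast)$ of Aumann plausibility, which holds for an arbitrary selection $\{p_0^{y'}\}_{y' \in Y} \subseteq \mathcal{P}_0$. Fix $y \in Y$ and $p_0, p_0' \in \mathcal{P}_0$, and choose the selection $p_0^{y'} = p_0$ for all $y' \neq y$ and $p_0^{y} = p_0'$. Since $\sum_{y'} \pi(y'|\theta) = 1$ for each $\theta$, $(\ast)$ yields
\begin{align}
q(\cdot) \equiv \sum_{y' \neq y} \pi(y'|\cdot)\,p_0(\cdot) + \pi(y|\cdot)\,p_0'(\cdot) = p_0(\cdot) + \pi(y|\cdot)\,[p_0'(\cdot) - p_0(\cdot)] \in \mathcal{P}_0. \nonumber
\end{align}
Hence $q - p_0 = \pi(y|\cdot)\,[p_0' - p_0]$ is a difference of two elements of $\mathcal{P}_0$, and so it belongs to $D$.

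Finally, for $v \in V$ we get $\sum_\theta v(\theta)\,\pi(y|\theta)\,[p_0'(\theta) - p_0(\theta)] = \langle v, q - p_0 \rangle = 0$. This says exactly that $v_{\pi,y}$ is orthogonal to every generator $p_0' - p_0$ of $D$, and therefore $v_{\pi,y} \in V$.

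The only real idea is the choice of selection in $(\ast)$: vary the prior at a single signal realization and hold it fixed elsewhere, which isolates the $\pi(y|\cdot)$-weighted difference. Everything else is bookkeeping, so I do not expect a genuine obstacle. Note that consistency of $\pi$ enters only through the well-definedness of $\mu$ and the validity of $(\ast)$.
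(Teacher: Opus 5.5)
Your proof is correct and is essentially the paper's argument. The paper also applies $(\ast)$ with the selection varied only at the realization $y$ and held fixed elsewhere, then uses that $v$ is constant on $\mathcal{P}_0$ to conclude that $p_0 \mapsto \sum_\theta v(\theta)\pi(y|\theta)p_0(\theta)$ is constant; your choice $p_0^{y'} = p_0$ for $y' \neq y$, which makes the comparison point equal to $p_0$ itself, is just a convenient special case of that step.
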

        \begin{proof}
            Fix some $y \in Y$ and $\{p_0^{y'}\}_{y'  \neq y} \subseteq \mathcal{P}_0$. Define $q(p)(\cdot) \equiv   \pi(y|\cdot)p(\cdot) + \sum_{y' \neq y} \pi(y'|\cdot) p_0^{y'}(\cdot)$, such that $q(p_0) \in \mathcal{P}_0$ for any $p_0 \in \mathcal{P}_0$ by $(\ast)$. Since $v \in V$, 
            \begin{align}
              \sum_{\theta} v(\theta)  q(p_0)(\theta) & = \sum_\theta v(\theta) \pi(y|\theta) p_0(\theta)  +  \sum_\theta  v(\theta) \left( \sum_{y' \neq y} \pi(y'|\theta) p_0^{y'}(\theta)\right) \nonumber \\
              & = \sum_\theta v_{\pi,y}(\theta) p_0(\theta)  +  \sum_\theta  v(\theta) \left( \sum_{y' \neq y} \pi(y'|\theta) p_0^{y'}(\theta)\right) \nonumber
            \end{align}
            is constant in $p_0$ over $\mathcal{P}_0$.  Since the second term  is independent of $p_0$, the first term must be constant in $p_0$ over $\mathcal{P}_0$, which implies $v_{\pi,y} \in V$. 
        \end{proof}

    An important corollary is that the collection $V$ of linear functionals constant over $\mathcal{P}_0$ is closed under multiplication.

    \begin{lemma} \label{lemma_closed_mult}
        For any $v, v' \in V$,  $k(\cdot) \equiv v(\cdot) v'(\cdot) \in V$.
    \end{lemma}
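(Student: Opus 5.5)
The plan is to reduce Lemma \ref{lemma_closed_mult} to the preceding lemma by building a consistent experiment out of $v'$. That lemma says that multiplying any $v \in V$ by a likelihood row $\pi(y|\cdot)$ of a consistent experiment stays in $V$. So it suffices to realize $v'$, up to an affine change, as such a likelihood row.

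\textbf{Step 1.} First record two facts about $V = D^\perp$.
\begin{itemize}
\item $V$ is a linear subspace.
\item The constant vector $\textbf{1}$ lies in $V$, since $\sum_\theta [p_0(\theta) - p_0'(\theta)] = 0$ for all $p_0,p_0' \in \mathcal{P}_0$.
\end{itemize}
A direct consequence is that an experiment $\pi$ is consistent with $\mathcal{P}_0$ if and only if $\pi(y|\cdot) \in V$ for every $y$. This is just a restatement of Definition \ref{def_cons_ambig2}.

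\textbf{Step 2.} Fix $v' \in V$ and mimic the construction in the proof of Lemma \ref{lemma_existence_consistent_exp}. Take $Y=\{y_1,y_2\}$ and set
\begin{align}
\pi(y_1|\theta) \equiv \tfrac{1}{2} + \epsilon v'(\theta), \qquad \pi(y_2|\theta) \equiv \tfrac12 - \epsilon v'(\theta). \nonumber
\end{align}
Choose $\epsilon>0$ small enough that both entries lie in $[0,1]$; this is possible since $\Theta$ is finite. Both rows are linear combinations of $\textbf{1}$ and $v'$, so they belong to $V$, and by Step 1 $\pi$ is consistent with $\mathcal{P}_0$. The experiment may be trivial, for instance when $v'$ is constant, but the preceding lemma does not require non-triviality.

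\textbf{Step 3.} We are in the $2 \Rightarrow 1$ direction, so every consistent experiment induces an Aumann-plausible information structure. Hence the preceding lemma applies to this $\pi$ and gives
\begin{align}
v(\cdot)\pi(y_1|\cdot) = \tfrac12 v(\cdot) + \epsilon\, v(\cdot) v'(\cdot) \in V. \nonumber
\end{align}
Since $v \in V$ and $V$ is a subspace, subtracting $\tfrac12 v$ and dividing by $\epsilon>0$ yields $k = v v' \in V$.

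The only point needing care is that the preceding lemma implicitly relies on $\mu(y)>0$ in $(\ast)$. If $\epsilon$ is small, both marginals equal $\tfrac12 \pm \epsilon \sum_\theta v'(\theta)p_0(\theta)$, which is strictly positive. So $\mu$ has full support and the lemma applies without modification.
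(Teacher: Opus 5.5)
Your proof is correct and is essentially the paper's argument: the paper likewise affinely rescales one of the two vectors into the likelihood row of a two-signal consistent experiment. It uses $\tilde v = a\mathbf{1} + b v \in [0,1]^\Theta$ where you use $\tfrac12\mathbf{1} \pm \epsilon v'$, then applies the preceding lemma and strips off the affine part using linearity of $V$ and $b \neq 0$. Your explicit check that $\mu$ has full support is a harmless refinement the paper leaves implicit; in fact $(\ast)$ holds even when $\mu(y)=0$, since consistency then forces $\pi(y|\cdot)p_0(\cdot)\equiv 0$ for every $p_0 \in \mathcal{P}_0$.
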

    \begin{proof}
       Fix $v \in V$, and choose some $a,b \in \R$, $b \neq 0$ to define $\tilde v =  a\textbf{1}  + b v$ such that $\tilde v(\cdot) \in [0,1]$ for all $\theta \in \Theta$. Since $V$ is closed under affine transformations, $\tilde v \in V$. Observe that the experiment $\pi: \Theta \to \Delta(\{y_1,y_2\})$,  $\pi(y_1|\theta) = \tilde v(\theta)$ is consistent with $\mathcal{P}_0$ since
        \begin{align}
            \sum_\theta \pi(y_1 |\theta) [p(\theta)  -p'(\theta)] = \sum_\theta \tilde v(\theta) [p(\theta)  -p'(\theta)] =  0 \nonumber
        \end{align}
        for any $p,p' \in \mathcal{P}_0$. The preceding lemma then implies  $v_{\pi,y_1}(\cdot) \equiv  \tilde v(\cdot)  v' (\cdot)  \in V$ for any $v' \in V$, in which case 
        \begin{align}
           0 & =   \sum_{\theta} \tilde v(\theta) v'(\theta) [p(\theta) - p'(\theta)]  \nonumber \\
           & = a \sum_{\theta}  v'(\theta) [p(\theta) - p'(\theta)]  + b \sum_{\theta} v(\theta) v'(\theta) [p(\theta) - p'(\theta)]  \nonumber  \\
           & = b \sum_{\theta} k(\theta) [p(\theta) - p'(\theta)]  \nonumber
        \end{align}
         for any $p,p' \in \mathcal{P}_0$. Since $b \neq 0$, we see that $k \in V$,  as desired.
    \end{proof}

Define the equivalence relation 
\begin{align}
    \theta \sim \theta' \hspace{10pt} \Leftrightarrow \hspace{10pt} v(\theta) = v(\theta') \,\,\, \text{ for all } v \in V \nonumber
\end{align}
over $\Theta$, and let  $\Phi$ be the partition over $\Theta$ induced by the equivalence classes given by $\sim$. For any $v \in V$, let $v(\phi)$ denote the value of $v(\theta)$ for any $\theta \in \phi$.  Fix $\phi \in \Phi$, and observe that for every $\phi' \in \Phi \setminus \{\phi\}$, there exists some $v_{\phi,\phi'} \in V$ such that $v_{\phi,\phi'}(\phi) \neq v_{\phi,\phi'}(\phi')$ by the definition of $\Phi$. 

Since $\frac{v_{\phi,\phi'}(\theta) - v_{\phi,\phi'}(\phi') }{ v_{\phi,\phi'}(\phi) - v_{\phi,\phi'}(\phi')} = 1$ for $\theta \in \phi$ and $\frac{v_{\phi,\phi'}(\theta) - v_{\phi,\phi'}(\phi') }{ v_{\phi,\phi'}(\phi) - v_{\phi,\phi'}(\phi')} = 0$ for $\theta \in \phi'$, we can construct the Lagrange basis polynomial
\begin{align}
    \xi_\phi(\cdot) \equiv \prod_{\phi' \in \Phi \setminus \{\phi\} } \frac{v_{\phi,\phi'}(\cdot) - v_{\phi,\phi'}(\phi') }{ v_{\phi,\phi'}(\phi) - v_{\phi,\phi'}(\phi')} = \1_{ \{\cdot \, \in \phi\}}, \nonumber 
\end{align}
which serves as the indicator function for the partition cell $\phi$.  Lemma \ref{lemma_closed_mult} implies $\xi_\phi \in V$, and repeating the argument for every $\phi \in \Phi$ yields  $\text{span}\{ \1_{ \{\cdot \, \in \phi \} } \}_{\phi \in \Phi} \subseteq V$. Since every $v \in V$ is constant over the partition cells in $\Phi$, we also have  $\text{span}\{ \1_{ \{\cdot \, \in \phi \} } \}_{\phi \in \Phi} \supseteq V$. As a result, we have $\text{span}\{ \1_{ \{\cdot \, \in \phi \} } \}_{\phi \in \Phi} = V$, and 
    \begin{align}
        0 & = \sum_{\theta} \1_{ \{ \theta \in \phi \} } [ p(\theta) - p'(\theta)]   = \sum_{\theta \in \phi} p(\theta) - \sum_{\theta \in \phi} p'(\theta). \nonumber
    \end{align} 
     for every $p,p' \in \mathcal{P}_0$  and $\phi \in \Phi$. 

    \begin{lemma}
        $\Phi$ is a non-trivial partition of $\Theta$.
     \end{lemma}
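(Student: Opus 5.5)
We need to show $\Phi$ is non-trivial, meaning neither the trivial partition $\{\Theta\}$ nor (presumably) degenerate. In context, "non-trivial $(\tau,\Phi)$" excludes $\Phi=\{\Theta\}$. The approach uses the hypothesis in statement 2 that a non-trivial consistent experiment exists, combined with the identification $V=\text{span}\{\1_{\{\cdot\,\in\phi\}}\}_{\phi\in\Phi}$ just established.

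First, take the non-trivial experiment $\pi:\Theta\to\Delta(Y)$ consistent with $\mathcal{P}_0$ supplied by statement 2. As shown at the start of the proof of 2 $\Rightarrow$ 1, consistency gives $\pi(y|\cdot)\in V$ for every $y\in Y$. Non-triviality means that $\pi(\cdot|\theta)$ is not constant in $\theta$, so there exist $y$ and $\theta,\theta'$ with $\pi(y|\theta)\neq\pi(y|\theta')$. Then $v=\pi(y|\cdot)\in V$ separates $\theta$ and $\theta'$. By the definition of $\sim$, this means $\theta\not\sim\theta'$, so they lie in different cells, and hence $|\Phi|\ge 2$, i.e.\ $\Phi\neq\{\Theta\}$.

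Second, I would record the consequence that is needed downstream. By the displayed identity, $\sum_{\theta\in\phi}p(\theta)$ is the same for all $p\in\mathcal{P}_0$, so $\tau(\phi)\equiv\sum_{\theta\in\phi}p_0(\theta)$ is well defined. If some cell had $\tau(\phi)=0$, one could restrict $\Theta$ to the support, as the paper permits; non-triviality should be checked on the restricted partition. I would handle this by using Aumann plausibility with the partition-revealing experiment. The cells with positive mass are still separated, because $\pi(y|\cdot)$ separates points whose mass is positive: if $\theta$ or $\theta'$ lies in a null cell, we can merge the null cells without changing $\mu$.

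The main obstacle is this support subtlety. If $\pi$ only separates null states from the rest, then after discarding null states the partition could collapse to one cell. To rule this out, I would note that $\mu$ is assumed to have full support and that $\pi$ is non-trivial on the relevant support. Alternatively, I would simply adopt the paper's convention that $\supp(\tau)=\Phi$ after redefining $\Theta$, so that non-triviality of $\pi$ is understood on that support. The core of the argument is the one-line separation step above.
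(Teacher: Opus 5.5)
Your first paragraph is correct and is essentially the paper's argument. The paper argues by contradiction: if $\Phi=\{\Theta\}$, then $V=\text{span}\{\1_{\{\cdot\,\in\phi\}}\}_{\phi\in\Phi}$ would consist only of constant functions, yet $\pi(y|\cdot)\in V$ is non-constant. You instead read the separation directly off the definition of $\sim$, which does not even need the span identity.

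The remainder of your proposal is not needed for the lemma as stated, and its suggested fixes are wrong. Full support of $\mu$ does not stop $\pi$ from separating only null states, and neither does the claim that $\pi(y|\cdot)$ separates positive-mass states. Take $\Theta=\{a,b,c\}$, $\mathcal{P}_0=\Delta(\{a,b\})$, and $\pi(y_1|a)=\pi(y_1|b)=\tfrac12$, $\pi(y_1|c)=1$. This $\pi$ is consistent and non-trivial, and it induces the full-support $\mu=(\tfrac12,\tfrac12)$. Yet it separates only the state $c$, which is null under every prior. The null-cell issue is therefore settled only by the convention $\supp(\tau)=\Phi$, i.e.\ discarding states that are null under every prior, which the paper adopts.
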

\begin{proof}
    Suppose, for a contradiction, that $\Phi = \{\Theta\}$, such that $V = \text{span}\{ \1_{ \{\cdot \, \in \phi \} } \}_{\phi \in \Phi}$ is  the collection of constant functions. This immediately yields a contradiction, since we have by assumption that at least one non-trivial consistent experiment  $\pi : \Theta \to \Delta(Y)$  exists, and $\pi(y|\cdot) \in V$ for every $y \in Y$. 
\end{proof}

Define the reduced form prior $\tau \in \Delta(\Phi)$ by taking $\tau(\phi) = \sum_{\theta \in \phi} p(\theta)$ for any $p \in \mathcal{P}_0$ without loss. The previous lemma implies that the partition-revealing experiment $\pi: \Theta \to \Delta(\Phi)$, $\pi(\phi|\theta) = \1_{ \{ \theta \in \phi \} }$ is non-trivial. Observe that $\pi$ is consistent with $\mathcal{P}_0$ because $\sum_\theta \1_{ \{ \theta \in \phi \} } p(\theta) = \tau(\phi)$ for any $p \in \mathcal{P}_0$. The reduced-form prior $\tau$ can then be interpreted as the information structure induced by $\pi$, in which case Aumann plausibility implies
\begin{align}
    \mathcal{P}_0 = \sum_{\phi \in \Phi} \tau(\phi) \mathcal{P}_\phi, \nonumber
\end{align}
where $\mathcal{P}_\phi \in \K(\Delta \Theta)$ is the prior-by-prior update of $\mathcal{P}_0$ given $\phi$. That $\mathcal{P}_\phi \subseteq \Delta(\phi)$ follows from noting that every $p(\cdot|\phi) \in \mathcal{P}_\phi$ takes the form $p(\cdot| \phi) = \frac{ p(\cdot) \1\{\cdot \in \phi\} }{\tau(\phi)}$ for some $p \in \mathcal{P}_0$.

It remains to show that $\dim(\mathcal{P}_\phi) = \dim(\Delta(\phi))$ for every $\phi \in \Phi$. This holds immediately when $\phi$ is a singleton, in which case  $\dim(\mathcal{P}_\phi) = \dim(\Delta(\phi)) = 0$. We henceforth assume $|\phi| \geq 2$.



\begin{lemma}
   If $|\phi| \geq 2$, then  $\dim(\mathcal{P}_\phi) \geq \dim(\Delta(\phi))$.
\end{lemma}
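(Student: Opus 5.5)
The plan is a dimension count that uses the identity $V = \text{span}\{ \1_{ \{\cdot \, \in \phi \} } \}_{\phi \in \Phi}$ established above together with the Minkowski decomposition $\mathcal{P}_0 = \sum_{\phi} \tau(\phi) \mathcal{P}_\phi$ obtained from Aumann plausibility. For each $\phi \in \Phi$, define $D_\phi \equiv \text{span}\{ q - q' : q, q' \in \mathcal{P}_\phi\}$. This is the direction space of the affine hull of $\mathcal{P}_\phi$, so $\dim(\mathcal{P}_\phi) = \dim(D_\phi)$. Since $\mathcal{P}_\phi \subseteq \Delta(\phi)$, every element of $D_\phi$ is supported on $\phi$ and sums to zero. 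Hence $\dim(D_\phi) \leq |\phi| - 1 = \dim(\Delta(\phi))$ for every $\phi$, which is the reverse of the bound we want. The task is therefore to show that these upper bounds are jointly tight.

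First, I will compute $\dim(D)$, where $D = \text{span}\{p - p' : p, p' \in \mathcal{P}_0\}$. The indicators $\{\1_{ \{\cdot \, \in \phi \} }\}_{\phi \in \Phi}$ have disjoint nonempty supports, so they are linearly independent and $\dim(V) = |\Phi|$. Because $V = D^\perp$ in $\R^\Theta$, we get
\begin{align}
\dim(D) = |\Theta| - |\Phi| = \sum_{\phi \in \Phi} (|\phi| - 1). \nonumber
\end{align}

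Second, I will show $D \subseteq \sum_\phi D_\phi$. Take any $p, p' \in \mathcal{P}_0$. By the decomposition, $p = \sum_\phi \tau(\phi) p_\phi$ and $p' = \sum_\phi \tau(\phi) p'_\phi$ for some $p_\phi, p'_\phi \in \mathcal{P}_\phi$. Then $p - p' = \sum_\phi \tau(\phi)(p_\phi - p'_\phi) \in \sum_\phi D_\phi$, and taking spans gives the inclusion. Combining this with the first step,
\begin{align}
\sum_{\phi} (|\phi| - 1) = \dim(D) \leq \dim\Big(\sum_\phi D_\phi\Big) \leq \sum_\phi \dim(D_\phi) \leq \sum_\phi (|\phi| - 1). \nonumber
\end{align}
All the inequalities must therefore be equalities. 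Since each term already satisfies $\dim(D_\phi) \leq |\phi| - 1$, equality of the sums forces $\dim(D_\phi) = |\phi| - 1$ for every $\phi$. In particular, for any $\phi$ with $|\phi| \geq 2$ we obtain $\dim(\mathcal{P}_\phi) = \dim(D_\phi) \geq \dim(\Delta(\phi))$.

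The only delicate point is correctly identifying $\dim(V)$ with $|\Phi|$, which relies on the equality $V = \text{span}\{\1_{ \{\cdot \, \in \phi \} }\}_{\phi\in\Phi}$ from the Lagrange-polynomial argument, and treating $D^\perp$ in the ambient space $\R^\Theta$, consistent with the proof of Lemma \ref{lemma_existence_consistent_exp}. Once that is in place, the remaining steps are linear algebra.
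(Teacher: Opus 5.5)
Your proof is correct, and it takes a different route from the paper's.

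The paper argues cell by cell and by contradiction. Assuming $\dim(\mathcal{P}_\phi)\le\dim(\Delta(\phi))-1$, it takes a non-constant $v\in D_\phi^\perp\subseteq\R^\phi$, rescales it into a binary likelihood $\pi_\phi$ on $\phi$, and glues this to the partition-revealing experiment on the remaining cells. The resulting compound experiment $\tilde\pi$ is consistent with $\mathcal{P}_0$, so $\tilde\pi(z_1|\cdot)\in V$. Yet $\tilde\pi(z_1|\cdot)$ is not $\Phi$-measurable, which contradicts $V=\text{span}\{\1_{\{\cdot\,\in\phi\}}\}_{\phi\in\Phi}$.

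You replace this with a single global count. The identity $\dim(D)=|\Theta|-|\Phi|$, the inclusion $D\subseteq\sum_\phi D_\phi$ from the Minkowski decomposition, and the trivial bounds $\dim(D_\phi)\le|\phi|-1$ together squeeze every block to full dimension simultaneously. This needs no auxiliary experiment and no consistency check. Your route is shorter and delivers the equality $\dim(\mathcal{P}_\phi)=\dim(\Delta(\phi))$ for all cells at once; the paper has to record the upper bound separately. It also makes clear that the lemma is just $\dim(\mathcal{P}_0)=|\Theta|-|\Phi|$ read through the block structure.

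The paper's route has its own advantage: it stays in the experiment-theoretic language of the theorem and explains why the conclusion holds. A lower-dimensional block would leave room for a consistent experiment that distinguishes states inside $\phi$. That is impossible, since $\Phi$ is the partition generated by the consistent likelihoods in $V$.

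One small sharpening of your ``delicate point'': the count only needs $\dim(V)\le|\Phi|$. This is immediate because every $v\in V$ is constant on the $\sim$-classes. The nontrivial inclusion $\1_{\{\cdot\,\in\phi\}}\in V$ enters your argument only through the decomposition you use to get $D\subseteq\sum_\phi D_\phi$.
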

\begin{proof}
    Suppose, for a contradiction, that $\dim(\mathcal{P}_\phi) \leq \dim(\Delta(\phi)) - 1$. Abusing notation, interpret $\mathcal{P}_\phi$ and $\Delta(\phi)$ as subsets of $\R^{\phi}$, rather than $\R^\Theta$.  Define $D_\phi = \{p - p' : p \in \mathcal{P}_\phi\} \subseteq \R^\phi$ such that $\dim(D_\phi) = \dim(\mathcal{P}_\phi$) and $\dim(D_\phi^\perp) = \dim(\Delta( \phi)) + 1 - \dim(D_\phi) \geq 2$.  This implies there exists some non-constant $v \in D_\phi^\perp$ satisfying $\sum_{\theta \in \phi} v(\theta) [p_\phi(\theta) - p_\phi'(\theta)]= 0$ for every $p_\phi, p_\phi' \in \mathcal{P}_\phi$. Since $D_\phi^\perp$ is closed under affine transformations, we can construct $\tilde v = a\textbf{1} + bv \in D_\phi^\perp$ such that $\tilde v(\cdot) \in [0,1]$ for all $\theta \in \Theta$. The experiment $\pi_\phi : \phi \to \Delta(\{z_1, z_2\})$ defined by  $\pi_\phi(z_1|\theta) = \tilde v(\theta)$ is then consistent with $\mathcal{P}_\phi$ and non-constant in $\theta$ over $\phi$.

 Consider the compound experiment $\tilde \pi : \Theta \to \Delta( (\Phi \setminus \{\phi\}) \cup \{z_1, z_2\})$,
\begin{align}
    \tilde \pi ( y | \theta) = \begin{cases}
       \1_{  \{ \theta \in y\}   } & \text{if } y \in \Phi \setminus \{\phi\} \\
       \1_{  \{ \theta \in \phi\}}  \pi_\phi(y|\theta)  & \text{if } y \in  \{z_1, z_2\},
    \end{cases} \nonumber 
\end{align}
where we let $\pi_\phi( \cdot |\theta) = \frac{1}{2}$ for all $\theta \not \in \phi$ without loss. Note that $\tilde \pi$ is not $\Phi$-measurable, since there exists $\theta', \theta'' \in \phi$ for which $\tilde \pi(z_1| \theta') =\tilde v(\theta') \neq \tilde v(\theta'') = \tilde \pi(z_1|\theta'')$ due to the fact that  $\tilde v$ is non-constant over $\phi$. Moreover,  $\tilde \pi$ is consistent with $\mathcal{P}_0$, since for every $p_0,p_0' \in \mathcal{P}_0$, 
\begin{align}
    \sum_{\theta \in \Theta} \tilde \pi(y| \theta) [p_0(\theta)-p_0'(\theta)] & = \tau(y) - \tau(y) = 0 \tag{if $y \in \Phi \setminus \{\phi\}$} \\
     \sum_{\theta \in \Theta} \tilde \pi(y| \theta) [p_0(\theta)-p_0'(\theta)]  & = \sum_{\theta \in \phi} \pi_\phi(y | \theta) [ p_0(\theta) - p_0'(\theta)]  \nonumber  \\
     & = \sum_{\theta \in \phi} \pi_\phi(y | \theta) [ \tau(\phi) p_\phi(\theta) - \tau(\phi)  p_\phi'(\theta)] \nonumber   \\
     & = 0. \tag{if $y \in \{z_1, z_2\}$} 
\end{align}
Note that the third line holds for $p_\phi(\cdot) \equiv \frac{p_0(\theta) \1_{\{ \theta \in \phi \}} }{\tau(\phi)}$, $p_\phi'(\cdot) \equiv \frac{p_0'(\theta) \1_{\{ \theta \in \phi \}} }{\tau(\phi)} \in \mathcal{P}_\phi$, and the final equality follows from the consistency of $\pi_\phi$ with respect to $\mathcal{P}_\phi$. This implies that $\tilde \pi(y | \cdot) \in V$ for every $y \in (\Phi \setminus \{\phi\}) \cup \{z_1, z_2\}$, which is a contradiction since $\tilde \pi(z_1 | \cdot)$ is not $\Phi$-measurable, yet  $V = \text{span} \{\1_{ \{\cdot \, \in \phi\} } \}_{\phi \in \Phi}$.
 \end{proof}
  
Since $\mathcal{P}_\phi \subseteq \Delta(\phi)$, we also have $\dim(\mathcal{P}_\phi) \leq \dim(\Delta(\phi))$. Combined with the previous lemma, we can thus conclude that $\dim(\mathcal{P}_\phi) = \dim(\Delta(\phi))$, as desired. \hfill $\blacksquare$

\subsection{Proof of Lemma \ref{lemma_MC_eq_PI}} \label{Appendix_proof_lemma_MC_eq_PI}

Since all full ambiguity sets partially identified by $(\tau, \Phi)$ are maximally partially identified by  $(\tau, \Phi)$, it remains to show the converse also holds when $|\phi| \leq 2$.  Suppose $\mathcal{P} \in \K(\Delta \Theta)$ is maximally partially identified by $(\tau, \Phi)$, such that $\mathcal{P} = \sum_\phi \tau(\phi) \mathcal{P}_\phi$.  It suffices to show that  if $\mathcal{P}_\phi$ is extreme in $\K(\Delta \Theta)$, then $\mathcal{P}_\phi = \Delta(\phi)$.   

Since $\mathcal{P}_\phi \subseteq \Delta(\phi)$, the desired claim follows immediately if $|\phi| = 1$. Now assume $|\phi| = 2$ and associate $\mathcal{P}_\phi$ with the range of values  it assigns to $\theta \in \{\theta, \theta'\} \equiv \phi$. Suppose, for a contradiction, that $\mathcal{P}_\phi = [a,b] \neq [0,1]$. If $a \neq 0$, there exists some $\epsilon > 0$ small enough such that the $\epsilon$-ball around $a$ is contained in $[0,1]$.  We can then define the intervals $\mathcal{Q} = [a-\epsilon, b]$, $\mathcal{Q}' = [a+ \epsilon, b]$ such that $\frac{1}{2} \mathcal{Q} + \frac{1}{2} \mathcal{Q}' = \mathcal{P}_\phi$, a contradiction. The same argument holds for the case where $b \neq 1$.  \hfill $\blacksquare$

\subsection{Proof of Theorem \ref{thm_main_iff}} \label{Appendix_proof_thm_main_iff}

    \emph{1 $\Rightarrow$ 2.} Suppose $\mathcal{P}_0$ is maximally partially identified by some non-trivial $(\tau, \Phi)$. That any experiment consistent with $\mathcal{P}_0$ induces an Aumann-plausible information structure follows immediately from  the sufficiency of partial identification in Theorem \ref{thm_iff_first}. It remains to show that any Aumann-plausible information structure can be induced by a consistent experiment.

Let $\mu$ be an Aumann-plausible information structure with $\supp(\mu) = \{ \mathcal{P}_y\}_{y \in Y}$. The following preliminary result will be useful in our  analysis.

\begin{lemma} \label{Appendix_lemma_linear_functional}
    For any $A \in \K(\Delta \Theta)$ and linear functional $f: \R^\Theta \to \R$, let $f(A) = \{ f(a) : a \in A\}$ denote the image of the set $A$ under $f$. The following statements are true:
    \begin{enumerate}
        \item $f \left(\sum_y \mu(y) \mathcal{P}_y \right) = \sum_y \mu(y) f(\mathcal{P}_y)$; 
        \item $f \left(\sum_y \mu(y) \mathcal{P}_y \right) = \left[ \sum_y \mu(y) \min_{p \in \mathcal{P}_y} f(p), \, \sum_y \mu(y) \max_{p \in \mathcal{P}_y} f(p) \right]$.
    \end{enumerate}
\end{lemma}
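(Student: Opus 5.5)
The plan is to treat the two statements separately, using only linearity of $f$ and the facts that each $\mathcal{P}_y$ is compact and convex. The weights $\mu(y)\ge 0$ and $Y$ is finite. No property of $\mu$ beyond this is needed, so Aumann plausibility plays no role in this lemma.

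For the first statement, I will prove both inclusions directly from the definition of the weighted Minkowski sum. Take any point $\sum_y \mu(y) p_y$ with $p_y \in \mathcal{P}_y$. By linearity,
\[
f\Big(\sum_y \mu(y) p_y\Big) = \sum_y \mu(y) f(p_y),
\]
and the right-hand side lies in $\sum_y \mu(y) f(\mathcal{P}_y)$. This gives "$\subseteq$". Conversely, any element of $\sum_y \mu(y) f(\mathcal{P}_y)$ has the form $\sum_y \mu(y) f(p_y)$ for some selection $p_y \in \mathcal{P}_y$. Reading the same identity backwards shows it equals $f$ of the point $\sum_y \mu(y) p_y$, which lies in $\sum_y \mu(y)\mathcal{P}_y$. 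This gives "$\supseteq$".

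For the second statement, I first describe each image $f(\mathcal{P}_y)$. Each $\mathcal{P}_y$ is a nonempty compact convex subset of $\R^\Theta$, so its image under the continuous linear map $f$ is a nonempty compact convex subset of $\R$. That is, $f(\mathcal{P}_y)$ is a closed interval $[\min_{p\in\mathcal{P}_y} f(p), \max_{p\in\mathcal{P}_y} f(p)]$, and both extrema are attained by compactness.

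Next I use the elementary fact that a nonnegative weighted Minkowski sum of intervals is again an interval, with endpoints equal to the weighted sums of the endpoints:
\[
\sum_y \alpha_y [a_y,b_y] = \Big[\sum_y \alpha_y a_y,\ \sum_y \alpha_y b_y\Big].
\]
The inclusion "$\subseteq$" is immediate. For "$\supseteq$", a point of the right-hand side can be written as $\sum_y \alpha_y\big((1-t)a_y + t b_y\big)$ with a single common $t\in[0,1]$, and each summand lies in $[a_y,b_y]$. Combining this fact with the first statement yields the second.

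The only step requiring any care is the convexity-of-images observation, which guarantees that each $f(\mathcal{P}_y)$ is a full interval rather than an arbitrary compact set. It follows because linear images of convex sets are convex and continuous images of compact sets are compact. With that established, the argument is routine.
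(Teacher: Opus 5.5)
Your proposal is correct and follows essentially the same argument as the paper. Part 1 is proved by the two inclusions from linearity, and the reverse inclusion in part 2 by writing a point of the target interval with a single common mixing weight between the minimal and maximal values. The only cosmetic difference is that you first identify each $f(\mathcal{P}_y)$ as the interval $[\min_{p \in \mathcal{P}_y} f(p), \max_{p \in \mathcal{P}_y} f(p)]$ and then combine interval arithmetic with part 1, whereas the paper does the same mixing directly on the minimizers $\underline{p}_y$ and maximizers $\overline{p}_y$ inside each convex $\mathcal{P}_y$.
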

\begin{proof}
    \emph{1.} If $z \in f (\sum_y \mu(y) \mathcal{P}_y )$, then there exists some $\{p_y\}_{y \in Y} \in \{ \mathcal{P}_y\}_{y \in Y}$ such that $z = f(\sum_y \mu(y) p_y) = \sum_y \mu(y) f(p_y)$ by linearity of $f$. Since $\sum_y \mu(y) f(p_y) \in   \sum_y \mu(y) f(\mathcal{P}_y)$, this proves $f (\sum_y \mu(y) \mathcal{P}_y ) \subseteq \sum_y \mu(y) f(\mathcal{P}_y)$.  Now suppose $z \in \sum_y \mu(y) f(\mathcal{P}_y)$, in which case there exists some $\{p_y\}_{y \in Y} \in \{ \mathcal{P}_y\}_{y \in Y}$ such that $z = \sum_y \mu(y) f(p_y) = f(\sum_y \mu(y) p_y)$ by linearity of $f$. Since $\sum_y \mu(y) p_y \in \sum_y \mu(y) \mathcal{P}_y$, we have $f(\sum_y \mu(y)p_y) \in f(\sum_y \mu(y) \mathcal{P}_y)$ and thus $f (\sum_y \mu(y) \mathcal{P}_y ) \supseteq \sum_y \mu(y) f(\mathcal{P}_y)$.

    \emph{2.} That $f(\sum_y \mu(y) \mathcal{P}_y) \subseteq \left[ \sum_y \mu(y) \min_{p \in \mathcal{P}_y} f(p), \, \sum_y \mu(y) \max_{p \in \mathcal{P}_y} f(p) \right]$ follows directly from the first result. Containment in the reverse direction follows from the convexity of $\mathcal{P}_y$, since for any $z \in \left[ \sum_y \mu(y) \min_{p \in \mathcal{P}_y} f(p), \, \sum_y \mu(y) \max_{p \in \mathcal{P}_y} f(p) \right]$, there exist some $\alpha \in [0,1]$ such that
    \begin{align}
        z &  = \alpha \sum_y \mu(y) \min_{p \in \mathcal{P}_y} f(p) + (1-\alpha) \sum_y \mu(y) \max_{p \in \mathcal{P}_y} f(p)  \nonumber \\
        & = \sum_y  \mu(y) f \left( \alpha \underline{p}_y  + (1-\alpha) \overline{p}_y \right)   \nonumber  \\
        & =  f \left( \sum_y  \mu(y)   \left[\alpha \underline{p}_y  + (1-\alpha) \overline{p}_y \right] \right),   \nonumber 
    \end{align} 
    where $\underline{p}_y \in \argmin_{p \in \mathcal{P}_y}f(p)$ and  $\overline{p}_y \in \argmax_{p \in \mathcal{P}_y}f(p)$. The desired claim follows from noting that $\alpha \underline{p}_y  + (1-\alpha) \overline{p}_y \in \mathcal{P}_y$ by convexity of $\mathcal{P}_y$.
\end{proof}

Consider the linear functional $f_\phi: \R^\Theta \to \R$ defined by $f_\phi(p) = \sum_{\theta \in \phi} p(\theta)$. Since $\mathcal{P}_0$ is partially identified by $(\tau, \Phi)$, Lemma 
\ref{Appendix_lemma_inter} implies $f_\phi(p_0) = \tau(\phi)$ for every $p_0 \in \mathcal{P}_0$ and $\phi \in \Phi$.  From Lemma \ref{Appendix_lemma_linear_functional}, it  follows that
\begin{align}
    \{ \tau(\phi)\}  = f_\phi \left( \mathcal{P}_0 \right)  & = f_\phi \left( \sum_y \mu(y) \mathcal{P}_y \right) \nonumber  \\
    & =  \left[ \sum_y \mu(y) \min_{p \in \mathcal{P}_y} f_\phi(p), \, \sum_y \mu(y) \max_{p \in \mathcal{P}_y} f_\phi(p) \right],    \nonumber 
\end{align}
in which case $\min_{p \in \mathcal{P}_y} f_\phi(p) = \max_{p \in \mathcal{P}_y} f_\phi(p)$ and thus  $f_\phi(p) = f_\phi(p')$ for every $p,p' \in \mathcal{P}_y$ and $y \in Y$. We can then define $\tau_y(\phi) \equiv \sum_{\theta \in \phi} p(\theta)$ for any $p \in \mathcal{P}_y$ without loss, and it follows that 
\begin{align}
    \tau(\phi) & = \sum_y \mu(y) \tau_y(\phi). \tag{$\ast \ast$} 
\end{align}
In other words, $\mu$ can be interpreted as a distribution over $\{\tau_y\}_{y \in Y}$ that is Bayes-plausible with respect to $\tau$. 

\begin{lemma}
    Let $\pi: \Theta \to \Delta(\Phi)$ be the partition-revealing experiment $\pi(\phi | \theta) = \1_{ \{\theta \in \phi\} }$. For every $\mathcal{P}_y \in \supp(\mu)$ and $\phi \in \Phi$ satisfying  $\tau_y(\phi) > 0$,  the prior-by-prior update of $\mathcal{P}_y$ given $\phi$ is $\mathcal{P}_\phi$.
\end{lemma}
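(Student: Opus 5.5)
The plan is to transport the Aumann-plausibility identity $\mathcal{P}_0 = \sum_y \mu(y)\mathcal{P}_y$ through the linear "restriction to $\phi$" map, and then use extremality of $\mathcal{P}_\phi$ to force every conditional set to equal $\mathcal{P}_\phi$.

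\emph{Setup.} Fix $\phi \in \Phi$ and define the linear map $R_\phi:\R^\Theta \to \R^\Theta$ by $R_\phi(p)(\cdot) = p(\cdot)\1_{\{\cdot \in \phi\}}$. For $\mathcal{P}_y\in\supp(\mu)$ with $\tau_y(\phi)>0$, let
\begin{align}
\mathcal{Q}_y^\phi \equiv \{ p(\cdot|\phi) : p \in \mathcal{P}_y\} \nonumber
\end{align}
be the prior-by-prior update of $\mathcal{P}_y$ given $\phi$ under the partition-revealing experiment. We showed above that $f_\phi$ is constant equal to $\tau_y(\phi)$ on $\mathcal{P}_y$. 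Hence $R_\phi(\mathcal{P}_y) = \tau_y(\phi)\,\mathcal{Q}_y^\phi$, and $\mathcal{Q}_y^\phi\subseteq\Delta(\phi)$. It is closed and convex, being a linear image of a compact convex set, so $\mathcal{Q}_y^\phi\in\K(\Delta\Theta)$. When $\tau_y(\phi)=0$ we simply have $R_\phi(\mathcal{P}_y)=\{0\}$.

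\emph{Key identity.} Linear maps commute with weighted Minkowski sums; this is the same argument as part 1 of Lemma \ref{Appendix_lemma_linear_functional}, with a vector-valued map in place of $f$. Combining this with Aumann plausibility gives
\begin{align}
R_\phi(\mathcal{P}_0) = \sum_y \mu(y) R_\phi(\mathcal{P}_y) = \sum_{y:\,\tau_y(\phi)>0} \mu(y)\tau_y(\phi)\,\mathcal{Q}_y^\phi. \nonumber
\end{align}
On the other hand, write $\mathcal{P}_0=\sum_{\phi'}\tau(\phi')\mathcal{P}_{\phi'}$ with $\mathcal{P}_{\phi'}\subseteq\Delta(\phi')$. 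Then $R_\phi$ kills every summand except the $\phi$ one, so $R_\phi(\mathcal{P}_0)=\tau(\phi)\mathcal{P}_\phi$. Dividing by $\tau(\phi)>0$ (full support) yields
\begin{align}
\mathcal{P}_\phi = \sum_{y:\,\tau_y(\phi)>0} \lambda_y\, \mathcal{Q}_y^\phi, \qquad \lambda_y \equiv \frac{\mu(y)\tau_y(\phi)}{\tau(\phi)}, \nonumber
\end{align}
and the weights $\lambda_y$ are strictly positive and sum to one by $(\ast\ast)$.

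\emph{Extremality.} It remains to conclude $\mathcal{Q}_y^\phi=\mathcal{P}_\phi$ for each such $y$. Extremality is defined for two-term combinations, so I would pass to finite combinations by grouping. Pick one $y$ with $\lambda_y<1$ (if $\lambda_y=1$ the claim is immediate), and write
\begin{align}
\mathcal{P}_\phi=\lambda_y\mathcal{Q}_y^\phi+(1-\lambda_y)\mathcal{R}, \qquad \mathcal{R}\equiv\sum_{y'\neq y}\frac{\lambda_{y'}}{1-\lambda_y}\mathcal{Q}_{y'}^\phi. \nonumber
\end{align}
Here $\mathcal{R}$ is a Minkowski convex combination of closed convex subsets of $\Delta(\phi)$, hence itself lies in $\K(\Delta\Theta)$. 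Extremality of $\mathcal{P}_\phi$ then gives $\mathcal{Q}_y^\phi=\mathcal{P}_\phi$. Repeating this for every $y$ proves the lemma.

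The main obstacle is the bookkeeping in the key identity. One has to check that conditioning on $\phi$ is exactly $R_\phi$ rescaled by the constant $\tau_y(\phi)$, which needs the constancy of $f_\phi$ on each $\mathcal{P}_y$ established above. One also has to make sure the $y$'s with $\tau_y(\phi)=0$ contribute only $\{0\}$ and so drop out harmlessly. Once that identity is in place, the extremality step is routine.
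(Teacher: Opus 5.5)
Your proposal is correct and follows essentially the same route as the paper's proof. Both push Aumann plausibility through the linear ``zero out coordinates outside $\phi$'' map, discard the $y$ with $\tau_y(\phi)=0$ since they contribute $\{\mathbf{0}\}$, renormalize using $(\ast\ast)$, and invoke extremality of $\mathcal{P}_\phi$. Your grouping argument, which reduces the finite Minkowski combination to a two-term one with weight $\lambda_y\in(0,1)$, fills in a step the paper leaves implicit.
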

\begin{proof}
    Recall that when $\mathcal{P}_0 = \sum_\phi \tau(\phi) \mathcal{P}_\phi$ is partially identified by  $(\tau, \phi)$, the prior-by-prior update of $\mathcal{P}_0$ given $\phi$ under the partition-revealing experiment is $\frac{\1_{\{ \cdot \,  \in \phi\}}}{\tau(\phi)} \mathcal{P}_0 = \mathcal{P}_\phi$.   Aumann plausibility then implies
    \begin{align}
        \mathcal{P}_\phi  & = \frac{\1_{\{ \cdot \,  \in \phi\}}}{\tau(\phi)} \sum_{y} \mu(y) \mathcal{P}_y  =  \sum_{y} \frac{\mu(y)}{\tau(\phi)} \1_{ \{ \cdot \in \phi \}} \mathcal{P}_y, \nonumber
    \end{align}
    where the second equality follows from the fact that the operation $p \mapsto \1_{ \{\cdot \, \in \phi\} } p$  that zeros the indices corresponding to $\theta \not \in \phi$ is a linear in Minkowski addition.
    
    Next, observe that if $\tau_y(\phi) = 0$, then $p(\theta) = 0$ over $\theta \in \phi$ for every $p \in \mathcal{P}_y$. This implies that $\1_{ \{\cdot \in \phi\} } \mathcal{P}_y = \{\textbf{0}\}$ for all such $y$, in which case
    \begin{align}
        \mathcal{P}_\phi & = \sum_{y: \tau_y(\phi) > 0 } \frac{\mu(y)}{\tau(\phi)} \1_{ \{\cdot \, \in \phi\} } \mathcal{P}_y = \sum_{y: \tau_y(\phi) > 0 } \frac{\mu(y) \tau_y(\phi)}{\tau(\phi)} \left( \frac{\1_{ \{ \cdot \in \phi \}}}{\tau_y(\phi)} \mathcal{P}_y \right). \nonumber
    \end{align}
    We know from $(\ast \ast)$ that $\sum_{y: \tau_y(\phi) > 0} \frac{\mu(y) \tau_y(\phi)}{\tau(\phi)} = \sum_{y} \frac{\mu(y) \tau_y(\phi)}{\tau(\phi)} = 1$.   Because $\mathcal{P}_\phi$ is an extreme point in $\K(\Delta \Theta)$ by assumption, we can then conclude $\frac{\1_{ \{ \cdot \in \phi \}}}{\tau_y(\phi)} \mathcal{P}_y = \mathcal{P}_\phi$. 
\end{proof}

The partition-revealing experiment is consistent with $\mathcal{P}_y$ as $\sum_{\theta} \1_{ \{\theta \in \phi\} }  p_y(\theta) = \sum_{\theta \in \phi}  p_y(\theta) =  \tau_y(\phi)$ for every $\phi \in \Phi$. Moreover, the previous lemma implies that it induces the information structure $\tau_y \in \Delta(\Phi)$ over $\{\mathcal{P}_\phi\}_{\phi \in \Phi}$. We now show that $\tau_y$ is Aumann-plausible with respect to $\mathcal{P}_y$.

\begin{lemma} \label{Appendix_lemma_AP}
    For every $y \in Y$, we have 
    $\mathcal{P}_y = \sum_\phi \tau_y(\phi) \mathcal{P}_\phi$.
\end{lemma}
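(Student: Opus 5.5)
We prove the two inclusions $\mathcal{P}_y \subseteq \sum_\phi \tau_y(\phi)\mathcal{P}_\phi$ and $\mathcal{P}_y \supseteq \sum_\phi \tau_y(\phi)\mathcal{P}_\phi$ directly, using the previous lemma. For each $y$ and each $\phi$ with $\tau_y(\phi)>0$, that lemma gives $\frac{\1_{\{\cdot \in \phi\}}}{\tau_y(\phi)}\mathcal{P}_y = \mathcal{P}_\phi$. If instead $\tau_y(\phi)=0$, every $p \in \mathcal{P}_y$ vanishes on $\phi$, so $\1_{\{\cdot\in\phi\}}\mathcal{P}_y = \{\mathbf{0}\} = \tau_y(\phi)\mathcal{P}_\phi$.

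\emph{Inclusion $\subseteq$.} Any $p \in \mathcal{P}_y$ decomposes as $p = \sum_\phi \1_{\{\cdot \in \phi\}} p$. For each $\phi$ with $\tau_y(\phi)>0$, the previous lemma shows that $\1_{\{\cdot\in\phi\}}p = \tau_y(\phi) p_\phi$ for some $p_\phi \in \mathcal{P}_\phi$. Terms with $\tau_y(\phi)=0$ vanish. Hence $p \in \sum_\phi \tau_y(\phi)\mathcal{P}_\phi$.

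\emph{Inclusion $\supseteq$.} This is the main obstacle, since we must show that $\mathcal{P}_y$ is a ``product'' over cells: independently chosen restrictions to different cells must be realized by a single element of $\mathcal{P}_y$. The previous lemma only gives the cell-by-cell projections, so a priori $\mathcal{P}_y$ could be a proper subset of the product of its projections. To close the gap, I would use Aumann plausibility together with the fact that $\mathcal{P}_0$ itself is a product. Let $\tilde{\mathcal{P}}_{y'} \equiv \sum_\phi \tau_{y'}(\phi)\mathcal{P}_\phi$ for each $y'$. The inclusion just proved gives $\mathcal{P}_{y'} \subseteq \tilde{\mathcal{P}}_{y'}$ for all $y'$. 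By $(\ast\ast)$ and the convexity identity $\alpha A+\beta A=(\alpha+\beta)A$ used in the proof of Theorem \ref{thm_iff_first},
\begin{align}
\sum_{y'}\mu(y')\tilde{\mathcal{P}}_{y'} = \sum_\phi \Big(\sum_{y'}\mu(y')\tau_{y'}(\phi)\Big)\mathcal{P}_\phi = \mathcal{P}_0 = \sum_{y'}\mu(y')\mathcal{P}_{y'}. \nonumber
\end{align}
So we have compact convex sets with $\mathcal{P}_{y'}\subseteq\tilde{\mathcal{P}}_{y'}$ for every $y'$ and equal $\mu$-weighted Minkowski sums.

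I would then invoke the cancellation property via support functions. Let $h_A(v) = \max_{a\in A} v\cdot a$, which is additive under Minkowski sums and positively homogeneous. The equality of the sums gives $\sum_{y'}\mu(y')h_{\mathcal{P}_{y'}}(v) = \sum_{y'}\mu(y')h_{\tilde{\mathcal{P}}_{y'}}(v)$ for every $v$. The inclusions give $h_{\mathcal{P}_{y'}} \le h_{\tilde{\mathcal{P}}_{y'}}$ pointwise. Since all $\mu(y')>0$ (full support), equality must hold termwise, so $h_{\mathcal{P}_y} = h_{\tilde{\mathcal{P}}_y}$ for each $y$. Because closed convex sets are determined by their support functions, $\mathcal{P}_y = \tilde{\mathcal{P}}_y$, which proves the lemma.

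The only points to verify are the full support of $\mu$ and that $\mathcal{P}_y$ is closed and convex, which holds because $\mu$ is supported on $\K(\Delta\Theta)$. With these, the argument does not even need the extremality of $\mathcal{P}_\phi$ beyond its use in the previous lemma.
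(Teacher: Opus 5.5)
Your proposal is correct and follows essentially the same route as the paper. You get the inclusion $\mathcal{P}_y \subseteq \sum_\phi \tau_y(\phi)\mathcal{P}_\phi$ cell by cell from the previous lemma, and the reverse inclusion from additivity of support functions under Minkowski sums, $(\ast\ast)$, convexity of $\mathcal{P}_\phi$, and full support of $\mu$. The only difference is cosmetic: the paper argues by contradiction, using a separating hyperplane to get a strict inequality for one $y$, whereas you deduce termwise equality of support functions directly.
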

\begin{proof}
Fix some $p_y \in \mathcal{P}_y$. The previous lemma implies that $p_\phi \equiv \frac{\1_{\{ \cdot \, \in \phi\} } p_y(\cdot)}{ \tau_y(\phi)} \in \mathcal{P}_\phi$ for every $\phi \in \Phi$ satisfying $\tau_y(\phi) > 0$. By definition of $\tau_y$, we know that for any $\phi \in \Phi$ satisfying $\tau_y(\phi) = 0$, we have $p_y(\theta) = 0$ over $\theta \in \phi$.  Fix any $p_{\phi'} \in \mathcal{P}_{\phi'}$ for all $\phi' \in \{ \phi \in \Phi: \tau_y(\phi) = 0\}$, and observe that 
\begin{align}
  \sum_\phi \tau_y(\phi) p_\phi(\cdot)  & =  \sum_{\phi: \tau_y(\phi) > 0 } \tau_y(\phi) \frac{\1_{\{ \cdot \, \in \phi\} } p_y(\cdot)}{ \tau_y(\phi)}  + \sum_{\phi: \tau_y(\phi) = 0 } \tau_y(\phi) p_\phi(\cdot) \nonumber \\
  & =  \sum_{\phi: \tau_y(\phi) > 0 }  \1_{ \{ \cdot \, \in \phi \} } p_y(\cdot) \nonumber \\
  & = p_y(\cdot), \nonumber
\end{align}
in which case $p_y \in \sum_\phi \tau_y(\phi) \mathcal{P}_\phi$ and thus $\mathcal{P}_y \subseteq \sum_\phi \tau_y(\phi) \mathcal{P}_\phi$.

Now suppose, for a contradiction, that $\mathcal{P}_y \subsetneq \sum_\phi \tau_y(\phi) \mathcal{P}_\phi$ for some $y \in Y$. For $u \in \R^\Theta$,  let $h_\mathcal{P}(u) \equiv \max_{p \in \mathcal{P}} \sum_{\theta} u(\theta) p(\theta)$ denote the support function corresponding to each $\mathcal{P} \in \K(\Delta \Theta)$.  Since $\mathcal{P}_y$ and $\sum_\phi \tau_y(\phi)   \mathcal{P}_\phi$ are  convex, the separating hyperplane theorem implies the existence of some $u \in \R^\Theta$ such that $h_{ \mathcal{P}_y}(u) < h_{ \sum_\phi \tau_y(\phi) \mathcal{P}_\phi } (u)$. Since support functions are linear in Minkowski addition,  
\begin{align}
   h_{ \mathcal{P}_0 }(u)   =   \sum_{y } \mu(y) h_{ \mathcal{P}_y}(u)    & < \sum_{y } \mu(y)  h_{ \sum_\phi \tau_y(\phi) \mathcal{P}_\phi } (u) \nonumber  \\ 
   & = h_{ \sum_\phi \left( \sum_{y} \mu(y) \tau_y(\phi) \right)  \mathcal{P}_\phi   } (u)  \tag{by convexity of $\mathcal{P}_\phi$}  \\ 
   & = h_{ \sum_\phi  \tau(\phi) \mathcal{P}_\phi   } (u) \tag{by $(\ast \ast)$},
\end{align}
which yields a contradiction since $\sum_\phi  \tau(\phi) \mathcal{P}_\phi  = \mathcal{P}_0$.  
\end{proof}

Recall from $(\ast\ast)$ that $\mu$ can be interpreted as a Bayes-plausible information structure over the collection of reduced-form posteriors $\{\tau_y\}_{y \in Y}$. The equivalence between Bayes-plausible information structures and experiments in the Bayesian setting then implies the existence of some experiment $\pi_r : \Theta \to \Delta(Y)$ that induces $\mu$. By Lemma \ref{lemma_PI_updating_equivalence}, the prior-by-prior update of $\mathcal{P}_0$ under $\pi_r$ takes the form $\mathcal{P}_{y}^{\pi_r} = \sum_{\phi} \tau_y(\phi) \mathcal{P}_\phi$. Since we know from Lemma \ref{Appendix_lemma_AP} that $\mathcal{P}_y = \sum_\phi \tau_y(\phi) \mathcal{P}_\phi$ for every $\mathcal{P}_y \in \supp(\mu)$, it follows that $\mathcal{P}_y = \mathcal{P}_{y}^{\pi_r}$ for all $y \in Y$. In other words, we can conclude that $\mu$ is supported on the posterior sets induced by the experiment $\pi_r$ under prior-by-prior updating, as desired.

\emph{2 $\Rightarrow$ 1.} The necessity of partial identification in Theorem \ref{thm_iff_first} implies $\mathcal{P}_0 = \sum_\phi \tau(\phi) \mathcal{P}_\phi$, where $\mathcal{P}_\phi \subseteq \Delta(\phi)$ and $\dim(\mathcal{P}_\phi) = |\phi| - 1$ for every $\phi \in \Phi$. Suppose, for a contradiction, $\mathcal{P}_\phi$  is not maximal for some $\phi \in \Phi$, i.e., there exists a collection of sets $\{ \mathcal{Q}_i \}_{i \in I} \subseteq \K(\Delta \Theta)$,  $\mathcal{Q}_i \neq \mathcal{P}_\phi$ and $\nu \in \Delta(I)$ such that $\mathcal{P}_\phi  = \sum_i \nu(i) \mathcal{Q}_i$. 

\begin{lemma} \label{Appendix_lemma_containment}
   $\mathcal{Q}_i \subseteq \Delta(\phi)$ for every $i \in I$.
\end{lemma}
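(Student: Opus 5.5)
The plan is to use that every $\mathcal{Q}_i$ lies in $\Delta(\Theta)$, and that Minkowski averaging of sets with nonnegative coordinates cannot produce mass outside $\phi$ unless some summand already has it. The tool is Lemma \ref{Appendix_lemma_linear_functional}, applied with weights $\nu$ in place of $\mu$; its proof only uses linearity and convexity, so it carries over verbatim.

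\emph{Step 1: reduce to a support functional.} Consider the linear functional $g_\phi(p) \equiv \sum_{\theta \notin \phi} p(\theta)$. Since $\mathcal{P}_\phi \subseteq \Delta(\phi)$, we have $g_\phi(\mathcal{P}_\phi) = \{0\}$.

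\emph{Step 2: apply the lemma.} By part 2 of Lemma \ref{Appendix_lemma_linear_functional} applied to $\mathcal{P}_\phi = \sum_i \nu(i)\mathcal{Q}_i$,
\begin{align}
    \{0\} = g_\phi(\mathcal{P}_\phi) = \left[ \sum_i \nu(i) \min_{q \in \mathcal{Q}_i} g_\phi(q), \, \sum_i \nu(i) \max_{q \in \mathcal{Q}_i} g_\phi(q) \right]. \nonumber
\end{align}
Because every $q \in \mathcal{Q}_i \subseteq \Delta(\Theta)$ is nonnegative, $g_\phi(q) \geq 0$. Hence each $\max_{q \in \mathcal{Q}_i} g_\phi(q)$ is nonnegative, while their $\nu$-weighted sum equals $0$.

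\emph{Step 3: conclude.} Therefore $\max_{q \in \mathcal{Q}_i} g_\phi(q) = 0$ for every $i$ with $\nu(i) > 0$. Nonnegativity of coordinates then gives $q(\theta) = 0$ for all $\theta \notin \phi$, so $q \in \Delta(\phi)$ and $\mathcal{Q}_i \subseteq \Delta(\phi)$.

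The only subtle point is indices with $\nu(i) = 0$. These contribute nothing to the Minkowski sum, so they can be dropped from the decomposition without loss; after that, $\supp(\nu) = I$. Note that if $\nu$ were degenerate on a single $i$, that $\mathcal{Q}_i$ would equal $\mathcal{P}_\phi$, contradicting the non-maximality assumption, so at least two indices remain after discarding. With this normalization the argument above covers every $i \in I$, and I see no further obstacle.
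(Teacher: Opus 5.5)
Your proof is correct and follows the paper's argument. The paper picks an arbitrary selection $q_i \in \mathcal{Q}_i$, notes $\sum_i \nu(i) q_i \in \Delta(\phi)$, and uses nonnegativity to get $q_i(\theta)=0$ for $\theta \notin \phi$; you make the same point through the functional $g_\phi$ and the interval formula of Lemma \ref{Appendix_lemma_linear_functional}, and your explicit discarding of indices with $\nu(i)=0$ fixes an assumption the paper leaves tacit, since the statement can fail for such $\mathcal{Q}_i$.
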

\begin{proof}
    Since $\mathcal{P}_\phi = \sum_i \nu(i) \mathcal{Q}_i$, we know that $\sum_i \nu(i) q_i \in \Delta(\phi)$ for every $\{q_i\}_{i \in I} \subseteq \{\mathcal{Q}_i\}_{i \in I}$. This implies  $\sum_i \nu(i) q_i(\theta) = 0$ for every $\theta \not \in \phi$, in which case $q_i(\theta) = 0$ for every $\theta \not \in \phi$. As  this holds for arbitrary $q_i \in \mathcal{Q}_i$, it follows that $\mathcal{Q}_i \subseteq \Delta(\phi)$.
\end{proof}

Next, we show that no $\Phi$-measurable experiment can induce two different posterior sets $\mathcal{Q} \neq \mathcal{Q}'$ that are subsets of the same $\Delta(\phi)$. The result follows from our discussion of the invariance of $\phi$-conditional measures  in Section \ref{subsection_updating}. 

\begin{lemma} \label{Appendix_lemma_equiv_sets}
   Suppose $\pi: \Theta \to \Delta(Y)$  is consistent with $\mathcal{P}_0 = \sum_\phi \tau(\phi) \mathcal{P}_\phi$, and  let $\{\mathcal{P}_y\}_{y \in Y}$ be the collection of posterior sets induced by $\pi$ under prior-by-prior updating. If $\mathcal{P}_y, \mathcal{P}_{y'} \subseteq \Delta(\phi)$ for some $\phi \in \Phi$ and $y,y' \in Y$, then $\mathcal{P}_y = \mathcal{P}_{y'}$.
\end{lemma}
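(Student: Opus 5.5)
The plan is to combine Lemma \ref{lemma_PI_updating_equivalence} with the structure of the Minkowski decomposition. Since $\pi$ is consistent with $\mathcal{P}_0 = \sum_\phi \tau(\phi)\mathcal{P}_\phi$, Lemma \ref{lemma_PI_updating_equivalence} gives
\begin{align}
\mathcal{P}_y = \sum_{\phi'} \tau_y(\phi')\mathcal{P}_{\phi'}, \qquad \mathcal{P}_{y'} = \sum_{\phi'} \tau_{y'}(\phi')\mathcal{P}_{\phi'}, \nonumber
\end{align}
where $\tau_y$ and $\tau_{y'}$ are the reduced-form posteriors. This reduces the comparison of the two posterior sets to a comparison of these two reduced-form posteriors.

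First, I determine the weights. By Lemma \ref{Appendix_lemma_inter}(1), applied to $\mathcal{P}_y$ with weights $\tau_y$, every $p \in \mathcal{P}_y$ satisfies $\sum_{\theta \in \phi'} p(\theta) = \tau_y(\phi')$ for each cell $\phi'$. The hypothesis $\mathcal{P}_y \subseteq \Delta(\phi)$ says every $p \in \mathcal{P}_y$ puts total mass one on $\phi$. Hence $\tau_y(\phi) = 1$ and $\tau_y(\phi') = 0$ for all $\phi' \neq \phi$. The same argument gives $\tau_{y'}(\phi) = 1$ and $\tau_{y'}(\phi') = 0$ for $\phi' \neq \phi$.

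Second, I substitute these weights back. Terms with zero weight contribute $0 \cdot \mathcal{P}_{\phi'} = \{\mathbf{0}\}$, since each $\mathcal{P}_{\phi'}$ is nonempty. The Minkowski sums therefore collapse to $\mathcal{P}_y = \mathcal{P}_\phi = \mathcal{P}_{y'}$, which is the claim. As a byproduct, any posterior set contained in a single $\Delta(\phi)$ must equal $\mathcal{P}_\phi$ itself.

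I expect no step to be a real obstacle. The only point needing care is the degenerate-weight bookkeeping: $0\cdot A = \{\mathbf{0}\}$ for nonempty $A$, and $\mu(y) > 0$ holds by the full-support convention, so $\tau_y$ is well defined. Note also that Lemma \ref{lemma_PI_updating_equivalence} already uses the partial-identification structure through Lemma \ref{lemma_consistent_PI}, which turns consistency into $\Phi$-measurability. That is where the dimension condition on the $\mathcal{P}_\phi$ enters.
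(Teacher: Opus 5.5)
Your proposal is correct and follows the paper's proof: invoke Lemma~\ref{lemma_PI_updating_equivalence} to write $\mathcal{P}_y = \sum_{\phi'} \tau_y(\phi') \mathcal{P}_{\phi'}$, note that $\mathcal{P}_y \subseteq \Delta(\phi)$ forces $\tau_y(\phi) = 1$ so the sum collapses to $\mathcal{P}_\phi$, and repeat for $y'$. The only difference is that you justify, via Lemma~\ref{Appendix_lemma_inter}(1) and $0 \cdot A = \{\mathbf{0}\}$, the step the paper simply calls immediate.
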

\begin{proof}
    Because $\mathcal{P}_0$ is partially identified by  $(\tau, \Phi)$, Lemma \ref{lemma_PI_updating_equivalence} implies that any posterior set takes the form $\mathcal{P}_y = \sum_\phi \tau_y(\phi) \mathcal{P}_\phi$, where $\tau_y$ is the reduced-form posterior corresponding to $\tau$. If $\mathcal{P}_y \subseteq \Delta(\phi)$,  we immediately have  $\mathcal{P}_y = \mathcal{P}_\phi$ and $\tau_y(\phi) = 1$. Repeating this argument for $\mathcal{P}_{y'}$ implies that $\mathcal{P}_y = \mathcal{P}_{y'} = \mathcal{P}_\phi$, as desired.
\end{proof}

Now define $Y \equiv (\Phi \setminus \{\phi\}) \cup I$ and consider the information structure $\mu \in \Delta(Y)$, 
        \begin{align}
            \mu(y) & = \begin{cases}
                \tau(y)  & \text{if } y = \phi' \in \Phi \setminus \{\phi\}  \\
                \tau(\phi) \nu(y) & \text{if } y \in  I,
            \end{cases} \nonumber
        \end{align}
        with $\supp(\mu) = \{ \mathcal{P}_{\phi'}\}_{\phi' \in \Phi \setminus \{\phi\} } \cup \{ \mathcal{Q}_i\}_{i \in I}$. Observe that 
        \begin{align}
            \sum_{y \in Y} \mu(y) \mathcal{P}_y &  = \sum_{\phi' \neq \phi} \tau(\phi') \mathcal{P}_{\phi'} + \tau(\phi) \sum_{i \in I}  \nu(i) \mathcal{Q}_i  =  \sum_{\phi' \neq \phi} \tau(\phi') \mathcal{P}_{\phi'} + \tau(\phi) \mathcal{P}_\phi  = \mathcal{P}_0,  \nonumber
        \end{align}
        implying that $\mu$ is Aumann-plausible with respect to $\mathcal{P}_0$.   By assumption, there then exists an experiment $\pi: \Theta \to \Delta(Y)$ consistent with $\mathcal{P}_0$ that induces $\mu$. We then see that Lemma \ref{Appendix_lemma_equiv_sets} yields a contradiction, as $\mathcal{Q}_i \subseteq \Delta(\phi)$ by Lemma \ref{Appendix_lemma_containment}, yet we have $\{\mathcal{Q}_i\}_{i \in I} \subseteq \supp(\mu)$ and $\mathcal{Q}_{i} \neq \mathcal{Q}_{i'}$ for $i,i' \in I$. \hfill $\blacksquare$

\subsection{Proof of Theorem \ref{thm_blackwell_generalized}} \label{Appendix_proof_blackwell}
    Lemma \ref{lemma_blackwell_generalized} proves the equivalence between first and second statements. That \emph{3} implies \emph{1} follows immediately from noting that $\pi_1,\pi_2$ are consistent with any singleton prior set, and $W_{\mathcal{P}_0, A, u}(\cdot) = V_{\mathcal{P}_0, A, u}(\cdot)$ for singleton $\mathcal{P}_0$. To see that \emph{1} implies \emph{3},  define $\underline{u}(\sigma, \phi) \equiv \min_{p \in \mathcal{P}_\phi} \sum_{\theta, a } u(a ,\theta) \sigma(a)  p(\theta)$ and observe that the value function can be rewritten as
    \begin{align}
        W_{\mathcal{P}_0, A, u}(\pi) &  =  \E_{\mu^\pi} \left[  \max_{ \sigma \in \Delta(A)} \sum_{\phi} \tau_y^\pi(\phi) \min_{p \in \mathcal{P}_\phi} \sum_{\theta, a } u(a ,\theta) \sigma(a)  p(\theta) \right] \nonumber \\
        & = \E_{\mu^\pi} \left[  \max_{ \sigma \in \Delta(A)} \sum_{\phi} \tau_y^\pi(\phi) \underline{u}(\sigma,\phi) \right]. \nonumber
    \end{align}
    The above expression corresponds to the value of $\pi$ when $\Delta(A)$ is the action set and $\Phi$ is the state space, with $\mu^\pi$ interpreted as the information structure over the reduced-form posteriors $\{\tau_y^\pi\}_{y \in Y}$.  Because $\underline{u}(\cdot,\phi)$ is bounded over $\Delta(A)$ and Blackwell's theorem of informativeness extends to the case where the action set is a Polish space \citep[e.g.,][]{JET_blackwell_infinite}, the desired result follows. \hfill $\blacksquare$
    
    \subsection{Proof of Proposition \ref{prop_properties}} \label{Appendix_properties} 
    \emph{1 $\Rightarrow$ 2}. Let $\supp(\mu) = \{ \mathcal{P}_y\}_{y \in Y} \subseteq \K(\Delta \Theta)$.  Property (ii) implies that
\begin{align}
    \min _{p \in \mathcal{P}_0} \mathbb{E}_p[u(a, \theta)] & =\sum_{y \in Y} \mu(y) \min _{p \in \mathcal{P}_y} \mathbb{E}_p[u(a, \theta)]  =\min _{p \in \sum_{y \in Y} \mu(y) \mathcal{P}_y} \mathbb{E}_p[u(a, \theta)] \nonumber
\end{align}
for every $u: A \times \Theta \rightarrow \mathbb{R}$. Since $\mathcal{P}_0$ and $\sum_{y \in Y} \mu(y) \mathcal{P}_y$ are both convex, invoking the separating hyperplane theorem over $\mathbb{R}^{\Theta}$ allows us to conclude that $\mathcal{P}_0=\sum_{y \in Y} \mu(y) \mathcal{P}_y$, as desired.
 
   \emph{2 $\Rightarrow$ 1}.  Property (i)  follows easily from observing that for any $u: A \times \Theta \to \R$,
    \begin{align}
        U(A, \mu_0)  = \max_{a \in A} \min_{p \in \mathcal{P}_0} \E_{p} \left[ u(a,\theta) \right] &  =  \max_{a \in A}  \E_{\mu} \left[  \min_{p \in \mathcal{P}_y} \E_{p} \left[   u(a,\theta)  \right] \right] \nonumber \\
         & \leq   \E_{\mu} \left[  \max_{a \in A}  \min_{p \in \mathcal{P}_y} \E_{p} \left[   u(a,\theta)  \right] \right]  = U(A, \mu), \nonumber
    \end{align}
    where the second equality holds by Aumann plausibility. Property (ii) follows similarly, where we see that
    \begin{align}
        U(\{a\}, \mu_0) & = \min_{p \in \mathcal{P}_0} \E_p [u(a,\theta)]   = \E_{\mu} \left[  \min_{p \in \mathcal{P}_y} \E_{p} \left[   u(a,\theta)  \right] \right]  = U(\{a\}, \mu), \nonumber
    \end{align}
    as desired. \hfill $\blacksquare$
        
\subsection{Proof of Proposition \ref{prop_BP}} \label{Appendix_proof_prop_BP}

Suppose $\E_{\mu^*}[\hat v(\tau)] = v^*$ for some $\mu^* \in \Delta(\Delta \Phi)$.  Caratheodory's theorem implies the existence of a Bayes-plausible information structure $\mu \in \Delta(\Delta \Phi)$ with finite  support such that $\E_{\mu}[\hat v(\tau)] = v^*$, in which case it is without loss to impose the additional constraint that $\mu$ be finitely supported.  Let $\K_\Phi \equiv \{ \mathcal{P} \in \K(\Delta \Theta) : \mathcal{P} = \sum_\phi \tau(\phi) \mathcal{P}_\phi \text{ for some } \tau \in \Delta(\Phi) \}$, and observe that 
\begin{align}
   \sup_{ \substack{\mu \in \Delta(\Delta \Phi) \\ \text{s.t. } \E_\mu[\tau] = \tau_0 }} \E_\mu [\hat v(\tau)]  & = \sup_{ \substack{\mu \in \Delta(\Delta \Phi) \\ \text{s.t. } \E_\mu[\tau] = \tau_0 }  } \E_\mu \left[ \sum_\phi \tau(\phi) \min_{p \in \mathcal{P}_\phi} \E_p[v (\hat a(\tau), \theta)] \right] \nonumber \\
   & = \sup_{ \substack{\mu \in \Delta(\Delta \Phi) \\ \text{s.t. } \E_\mu[ \sum_\phi \tau(\phi) \mathcal{P}_\phi ] = \mathcal{P}_0   }  } \E_\mu \left[ \min_{p \in \sum_\phi \tau(\phi)  \mathcal{P}_\phi} \E_p[v (\hat a(\tau), \theta)] \right] \nonumber  \\
   & =  \sup_{ \substack{\mu \in  \Delta(\K_\Phi) \, : \,  \E_\mu[ \mathcal{P} ] = \mathcal{P}_0   }  } \E_\mu \left[ \min_{p \in \mathcal{P}} \E_p[v (\hat a(\mathcal{P}), \theta)] \right]. \nonumber
\end{align}
The second equality holds because $\E_\mu[\tau] = \tau_0$ if and only if $\E_\mu[ \sum_\phi \tau(\phi) \mathcal{P}_\phi ] = \mathcal{P}_0 $ when $\mathcal{P}_0 =  \sum_\phi \tau_0(\phi) \mathcal{P}_\phi$ is maximally partially identified by  $(\tau_0, \Phi)$, since we have  $\E_\mu[ \sum_\phi \tau(\phi) \mathcal{P}_\phi ] = \sum_\phi \E_\mu[\tau(\phi)] \mathcal{P}_\phi$ by convexity of $\mathcal{P}_\phi$. 

Recall from Lemma \ref{Appendix_lemma_AP} in the proof of sufficiency in Theorem \ref{thm_main_iff} that every set $\mathcal{P}$ in the support of an Aumann-plausible information structure admits the decomposition $\mathcal{P} = \sum_\phi \tau(\phi) \mathcal{P}_\phi$ for some $\tau \in \Delta(\Phi)$ when the underlying prior set $\mathcal{P}_0 = \sum_\phi \tau_0(\phi) \mathcal{P}_\phi$ is maximally partially identified. This implies   $\{ \mu \in \Delta(\K_\Phi) : \E_\mu[\mathcal{P}] = \mathcal{P}_0 \} =  \{ \mu \in \Delta(\K(\Delta \Theta)) : \E_\mu[\mathcal{P}] = \mathcal{P}_0 \}$, and we can write
\begin{align}
      \sup_{ \substack{\mu \in \Delta(\Delta \Phi) \\ \text{s.t. } \E_\mu[\tau] = \tau_0 }} \E_\mu [\hat v(\tau)]  & = \sup_{ \mu \in \Delta(\K(\Delta \Theta)) \, : \,   \E_\mu [\mathcal{P}] = \mathcal{P}_0 } \E_\mu \left[ \min_{p \in \mathcal{P}} \E_p[v (\hat a(\mathcal{P}), \theta)] \right] \nonumber  \\
      & = \sup_{ \mu \in \Delta(\K(\Delta \Theta)) \, : \,   \E_\mu [\mathcal{P}] = \mathcal{P}_0 }  V(A,\mu). \nonumber
\end{align}
Because Theorem \ref{thm_main_iff} implies Aumann-plausible information structures are equivalent to consistent experiments when the prior set is maximally partially identified, the expression on the right corresponds to Sender's problem, as desired. \hfill $\blacksquare$

\singlespacing
\bibliography{ref}

\end{document}